\newcolumntype{L}[1]{>{\raggedright\arraybackslash}p{#1}} 
\newcolumntype{C}[1]{>{\centering\arraybackslash}p{#1}} 
\newcolumntype{R}[1]{>{\raggedleft\arraybackslash}p{#1}} 
\theoremstyle{definition}
\newtheorem*{definition}{Definition}
\newtheorem{proposition}{Proposition}
\def\real{{\mathbb{R}}}
\def\real{{\mathbb{R}}}
\def\mean{{\mathop{\mathsf{mean}}}}
\def\var{{\mathop{\mathsf{var}}}}
\def\simul{{\mathop{\mathsf{sim}}}}
\def\Pr{{\mathop{\mathsf{Prob}}}}
\DeclareMathOperator{\E}{\mathsf{E}}
\DeclareMathOperator{\Var}{\mathsf{Var}}
\DeclareMathOperator{\dd}{\mathsf{d}\!}
\begin{document}
\selectlanguage{english}

\title{Malware Epidemic Effects in a Kinetic Conflict Model}
\author{Joachim Draeger, Stephanie \"Ottl\\[1mm]
\em\small IABG mbH, Ottobrunn, Germany}
\date{}
\maketitle

\begin{abstract}
We propose a framework for examining the effects of infections with
self-replicating malware on military forces engaged in kinetic combat. The
framework uses models, in which kinetic attrition is represented by a Lanchester
model coupled with an SIR-like model describing the malware propagation across
the forces. Basic knowledge about the expected circumstances restricts the
set of scenarios to be analyzed using the model. Remaining uncertainties
are taken into account as random variations given by information-theoretic
principles. The situation assessment is realized by Monte-Carlo simulations
with the risk as a possible assessment measure. 
	
An application of the proposed framework to a simple exemplary situation
demonstrates its usage in practice. 
The assumed uncertainties about the considered situation lead to an
outcome statistics, which changes corresponding to the improving knowledge
about the situation. Large uncertainties may lead to results profoundly
different from point estimates. 
For assuring practicability, the paper provides options to
determine the values of important model parameters by measurement.
It also discusses how to 
utilize the assessment results calculated with help of the framework.
\end{abstract}


\section{Introduction}

Among all threats related to cyber security, self-replicating malware,
like viruses and worms, are some of the most important ones. Due to the 
capability of self-reproduction, the initial infection of a single, 
unimportant system component may cause catastrophic damage to the
overall system at the end. 
The disproportionality between the low effort and the potentially 
significant damage qualifies the usage of malware for large scale attacks 
and cyber warfare \cite{musman2010evaluating,halpin2006cyberwar}.
Indeed, the Cyber Conflict Studies Association 
CCSA\footnote{http://www.cyberconflict.org} lists numerous incidents at
national level. We mention some examples:
In 1998, the NATO attacked infrastructure and command \& control 
structures in Serbia during the Kosovo war with self-replicating malware 
\cite{grant1999airpower}, enabling an especially successful air campaign. 
In April 2007, Estonia was attacked at cyber level, presumably by Russia
\cite{czosseck2011estonia}. The attack has targeted ministries, banks, and 
media and caused injuries due to riots resulting from the effectiveness
of the attack.
Trojans have enforced a temporary shutdown of the computer network 
of the German Bundestag in 2015 \cite{martinjung2015fuer,reuters2015cyber}.
Other examples of attacks with self-replicating malware can be found in 
\cite{naraine2008coordinated, 
geers2015cyber, pagliery2016scary,case2016analysis}. 


An adequate analysis of malware infection 
effects must take the (non-)availability of functionalities
into account. The degradation of the availability caused by 
the malware infection leads to a natural loss function and allows an 
objective and quantitative assessment of the malware effects. 
This paper considers a conventional battle (i.e. kinetic combat) 
between two opposing military forces as an exemplary system 
that has been affected by malware.
A Lanchester model is used to describe
the two forces and the kinetic combat between them.
The propagation of malware across the 
forces is represented by a SIR-like epidemic model.
%
The analysis of the malware effects on system availability 
is executed in a framework allowing the portrayal of specific 
situations by individual models. This contributes to
the flexibility of the proposed approach, which paves the way
towards a more faithful representation of real world situations. 
The framework also takes into account that the common idealistic 
anticipation of perfect knowledge is not always met in practice.
Instead, it expresses missing knowledge as epistemic uncertainties. 
Simulation runs can be used for analyzing the model behavior.
When organized in a Monte Carlo method,
the sampling of simulation runs can also take uncertainties into account. 
The approach can especially be used for risk assessments.

Several papers are considering co-occurring kinetic combat and 
cyber warfare with malware.
Mishra and Prajapati \cite{mishra2013modelling} discuss 
cyber warfare based on a differential equation system, 
but they do not take availability
aspects into account; instead, they focus on a stability analysis.
McMorrow \cite{mcmorrow2010science} aims at developing a deeper understanding
of large scale cyber attacks based on a comparison of biological and 
malware epidemics.
Schramm \cite{schramm2013lanchester} has developed a corresponding
model combining a kinetic battle situation with a one-sided malware 
attack based on the SIR model \cite{kermack1927contribution}. Its 
symmetrization by Yildiz \cite{yildiz2014modeling} assures that both
forces have the same capabilities and vulnerabilities. Both 
\cite{yildizmodeling} and \cite{yildiz2014modeling} examine and expand 
the models of Schramm. 

The paper is structured as follows. The class of models used in our framework
for the investigation of malware effects is presented in
section~\ref{modelingstructure}. A simple homogeneous model belonging
to this class is discussed as an example in section~\ref{secexamplemodel}.
Section~\ref{sectwo} develops our concept of a simulation-based 
computational analysis. The method for including uncertainties
is elaborated in this section as well.
Section~\ref{secfour} analyzes an exemplary situation with help of the
framework, whereby the model of section~\ref{secexamplemodel} is reused.
These considerations are supplemented by section~\ref{secapplic}, which
discusses the application of the framework from the practice point of view.
The paper closes with an outlook in section~\ref{secdiscussion}.

\section{An Integrated Kinetic-Cyber Model}\label{modelingstructure}

\subsection{Modeling Strategy}

Due to specific circumstances like individual force structures and
malware attack vectors, the usage of an 'universal' representation 
will be doomed to failure. Individual models are required for
adapting the model to the situational characteristics. 
Thus, a {\em framework} is proposed, in which the underlying model
$M\in \mathcal{M}$ can be chosen from a model class $\mathcal{M}$. 
The restriction to models belonging to $\mathcal{M}$ assures
essential model properties like the fading dynamics for $t\rightarrow\infty$
in proposition~\ref{dynamicsatinf}
(see section~\ref{secmodelextensions} below). 

The framework is built on the work of Schramm \cite{schramm2013lanchester} and Yildiz \cite{yildiz2014modeling}. Accordingly, $\mathcal{M}$ consists of (generalized) system dynamics models composed of two main components: First, a submodel $M_K$ representing kinetic combat --- or, at least, the corresponding attrition --- by a Lanchester-like model \cite{das2019fitting}, and, second, a submodel $M_C$ representing cyber combat by a SIR-like model \cite{del2015mathematical} of malware epidemics. 
The experience gained so far confirms that the system dynamics paradigm 
allows a discussion of basic mechanisms and foundational aspects.
Concerning Lanchester, one may take a look at e.g. \cite{das2019fitting} and references therein. An application to a more recent conflict is given in \cite{kamaluregression}. Concerning malware epidemics, the realism of compartmental models is discussed in \cite{del2015mathematical}.
The system dynamics approach may be overstrained, though, when aiming at precise quantitative predictions.
Both kinetic combat and cyber warfare are essentially ill-defined \cite{kress2012modeling,davis1991base,tolk2012challenges,del2015mathematical}. The ill-definedness is caused by the unpredictability of individual human decisions and by the potential occurrence of unexpected events. Therefore, just relying on more advanced models will not bring about a higher predictive quality compared to simpler models. Additionally, simple models have some principal advantages. Their small computational complexity, for example, enables the sampling of large numbers of simulation runs --- we will make use of this option in section~\ref{secthree}. 

%

\subsection{The Kinetic Component of the Model}

Kinetic combat is modeled using Lanchester equations
\cite{lanchester1956mathematics,mackay2006lanchester},
which consider two opposing forces Blue and Red. 
Since the technical systems used in military forces have
usually quite different capabilities and vulnerabilities,
heterogeneous Lanchester models \cite{das2019fitting} are permitted. 
Although the correspondence between Lanchester model and real system may only be approximate, the basic idea of Lanchester models to reduce the dynamics of combat to attrition rates is notwithstanding considered as a useful concept \cite{ancker1987validity}. Accordingly, Lanchester attrition models are used as components of more encompassing models of kinetic combat~\cite{bathe1989development,romero1991new}. 

In the kinetic component, all model parameters can be chosen individually for 
Blue and Red. This is indicated by an index ${}_b$ resp.\ ${}_r$.
Since the effectiveness $\delta_{ij}$ of an attacking force element is not only
determined by its own capabilities, but also by the capabilities 
of the defender, the model parameters depend on the 
weapon system types $i,j$ of both attacker and defender. 
Accordingly, a general form of the heterogeneous Lanchester model is
\begin{equation}\label{heterogeneouslanchester}
\begin{array}{rcl}
	\frac{\dd b_j}{\dd t} &= \sum\limits_{i=1}^n -\delta_{r,ij} r_i^{p_{r,ij} }(t) b_j^{q_{r,ij} }(t)\\[1mm]
	\frac{\dd r_j}{\dd t} &= \sum\limits_{i=1}^n -\delta_{b,ij} b_i^{p_{b,ij} }(t) r_j^{q_{b,ij} }(t)
\end{array}
\end{equation}
with $j=1,\ldots,n$ and initial conditions $b_j(0),r_j(0)\ge 0$.
In these equations, $b_j(t)$ resp.\ $r_j(t)$ represents the 
number of the blue resp.\ red force elements of type $j$ at time $t$.
The parameters $p_{b/r,ij}$ determine the capabilities of the forces as 
an attacker, whereas $q_{b/r,ij}$ do so for the defender. Equation 
system~(\ref{heterogeneouslanchester}) is well-defined unless one
of $r_j$ or $b_j$ becomes zero during the evolution. This problem is 
discussed in more detail in section~\ref{modconslab}.

Following the idea of Lanchester's model, each set of force elements represented by an $r_j$ or $b_j$ should behave homogeneously concerning attrition. The heterogeneity of the model will thus typically respect the organizational structure of the involved forces. We may also take their spatial arrangement into account in order to distinguish between the parts of the forces that are engaged and those, which are not \cite{duffey2017dynamic}. Range-dependent attrition rates due to restrictions of the firing range of weapons can be encoded as well \cite{hughes1995two}.

\subsection{The Cyber Component of the Model}

The cyber component $M_C$ describes the malware propagation across a force by an SIR-like model. Kermack \& McKendrick \cite{kermack1927contribution} developed the SIR model originally for the description of biological epidemics, but later it was successfully applied to the propagation of malware (e.g. \cite{sung2013using,mishra2012differential,sneha2015survey}) as well. The notation reflects the three different infection states $S,I,R$ of the model representing {\em S}usceptible (here called vulnerable), {\em I}nfected, and {\em R}ecovered (here called patched) force elements. The model assumes that a vulnerable element in contact with an infected individual may become infected itself with a rate $\beta$.  Infected elements may be patched with a rate $\bar \gamma$, being immune against the used malware afterwards. This leads to the following equations system of the basic SIR-model:
\begin{equation}\label{sirequ}
\begin{array}{rccccl} 
	\displaystyle\frac{\dd S}{\dd t} &=& -&\beta S(t)I(t)\\[2mm]
	\displaystyle\frac{\dd I}{\dd t} &=& &\beta S(t)I(t)&-&\bar \gamma I(t)\\[2mm]
	\displaystyle\frac{\dd R}{\dd t} &=& &&&\bar \gamma I(t)
\end{array}
\end{equation}
Concerning initial conditions, it holds $S(0),I(0),R(0)\ge 0$. 
In the proposed framework, refinements
\cite{del2015mathematical,peng2013smartphone,sneha2015survey} 
of the SIR model such as the SEIR model \cite{wang2013seir} may be used. 
The large potential variety of cyber components corresponds to 
the large spectrum of possible infection and propagation mechanisms
\cite{schramm2013lanchester}. If required, the cyber model component 
may be provided for each weapon system type individually;
this option will not be elaborated further, though.

The general form of the cyber component is described as a
compartmental system. Each compartment $C_g$, $g=1,\ldots,m$
represents a specific malware infection state of a weapon system.
We identify $C_g$ with the number of force elements being in the
state represented by $C_g$. These numbers are changing due to flows
$\phi_{gl}(C)$ from $C_g$ to $C_l$ dependent on $C:=(C_1,\ldots,C_m)$.
The flows portray malware propagation as well as malware removal 
and other countermeasures for fighting the malware epidemics.
The dynamics is thus given by a system
\begin{equation}\label{compsysequ}
\displaystyle\frac{\dd C_l}{\dd t} = \sum\limits_{g=1}^m \phi_{gl}(C) 
\end{equation}
of differential equations with $C_g(t)\ge 0$ for all times $t\ge 0$
and the values of $C_g(0)$ \cite{sandberg1978mathematical} as
initial conditions.
For being well-defined, we assume that all flows $\phi_{gl}$
are bounded. 
Of course, not every term that meets the given technical constraints will
describe a malware propagation process in a realistic way. The constraints
are necessary, however, to ensure the proper function of the framework. 
Antisymmetry $\phi_{gl}=-\phi_{lg}$ assures that the outflow 
of the stock of force 
elements in infection state $g$ given by $\phi_{gl}$
equals the inflow of the stock of elements in state $l$.
This means that the compartmental system~(\ref{compsysequ}) is closed, 
i.e. that the population $\sum_g C_g(t)$ summed up over all states $g$ 
is constant over time $t$.
For preserving this property after integrating kinetic and cyber 
submodels, i.e. in presence of an annihilation
of force elements due to kinetic combat, we assume normalized 
compartment levels $C_g(t) =\tilde C_g(t)/(\sum_g C_g(t))$.
This assures $\sum_g C_g(t)=1$. Correspondingly, flow terms describing
malware propagation processes have to preserve normalization, whereas
for flows related to other processes the absolute numbers
of force elements are of interest. Accordingly, a normalization
has to be omitted for the latter. The example of a kinetic-cyber 
model presented in section~\ref{secexamplemodel} demonstrates 
an application of the normalization rule.

\subsection{The Coupling of the Model Components}

In the overall model, kinetic and cyber component are coupled with
each other for taking the degradation of kinetic capabilities by
the propagating malware into account.
A malware infection may lead both to a reduced effectiveness
of attacking force elements \cite{schramm2013lanchester,yildiz2014modeling} 
and to a higher loss rate of defending force elements. For representing
these effects, additional indices $k,l$ are introduced as malware
infection states of attacker and defender.
The kinetic effectivenesses $\delta_{ikjl}$ have to be chosen correspondingly.
This leads to a Lanchester equation system with extended parameterization.
Concerning the levels $b_{jl}$ and $r_{jl}$ of the force elements
of weapon system type $j$, the infection state $l$ is indicated
as well.
Accordingly, the kinetic interaction term for Blue has the form
\begin{equation}
\mathcal{K}_{b,jl} := -\sum\limits_{i=1}^n \sum\limits_{k=1}^m  \delta_{r,ikjl} \cdot r_{ik}^{p_{r,ikjl} }(t) \cdot b_{jl}^{q_{r,ikjl} }(t).
\end{equation}
The corresponding term for Red is designated as $\mathcal{K}_{r,jl}$.
We define the numbers of existing force elements according to
\begin{equation}\label{definitionn}
	N_{b,j}(t) := \sum\limits_{l=1}^m  b_{jl}(t) \,\,\,\text{and}\,\,\,
	N_{r,j}(t) := \sum\limits_{l=1}^m  r_{jl}(t)
\end{equation}
We also define an overall force size
$N_{b/r}(t):= \sum\limits_{j=1}^n \zeta_j N_{b/r,j}(t)$ with $\zeta_j> 0$ 
as positive summation weights. Using the weights, different
levels of importance can be distinguished; an aircraft carrier,
for example, contributes more to the overall force than a small
patrol boat.

For including the evolving malware propagation (and the fight against 
it) in the equation system (\ref{heterogeneouslanchester}),
flow terms $\phi_{b/r,glj}$ are added representing the transitions
between the malware infection states $g$ and $l$ for the 
specific weapon system type $j$. As for equation~(\ref{compsysequ}),
we assume that all flow terms $\phi_{b/r,glj}$ are bounded
\label{assumptbound} for reasons of well-definedness. 
The malware propagation terms are, thus,
\begin{equation}
\mathcal{P}_{b,jl} := \sum\limits_{g=1}^m \phi_{b,glj}(b_{j1},\ldots,b_{jm})
\end{equation}
for Blue and $\mathcal{P}_{b,jl}$ for Red, respectively.

The initial malware infection starting the malware epidemic
is caused by a malware attack of the adversarial force. 
The effect of a malware attack triggered by red force elements
of type $i$ in malware infection state $k$ on blue force elements
of type $j$ in state $g$ is represented by a flow 
\begin{equation}
\mathcal{A}_{b,jl} := \sum\limits_{i=1}^n \sum\limits_{k=1}^m  \sum\limits_{g=1}^m  \alpha_{r,ikjgl} (r_{ik},b_{j1},\ldots,b_{jm})
\end{equation}
for Blue and, analogously, by $\mathcal{A}_{r,jl}$ for Red.
The blue force elements being attacked are subject to a 
state transition $g\rightarrow l$. Correspondingly,
an antisymmetry condition $\alpha_{b/r,ikjgl} = - \alpha_{b/r,ikjlg}$
holds in analogy to $\phi$. 
Again, we assume that all flow terms 
$\alpha_{b/r,ikjgl}(r_{ik},b_{j1},\ldots,b_{jm})$ are bounded
for reasons of well-definedness. 
The precise shape of $\alpha$ depends on the infection mechanism
used by the malware. Examples are given in \cite{schramm2013lanchester}.
The attack rate $\alpha_{b/r,ikjgl}$ is equal to zero,
if the malware infection state $k$ prohibits a malware attack 
on enemy force elements or if the weapon system type $i$ has
no malware attack capability. 
Taken together, this gives the following equation system. 
\begin{equation}\label{intermediatestep3}
\begin{array}{rl}
	\frac{\dd b_{jl}}{\dd t} =& \mathcal{K}_{b,jl} + 
	 \mathcal{P}_{b,jl} +
	 \mathcal{A}_{b,jl}
\\[1mm]
	\frac{\dd r_{jl}}{\dd t} =& \mathcal{K}_{r,jl} + 
	 \mathcal{P}_{r,jl} +
	 \mathcal{A}_{r,jl}
\end{array}
\end{equation}
We assume non-negativity $b_{jl}(0),r_{jl}(0)\ge 0$ of all
compartments as initial condition.
The equation system can be extended by the annihilation losses 
$ D_{j}'(t)= - \sum_l \mathcal{K}_{jl}(t)$ for bookkeeping purposes 
with $D_{j}(0)\ge 0$ as initial condition. 
Additionally, we define $D(t):= \sum_j \zeta_j D_{j}(t)$.

\subsection{Adding an Event Mechanism} 

We permit a finite number of timed events modifying parameter values 
of equation system~(\ref{intermediatestep3}).
This extension 
completes the construction of the underlying model class $\mathcal{M}$.
In general, the event mechanism can be used for representing various kinds of 
singular events and for a temporal structuring of the course of action 
\cite{stodd2017black,burnsimproving,ghanea2012cyber,wangen2015quantitative}.
An exemplary temporal structuring may consist, say, of a preparation, an 
assault, and an exploitation phase. In these phases, different force
elements will have prominent roles. Accordingly, the values of the 
attrition coefficients will usually differ in the individual phases 
\cite{hartley1995validating,hughes1995two}.
We will make use of events in section~\ref{modextlab} for
choosing individual start times of the kinetic combat,
of the malware attacks, and of the patching actions. 

A typical example of a singular event is a change of the force elements
opposing each other due to troop movements \cite{ancker1987validity}. 
Such re-assignments can be represented by events activating and
deactivating attrition between units. Indeed, 
the Lanchester model is only a model of attrition, but not a full model
of combat \cite{sfikas2017model}. 
It is the event mechanism, which extends the component $M_K$ 
from a model of attrition to a general model of kinetic combat.
Though several important aspects of conflict 
have to be taken into account outside of (\ref{lanchmalware}), their 
effects can usually be incorporated in our model by events as additional input. 
These aspects include not only rather elementary activities like 
troop movements, but
the decisions of the involved military leaders in general. We let the 
user of the framework experiment with them based on corresponding what-if 
assumptions. 

\subsection{Model Consistency}\label{modconslab}


The kinetic interaction terms $\delta_{r,ikjl} \cdot r_{ik}^{p_{r,ikjl} }(t) \cdot b_{jl}^{q_{r,ikjl} }(t)$ and $\delta_{b,ikjl} \cdot b_{ik}^{p_{b,ikjl} }(t) \cdot r_{jl}^{q_{b,ikjl} }(t)$ in (\ref{intermediatestep3}) require further discussion. 
In order to improve readability, the indices $i,j,k,l$ are omitted in this section in the following.
For $q_{r} \rightarrow 0$, one gets
$b^{q_{r} }(t)\rightarrow 1$ independently of the value of 
$b$. This means, the losses inflicted on $b(t)$
by Red become independent of the current size of $b(t)$.
For $b$ close to $0$, the level of $b$ 
thus becomes negative. Similarly, for $p_{r}\rightarrow 0$
the term $r^{p_{r} }$ becomes approximately
equal to $1$ independently of the value of $r$.
Thus, a red force already annihilated can inflict arbitrarily 
large losses on Blue. For $p_{r}=1$, $q_{r}=1$,
which is the special case usually discussed in literature,
such irregularities do not occur.

For avoiding unphysical behavior, a barrier function 
$f(x)=(\exp(-1/x))^{0.0001}$ is introduced with $f(x)\rightarrow 0$ 
for $x\rightarrow 0$ and $f(x) \approx 1$ otherwise.
Replacing the original term
\begin{equation}\label{originteract}
\delta_{r} \cdot r^{p_{r} }(t) \cdot b^{q_{r} }(t)
\end{equation}
by 
\begin{equation}\label{newinteract}
\delta_{r} \cdot f(r) \cdot r^{p_{r} }(t) \cdot f(b) \cdot b^{q_{r} }(t)
\end{equation}
and analogously for the other kinetic interaction term in the 
equation system~(\ref{intermediatestep3}) avoids the occurrence of 
compartments with negative levels but preserves the system dynamics 
for $b >0$ approximately. For avoiding bulky expressions, we will use
the designation $\overline{x^a} := f(x)x^a$. This simplifies
term~(\ref{newinteract}) to 
\begin{equation}\label{modinteract}
\delta_{r} \cdot \overline{r^{p_{r} }(t)} \cdot \overline{b^{q_{r} }(t)}.
\end{equation}
For simplifying the notation, the temporal derivative of 
e.g. a compartment level $C(t)$ is designated as $C'(t)$
in the future.

\begin{proposition}[Consistency of the Modified Equation System]
\label{veryfirstpropos}
~
\begin{enumerate}[label=\emph{\alph*}),nosep,leftmargin=*]
\item For $b \rightarrow 0$ or $r\rightarrow 0$ it holds
$\delta_{r} \cdot \overline{r^{p_{r} }(t)} \cdot \overline{b^{q_{r} }(t)} \rightarrow 0$ and
$$\delta_{r} \cdot \overline{r^{p_{r} }(t)} \cdot \overline{b^{q_{r} }(t)} \approx \delta_{r} \cdot r^{p_{r} }(t) \cdot b^{q_{r} }(t)$$ otherwise.
		An analogous statement is valid for the corresponding interaction term $\delta_{b} \cdot \overline{b^{p_{b} }(t)} \cdot \overline{r^{q_{b} }(t)}$ for Red.
\item \label{labnoinfluence}
$N_{b} \rightarrow 0 \Rightarrow N_{b}'(t), N_{r}'(t) \rightarrow 0$ and
$N_{r} \rightarrow 0 \Rightarrow N_{b}'(t), N_{r}'(t) \rightarrow 0$.
\end{enumerate}
\end{proposition}

\begin{proof}[Sketch of Proof]
~
\begin{enumerate}[label=\emph{\alph*}),nosep,leftmargin=*]
\item According to the definition of the operator $\overline{\phantom{a}}$,
it holds $\overline{b^{q_{r} }(t)} \rightarrow 0$
for $b \rightarrow 0$ and $b^{q_{r}}(t)\approx 
\overline{b^{q_{r} }(t)} $ otherwise.
In the first case, this is caused by $f(b) \rightarrow 0$ for the 
additional factor $f(b)$ occurring in 
$\overline{b^{q_{r} }(t)} $; in the second case, this factor 
can be neglected due to $f(b) \approx 1$.
A corresponding statement is valid for $r^{p_{r} }(t)$. 
The claim is an immediate consequence.
\item The condition $N_b \rightarrow 0$ leads immediately to
$b\rightarrow 0$ due to the non-negativity of all
force element type levels. Thus it holds
$\delta_{b} \cdot \overline{b^{p_{b} }(t)} \cdot \overline{r^{q_{b} }(t)} \rightarrow 0$
and
$\delta_{r} \cdot \overline{r^{p_{r} }(t)} \cdot \overline{b^{q_{r} }(t)} \rightarrow 0$; 
this results from part a). Since no other term can cause annihilation losses ---
		$\phi_{glj}$ and $\alpha_{ikjgl}$ 
are just modifying the malware states of force elements ---
this gives $N_{b}'(t), N_{r}'(t) \rightarrow 0$, which is the claim.
\end{enumerate}
\end{proof}

One has to note that a vanishing dynamics $N_{b}'(t), N_{r}'(t) \rightarrow 0$
may not only be caused by a vanishing force size $N_{b}$ or $N_{r}$,
but also by vanishing kinetic effectiveness $\delta_{b/r}$.

If we speak about the equation system (\ref{intermediatestep3}) in the
following, we always mean the system modified by the 
barrier function $f(x)$ as described above. 
The preceding proposition shows that the introduction of the
barrier function $f(x)$ excludes the two types of unphysical behavior 
mentioned at the beginning of this section.
%
Such negativities can also be generated at the computational level
due to overshooting effects, if the size of the time steps is not
small enough. A less extreme barrier function $f(x)$ ---
i.e. a barrier function with a smaller exponent --- may 
avoid negativities more reliably, but its influence on the
dynamics of the system will be stronger. 

\subsection{Model Properties} \label{secmodelextensions}

We derive several properties, which assure applicability and validity
of the intended risk assessment procedure.

\begin{proposition}[Properties of the Equation System]\label{sympropos}
~
\begin{enumerate}[label=\emph{\alph*}),nosep,leftmargin=*]
\item \label{proposfirst}
The equation system~(\ref{intermediatestep3})
is symmetric w.r.t. Blue and Red, i.e. an
interchange of the initial conditions and parameters for Blue and Red
corresponds to an interchange of the dynamics of the compartment
levels $b_{jl}(t)$, $D_{b,j}(t)$ and $r_{jl}(t)$, $D_{r,j}(t)$.
\item \label{propos_ndconst} 
The quantities $N_j(t) + D_j(t)$, $j=1,\ldots,n$ and $N(t) + D(t)$ are preserved over time. For an initial condition $D_j(0)=0$ resp. $D(0)=0$, one gets $N_j(0) = N_j(t) + D_j(t)$ resp. $N(0) = N(t) + D(t)$ for all times $t$.
\end{enumerate}
\end{proposition}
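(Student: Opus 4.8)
The plan is to handle the two parts by their different structural mechanisms: part \ref{proposfirst} is an equivariance (relabeling) statement, while part \ref{propos_ndconst} is a conservation-of-mass statement built into the compartmental bookkeeping.

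For part \ref{proposfirst} I would first write the red equations explicitly. By the stated construction they are the term-by-term image of (\ref{lanchmalware}) under the exchange of every index $b$ with the corresponding index $r$, with the shared exponents $p,q$ left fixed. Collecting the four blue and four red equations yields an autonomous vector field $F_\theta$ on the eight-dimensional compartment space $\real^8$, depending on the full parameter tuple $\theta$. I then introduce the involution $\sigma$ that swaps the blue and red compartment coordinates ($S_b\leftrightarrow S_r$, $I_b\leftrightarrow I_r$, $R_b\leftrightarrow R_r$, $D_b\leftrightarrow D_r$) and simultaneously swaps every $b$-indexed parameter with its $r$-indexed counterpart. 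The heart of the argument is the equivariance relation $F_{\sigma\theta}(\sigma x)=\sigma\,F_\theta(x)$: because the red equations are exactly the $b\leftrightarrow r$ image of the blue ones, including the barrier factors $\overline{\phantom{a}}$ which are applied symmetrically and the shared $p,q$, every summand on the right-hand side of a blue equation is carried to the matching summand of the swapped equation. Combined with uniqueness of the initial value problem, this gives that the $\sigma$-image of the initial data produces precisely the $\sigma$-image of the trajectory, which is the asserted interchange of the dynamics.

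For part \ref{propos_ndconst} I would show $\frac{d}{dt}(N_b+D_b)=0$, the red case then following from part \ref{proposfirst}. Writing $N_b+D_b=S_b+I_b+R_b+D_b$ and summing the four blue equations of (\ref{lanchmalware}), I check that every flow term cancels in pairs, since each represents a flow out of one compartment and into another: the infection flow $\beta_b S_b I_b/N_b$ leaves $S_b$ and enters $I_b$; the patching flows $\gamma_b S_b$ and $\tilde\gamma_b I_b(S_b+R_b)/N_b$ leave $S_b$ resp. $I_b$ and enter $R_b$; the malware-attack flow $\alpha_r(S_r+R_r)S_b$ leaves $S_b$ and enters $I_b$; and the three kinetic-loss terms $\delta_r(\overline{S_r^p}+\overline{R_r^p}+\eta_r\overline{I_r^p})\,\overline{C_b^q}$ for $C\in\{S,I,R\}$ leave $S_b,I_b,R_b$ and are collected with the same sign into $D_b'$. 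The sum therefore telescopes to zero, so $N_b+D_b$ is constant in $t$; evaluating the constant at $t=0$ and using $D_b(0)=0$ gives $N_b(0)=N_b(t)+D_b(t)$ for all $t$.

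Part \ref{propos_ndconst} is routine once one recognizes that each term was designed as an inter-compartment flow. The only genuine obstacle sits in part \ref{proposfirst}: passing from equivariance of the vector field to equivariance of the trajectories requires uniqueness of solutions, and with exponents $p,q$ possibly below one the maps $x\mapsto x^q$ fail to be Lipschitz at the boundary. I would dispose of this by noting that the barrier factor $f(x)=(\exp(-1/x))^{0.0001}$ is flat to all orders at $x=0$, so each $\overline{C^a}=f(C)C^a$ extends to a $C^\infty$ function vanishing with all derivatives at $0$; together with the boundary behaviour already established in Proposition~\ref{veryfirstpropos} (all flows vanish as the relevant compartment tends to $0$), the modified right-hand side is locally Lipschitz on the positive orthant and well-behaved up to the boundary, so Picard--Lindelöf yields the uniqueness needed. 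Should strict uniqueness on the boundary still be in question, part \ref{proposfirst} can be read purely algebraically as the invariance of the equation system under $\sigma$, which is exactly what the statement asserts.
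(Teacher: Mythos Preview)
Your proof is correct and follows the same approach as the paper's sketch: part \ref{proposfirst} by appeal to the $b\leftrightarrow r$ symmetry built into the equations, part \ref{propos_ndconst} by summing the four compartment derivatives and observing that every flow term cancels against its partner. Your formalization via the equivariance relation $F_{\sigma\theta}(\sigma x)=\sigma F_\theta(x)$ and the subsequent Lipschitz/uniqueness discussion go well beyond the paper's one-line ``According to the structure of~(\ref{lanchmalware})'', but the underlying idea is identical.
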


\begin{proof}[Sketch of Proof]
~
\begin{enumerate}[label=\emph{\alph*}),nosep,leftmargin=*]
\item According to the structure of (\ref{intermediatestep3}).
\item Combining (\ref{intermediatestep3}) and the definition of 
$D_{j}'(t)$ leads to $ N_j'(t) + D_j'(t) = 0 $,
because all terms in the equation system cancel out.
In this respect, one has to take the antisymmetry of $\phi_{b/r,glj}$ and $\alpha_{b/r,ikjgl}$ into account;
since we sum up in $N_j+D_j$ resp.\ $N+D$ over both indices $g,l$ of 
$\phi_{glj}$ and $\alpha_{ikjgl}$
subject to the antisymmetry constraint, it holds
$\textstyle \sum_l \left( \sum_g \phi_{glj} \right) =0 $ and
$\textstyle \sum_l \left( \sum_{ikg} \alpha_{ikjgl} \right) =0.$
Thus, $N_j(t) + D_j(t)$ is constant over time.
The statements concerning $N(t) + D(t)$ and $N(0) = N(t) + D(t)$ are an immediate consequence. 
\end{enumerate}
\end{proof}

With help of proposition \ref{sympropos}.\ref{propos_ndconst}, 
the maximum possible annihilation losses are known.
Another important aspect is the existence of a well-defined 
'final' outcome. As soon as the simulation of the evolution 
of a given scenarion has estimated the final outcome 
with sufficient accuracy, it may stop (see section~\ref{secsimhor}). 
In reality, a battle does of course not necessarily stop with such 
final outcome, which usually corresponds to the annihilation of one 
of the forces. If a force is experiencing
many losses when compared with the enemy, it might for example to
decide to break off combat and retreat. 
Since we are primarily interested in risk assessments and other
evaluations and not in predictions of events to be expected, 
these additional options are considered as negligible.
The definition of the final outcome is based on the following proposition.

\begin{proposition}[Dynamics at Infinity] \label{dynamicsatinf}
It holds
$ 
\lim\limits_{t\rightarrow \infty} N_j'(t) = 0 $ and $
\lim\limits_{t\rightarrow \infty} D_j'(t) = 0
$
\end{proposition}

\begin{proof}[Sketch of Proof]
Due to the annihilation losses given by the terms $\delta_{r,ikjl} \cdot r_{ik}^{p_{r,ikjl} }(t) \cdot b_{jl}^{q_{r,ikjl} }(t)$ and $\delta_{b,ikjl} \cdot b_{ik}^{p_{b,ikjl} }(t) \cdot r_{jl}^{q_{b,ikjl} }(t)$ in the equation system~(\ref{intermediatestep3}), $N_j(t)$ is monotonically decreasing. In this respect it is important to note that the other terms $\phi_{glj}$ and 
$\alpha_{ikjgl}$ 
in (\ref{intermediatestep3}) are just modifying the malware
states of force elements, but they do not change $N_j(t)$. 
According to $N_j''(t) < 0$ --- derivable directly from (\ref{intermediatestep3}) --- the decrease of $N_j'(t)$ is monotonic as well; thus, an already small flow rate can not increase again. 
Due to the monotonic decrease of $N_j(t)$ and $N_j'(t)$ on the one hand 
and the limitation given by $N_j(t)\ge 0$ on the other, the decrease 
must be fading out. 
This proves the claim $N_j'(t)\rightarrow 0$.
The claim $D_j'(t)\rightarrow 0$ follows immediately using proposition~\ref{sympropos}.\ref{propos_ndconst}.
\end{proof}


\section{Example: A Simple Homogeneous Model} \label{secexamplemodel}

\subsection{Setting of the Example}

We give a simple example --- continued in section~\ref{secfour}
--- how a specific kinetic-cyber
situation can be represented in a model $M\in \mathcal{M}$.
It serves for illustrative purposes and makes
the idealistic assumption of homogeneous forces. This assumption may
be approximatively fulfilled in situations like large tank battles
or combat without heavy equipment 
\cite{engel1954verification,kamaluregression,lucas2004fitting,fricker1998attrition,hartley1995validating}. 
As a consequence of the assumption of homogeneity, the dependence
of $\delta$, $p$, and $q$ on the weapon system type is abandoned.
For the Lanchester parameters $p$ and $q$, a distinction of 
individual parameter values for Blue and Red is omitted.
Accordingly, the equations (\ref{heterogeneouslanchester}) simplify to
\begin{equation} \label{lanchequref}
\begin{array}{rl}
	b'(t)  &= - \delta_r r^{p}(t) b^{q}(t)\\[1mm]
	r'(t)  &= -\delta_b b^{p}(t) r^{q}(t)
\end{array}
\end{equation}
For the representation of the malware infection, we make use of the
SIR-model given in (\ref{sirequ}). The basic model is slightly
generalized by distinguishing between patching of infected and
of vulnerable systems with rate $\tilde \gamma$ and $\gamma$.
Patching of infected systems requires an interaction with
vulnerable or patched systems. Infected systems can not patch
other systems or themselves.

When coupling a homogeneous Lanchester model and a SIR model,
the force elements have to be partitioned among the three different
malware infection states --- $S$, $I$, and $R$ --- of the SIR model. 
These compartments were supplemented by the
compartment $D$ of destroyed force elements for bookkeeping purposes.
As the result of the propagating malware, the kinetic effectiveness
$\delta$ of infected force elements is reduced by a factor
$\eta\in [0,1]$.
The product $\eta\delta$ can be interpreted as a modified kinetic
effectiveness of attacking force elements infected with malware.
It does not depend on the malware infection state of the force elements
being attacked. This means that losses are suffered independently
of their malware state, i.e. vulnerable, infected, and patched systems
are affected according to the same attrition rate.
The effectiveness is restored to its original value
after patching.
The option to distinguish between different values of $p$, $q$
in dependence of malware infection states is omitted. 

\subsection{Model Structure}

\begin{figure}[tbh!]
\begin{center}
\begin{tikzpicture}[scale=0.4, every node/.style={transform shape}]
\node [draw=blue, rounded corners, fill=blue!4, minimum width=28cm,
minimum height=11cm, line width=0.5] (b) at (10,-12.5) {};
\node [draw=red, rounded corners, fill=red!4, minimum width=28cm,
minimum height=11cm, line width=0.5] (r) at (10,2.5) {};
\node [draw=none, rounded corners, minimum width=5cm, minimum height=2cm, line width=0.5] (bx) at (-1,-17) {\Huge Blue Force};
\node [draw=none, rounded corners, minimum width=3cm, minimum height=2cm, line width=0.5] (vx) at (-1,7) {\Huge Red Force};
\node [draw=red, rounded corners, fill=red!10, minimum width=6cm,
minimum height=2cm, line width=0.5] (v) at (0,0) {\LARGE Vulnerable
$S_r$};
\node [draw=red, rounded corners, fill=red!10, minimum width=6cm,
minimum height=2cm, line width=0.5] (i) at (10,0) {\LARGE Infected
$I_r$};
\node [draw=red, rounded corners, fill=red!10, minimum width=6cm,
minimum height=2cm, line width=0.5] (p) at (20,0) {\LARGE Patched
$R_r$};
\node [draw=red, rounded corners, fill=red!10, minimum width=6cm,
minimum height=2cm, line width=0.5] (d) at (10,6) {\LARGE Destroyed
$D_r$};
\node [draw=blue, rounded corners, fill=blue!10, minimum width=6cm,
minimum height=2cm, line width=0.5] (vb) at (0,-10) {\LARGE Vulnerable
$S_b$};
\node [draw=blue, rounded corners, fill=blue!10, minimum width=6cm,
minimum height=2cm, line width=0.5] (ib) at (10,-10) {\LARGE Infected
$I_b$};
\node [draw=blue, rounded corners, fill=blue!10, minimum width=6cm,
minimum height=2cm, line width=0.5] (pb) at (20,-10) {\LARGE Patched
$R_b$};
\node [draw=blue, rounded corners, fill=blue!10, minimum width=6cm,
minimum height=2cm, line width=0.5] (db) at (10,-16) {\LARGE Destroyed
$D_b$};
\draw [-{Latex[scale=1.0]}, draw=black, dashed, line width=0.9] (vb)
-- node[above] {\LARGE $\beta_r$} (ib);
\draw [-{Latex[scale=1.0]}, draw=black, dashed, line width=0.9] (ib)
-- node[above] {\LARGE $\tilde \gamma_r$} node[above=35pt] {\LARGE $\gamma_r$} (pb);
\draw [-{Latex[scale=1.0]}, draw=black, dashed, line width=0.9] (vb)
-- ++(0,2.2cm) -| (pb);
\draw [-{Latex[scale=1.0]}, draw=black, line width=0.9] (vb) --
node[left=10pt] {\LARGE $p,q$} (db);
\draw [-{Latex[scale=1.0]}, draw=black, line width=0.9] (ib) --
node[right=5pt] {\LARGE $p,q$} (db);
\draw [-{Latex[scale=1.0]}, draw=black, line width=0.9] (pb) --
node[right=15pt] {\LARGE $p,q$} (db);
\draw [-{Latex[scale=1.0]}, draw=black, dashed, line width=0.9] (v) -- 
node[above] {\LARGE $\beta_b$} (i);
\draw [-{Latex[scale=1.0]}, draw=black, dashed, line width=0.9] (i) --
node[above] {\LARGE $\tilde \gamma_b$} node[below=35pt] {\LARGE $\gamma_b$} (p);
\draw [-{Latex[scale=1.0]}, draw=black, dashed, line width=0.9] (v) -- 
++(0,-2.2cm) -| (p);
\draw [-{Latex[scale=1.0]}, draw=black, line width=0.9] (v) --
node[left=10pt] {\LARGE $p,q$} (d);
\draw [-{Latex[scale=1.0]}, draw=black, line width=0.9] (i) --
node[right=5pt] {\LARGE $p,q$} (d);
\draw [-{Latex[scale=1.0]}, draw=black, line width=0.9] (p) --
node[right=15pt] {\LARGE $p,q$} (d);
%
\draw [draw=black, line width=0.9,<->] (7,-7) -- node[left]
{\LARGE $\delta_{b},\delta_{r}$} node[right]
{\LARGE $\eta_{b},\eta_{r}$} (7,-3);
\draw [draw=black, dashed, line width=0.9,<->] (13,-7) -- node[right]
{\LARGE $\alpha_b,\alpha_r$} (13,-3);
\draw [draw=black, line width=0.9] (18,-6) -- (21,-6);
\draw [draw=black, dashed, line width=0.9] (18,-5) -- (21,-5);
\node [draw=none, right] at (21.3,-6) {\LARGE Kinetic Effects};
\node [draw=none, right] at (21.3,-5) {\LARGE Malware Effects};
\end{tikzpicture}
\end{center}
\caption{
Red force and blue force fighting against each other on two levels: 
kinetic and cyber}
\label{compartmental_model}
\end{figure}

The setting of the example leads to the model structure 
in figure~\ref{compartmental_model}. 
We will now quantify the flows between the compartments belonging to 
the blue force; the flows for Red result from the intended symmetry 
of the model. We will use the designation
$N=S+I+R$ for the still existing force elements.
\begin{itemize}
\item $\beta_b S_b I_b / N_b$ is the 
flow from $S_b$ to $I_b$ representing the malware infection process.
\item $\gamma_b S_b$ is the flow from $S_b$ to $R_b$ due to
	the patching of vulnerable (i.e. non-infected) elements.
\item $\tilde \gamma_b I_b (S_b + R_b) / N_b $ is the flow from $I_b$ 
to $R_b$ representing a patch of the vulnerability for infected force
elements. The removal of the malware infection is considered
as part of the patching process.
\end{itemize}


Now, the flows related to kinetic combat are discussed. Following the
structure of the equation system~(\ref{lanchequref}), the attrition
rate of Red on Blue is $\delta_{r} (S_r+R_r+\eta_{r}I_r)^p$.
The factor $\eta_{r}$ in this term describes the reduced
effectiveness of infected force elements.
The affected elements of Blue are $S^q$, $I^q$, or $R^q$.

\begin{itemize}\itemsep0pt
\item $\delta_{r}(S_r^p+R_r^p+\eta_{r}I_r^p) \cdot S_b^{q}$ is the
flow from $S_b$ to $D_b$ due to kinetic combat losses.
\item $\delta_{r}(S_r^p+R_r^p+\eta_{r}I_r^p) \cdot I_b^{q}$ is the
flow from $I_b$ to $D_b$ due to kinetic combat losses.
\item $\delta_{r}(S_r^p+R_r^p+\eta_{r}I_r^p) \cdot R_b^{q}$ is the
flow from $R_b$ to $D_b$ due to kinetic combat losses.
\item $\alpha_r (S_r + R_r) \cdot S_b$ is the flow from $S_b$ 
to $I_b$ due to a malware attack of Red. It should be noted, 
that $\alpha$ is here used as a scalar coefficient instead 
of a flow function as in (\ref{intermediatestep3}).
\end{itemize}
In the end, we get four equations for each of the two forces.
We only give the equations
for Blue, since the equations for Red have symmetric form.
Additional remarks about
the basic structure of such an equation systems can be found in
\cite{schramm2013lanchester,yildiz2014modeling}. 
\begin{align}\label{lanchmalware}
\begin{split}
	S_b'(t) =& -\beta_b S_b I_b / N_b - \gamma_b S_b \\
	&\quad - \delta_{r}(S_r^p+R_r^p+\eta_{r}I_r^p) \cdot
						S_b^{q}
						\\
        &\quad - \alpha_r (S_r + R_r) \cdot S_b\\
	I_b'(t) =& \quad\beta_b S_b I_b / N_b - \tilde \gamma_b I_b (S_b + R_b) / N_b \\
	&\quad - \delta_{r}(S_r^p+R_r^p+\eta_{r}I_r^p) \cdot
						I_b^{q}
						\\
        &\quad + \alpha_r (S_r + R_r) \cdot S_b\\
	R_b'(t) =&\quad \gamma_b S_b + \tilde \gamma_b I_b (S_b + R_b) / N_b \\
	&\quad - \delta_{r}(S_r^p+R_r^p+\eta_{r}I_r^p) \cdot
						R_b^{q}
						\\
	D_b'(t) =&\quad
		\delta_{r}(S_r^p+R_r^p+\eta_{r}I_r^p) 
		\cdot ( S_b^{q} + I_b^{q} + R_b^{q} ) \\
\end{split}
\end{align}

\begin{figure}[htb!]
\centering
\includegraphics[width = 1\textwidth]{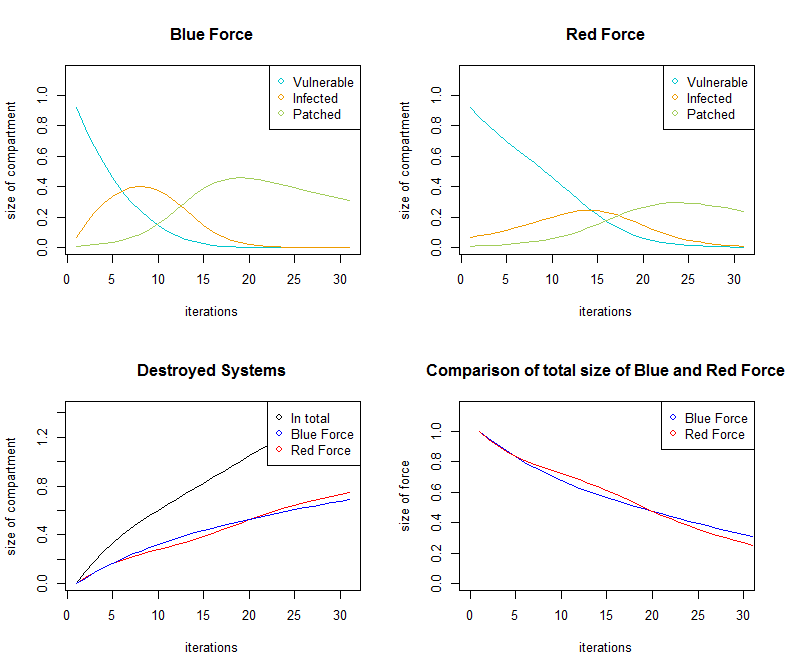}
\caption{
A typical behavior of the equation system (\ref{lanchmalware}).
Besides of the number of vulnerable, infected, and patched elements
for both forces, the overall numbers of available and destroyed
force elements are shown. 
Due to malware attacks applied by both forces,
the red force is stronger at the beginning but
will still lose in the end. 
}
\end{figure}

\subsection{Model Parameters}\label{modextlab}

\begin{table}[tb!]
\setlength{\tabcolsep}{3pt}
\tabulinesep=1.5mm
\centering
\begin{tabu} to \textwidth{| L{2.6cm} |  L{0.8cm}  X[1]  L{1.8cm} |}
	\hline
	\textbf{Parameter Group} & \multicolumn{2}{l}{\textbf{Parameter
with Description}} & \textbf{Parameter Range} \\
           \hline \hline
           \multirow{1}{*}{\parbox[c][1.0cm][c]{2cm}{{\textbf{Kinetic
	Combat}}}}  & $\delta_{}$ &  Kinetic effectiveness of vulnerable and
patched elements & $\delta\geq 0$  \\
          \cline{2-4}
	  &  $\eta_{}$ & Effectiveness reduction of infected
elements & $\eta\in [0,1]$\\
          \cline{2-4}
          & $p$ & First Lanchester parameter & $p\ge 0$ \\
          \cline{2-4}
          & $q$ & Second Lanchester parameter & $q\ge 0$ \\
          \hline \hline
           \multirow{1}{*}{\parbox[c][0.7cm][c]{2.6cm}{{\textbf{Cyber Combat}}}}& $\alpha$ & Malware attack rate & $\alpha\geq 0$ \\
	\hline \hline
           \multirow{1}{*}{\parbox[c][1.0cm][c]{2cm}{{\textbf{Malware Spreading}}}} & $\beta$ & Infection rate & $\beta\geq 0$ \\
           \cline{2-4}
           & $\gamma$ & Patch rate for vulnerable elements & $\gamma\in [0,1]$ \\
           \cline{2-4}
           & $\tilde \gamma$ & Patch rate for infected elements & $\tilde \gamma\in [0,1]$ \\
	\hline\hline
           \multirow{1}{*}{\parbox[c][1.0cm][c]{2cm}{{\textbf{Timing}}}}
	   & $\Delta t_{\mathsf{att}} $ & Time difference $\Delta t_{\mathsf{att}} =t_{\mathsf{att}} -
	   t_{\mathsf{kin}}$ between the start $t_{\mathsf{att}}$ of 
	   the malware attack and the start $t_{\mathsf{kin}}$ of 
	   kinetic combat.
 & $\Delta t_{\mathsf{att}} \in \mathbb{R}$ \\
           \cline{2-4}
	   & $\Delta t_{\mathsf{mal}}$ & Duration of malware attack & $\Delta t_{\mathsf{mal}}\geq 0$  \\
           \cline{2-4}
	   & $\Delta t_{\mathsf{pat}}$ & Time difference $\Delta t_{\mathsf{pat}}
	   = t_{\mathsf{pat}} - t_{\mathsf{kin}}$ between the start
	   $t_{\mathsf{pat}}$ of the patching process (for both vulnerable and infected elements) and the start
	   $t_{\mathsf{kin}}$ of the kinetic combat & $\Delta t_{\mathsf{pat}}\in \mathbb{R}$ \\
	\hline
\end{tabu}
\caption{Parameters occurring in the equation system
(\ref{lanchmalware}).
\label{definition_param}
}
\end{table}

Model parameters are listed in table~\ref{definition_param}. 
Except of $p,q$, the
given parameters can be chosen independently for Blue and Red.
As initial conditions, the non-negativity $S(0),I(0),R(0),D(0)\ge 0$ of all
compartments is required.
Compared to the models 
of Schramm \cite{schramm2013lanchester} and Yildiz \cite{yildiz2014modeling}, 
the example introduces some generalizations. 
The generalization of the Lanchester exponents $p,q$ from $p=1,q=1$ as in 
\cite{schramm2013lanchester,yildiz2014modeling,yildizmodeling}
to $p,q\in\real_0^+$ gives a significantly improved description of
many historic battles \cite{engel1954verification,lucas2004fitting}.
Concerning the SIR component,
we distinguish the patching rates $\gamma,\tilde \gamma$ of vulnerable and 
already infected force
elements, because we assume that the removal of a malware infection 
takes additional time. Furthermore, the intensity of the malware attack 
is parameterized by $\alpha$.



Table~\ref{definition_param} includes event time parameters, which
extend the model by allowing an onset of kinetic combat, 
malware attack and patching process independent from each other. In this 
way, situations like a preparation of a kinetic battle by a 
supporting malware attack in advance or a late start of countermeasures due to 
a delayed provision of appropriate patches can be modeled.
Usually, the force under malware attack will also need some time to recognize the attack and to initiate corresponding countermeasures. 
Additionally, the duration of the malware attack can be determined by 
$\Delta t_{\mathsf{mal}}$. An early stop of the malware attack can sometimes be 
useful for covering up the source of the malware attack.

Before the start and after the end of an action like malware 
attack, the execution of the action is inhibited by appropriate 
parameter settings. In between, the action is activated by setting
its parameters to their effective values.
Details of the translation of the timing parameters given in 
table~\ref{definition_param} to corresponding event-based changes of the
parameter settings are given in table~\ref{definition_event}. 
Figure~\ref{timeeventsandintervals}
depicts events and time intervals used in the text graphically.

\begin{table}[tbh]
\tabulinesep=0.75mm
\centering
\begin{tabu} to \textwidth{| X[1.3] | X[1] |}
	\hline
	\textbf{Events} & \textbf{Associated Parameter Changes} \\
	\hline \hline
	Initial settings & $\alpha,\delta,\gamma,\tilde \gamma=0$ \\
	Start of kinetic combat at time
	$t_{\mathsf{kin}}$ & $\delta$ set to (effective) parameter value \\
	Start of malware attack at time
	$t_{\mathsf{att}} = t_{\mathsf{kin}} + \Delta t_{\mathsf{att}} $ & $\alpha$ set to
	(effective) parameter value \\
	Stop of malware attack at time
	$t_{\mathsf{att}}+\Delta t_{\mathsf{mal}}$ & $\alpha=0$ \\
	Start of patching process at time
	$t_{\mathsf{pat}} = t_{\mathsf{kin}} + \Delta t_{\mathsf{pat}}$ & 
	$\gamma,\tilde \gamma$ set to (effective) parameter values \\
        \hline
\end{tabu}
\caption{Translation of the events associated with the timing parameters 
given in table~\ref{definition_param} to corresponding changes of the
parameter settings. 
\label{definition_event}
}
\end{table}

\begin{figure}[t!]
\begin{center}
\begin{tikzpicture}[
    thick,
    >=stealth',
    dot/.style = {
      draw,
      fill = white,
      circle,
      inner sep = 0pt,
      minimum size = 4pt
    }
  ]
  \coordinate (O) at (0,0);
  \draw[->] (-0.35,0) coordinate (xmin) -- (8,0) coordinate[label = {below right:$t$}] (xmax);
  \draw (0,-0.2) -- (0,0.2) coordinate[pos=0.4,label = {below right:$0$}]  (ymax);
  \draw (1.5,-0.2) -- (1.5,0.2) coordinate[pos=0.4,label = {below right:$t_{\mathsf{kin}}$}]  (y1);
  \draw (2.5,-0.2) -- (2.5,0.2) coordinate[pos=0.4,label = {below right:$t_{\mathsf{att}}$}]  (y2);
  \draw (3.3,-0.2) -- (3.3,0.2) coordinate[pos=0.4,label = {below right:$t_{\mathsf{pat}}$}]  (y3);
  \draw (4.2,-0.2) -- (4.2,0.2) coordinate[pos=0.4,label = {below right:$t_{\mathsf{att\#end}}$}]  (y4);
  \draw (7.0,-0.2) -- (7.0,0.2) coordinate[pos=0.4,label = {below right:$t_{\mathsf{end}}$}]  (y6);
\draw[->]      (2.5,0.5) -- (4.2,0.5) node[pos=0.5, above] {$\Delta t_{\mathsf{mal}}$};
\draw[<-]      (1.5,0.5) -- (2.5,0.5) node[pos=0.5, above] {$\Delta t_{\mathsf{att}}$};
\draw[<-]      (1.5,-0.6) -- (3.3,-0.6) node[pos=0.5, below] {$\Delta t_{\mathsf{pat}}$};
  \draw (2.5,0.4) -- (2.5,0.6) coordinate  (yx);
\draw[dotted]      (1.5,0.2) -- (1.5,1.0) node {};
\draw[dotted]      (2.5,0.2) -- (2.5,1.0) node {};
\draw[dotted]      (4.2,0.2) -- (4.2,1.0) node {};
\draw[dotted]      (1.5,-0.2) -- (1.5,-1.0) node {};
\draw[dotted]      (3.3,-0.2) -- (3.3,-1.0) node {};
\end{tikzpicture}
\end{center}
\caption{Designations of events and time intervals as used in the text.\label{timeeventsandintervals} }
\end{figure}



\subsection{Model Validation}

%
%
For demonstrating the validity of the model (\ref{lanchmalware}), we
have to show that the model behavior is sufficiently similar to the 
real system. Unfortunately, corresponding real-world data sets are missing
or at least not accessible to the public. We thus show the 
validity of the two main components of the overall model 
by applying a cross-validation with basic models of the pure Lanchester 
and SIR equations.
An explicit validation of the main components of the overall model
seems to be necessary; despite of the already provided arguments 
supporting the validity of the two main components when being 
considered in isolation, additional modes of dynamics may occur in 
the integrated system
(\ref{lanchmalware}) compared to isolated Lanchester or SIR models. 
This validation strategy 
is applicable to more general heterogeneous Lanchester models and 
extensions of SIR models as well.
After showing the validity of the two main components,
the interactions between them have to be considered in a validation task of 
its own. Details of the three validation steps are given in the following.

\paragraph{\rm\em Validation of the Lanchester Component:}
Concerning the Lanchester part of (\ref{lanchmalware}), the authors refer to
\cite{engel1954verification,kamaluregression,lucas2004fitting,fricker1998attrition,hartley1995validating} regarding the usability of the Lanchester model as basic reference model.
For executing the comparison, all model parameters are set 
to zero resp.\ neutral values with exception of 
$\delta_{b},\delta_{r},p,q$. With respect to the effectiveness 
reduction of force elements due to malware infections, 
the values $\eta_{b},\eta_{r}=1$ have been chosen.

\paragraph{\rm\em Validation of the SIR Component:}
Concerning the SIR-model as a description of a malware epidemic, see \cite{schramm2013lanchester} and the references therein. 
The realism of compartmental models for representing a
malware epidemics is supported in \cite{del2015mathematical}. 
The cross-validation uses the model parameters $\beta,\gamma,\tilde \gamma$.
Other parts of the model (\ref{lanchmalware}) are switched off 
by $\delta_{b},\delta_{r}, \alpha_{b},\alpha_{r}=0$. 
This leads to the following comparison model.
\begin{align}
\begin{split}
	S_b'(t) =& -\beta_b S_b I_b / N_b - \gamma_b S_b \\
	I_b'(t) =& \quad\beta_b S_b I_b / N_b - \tilde \gamma_b I_b (S_b + R_b) / N_b \\
	R_b'(t) =&\quad \gamma_b S_b + \tilde \gamma_b I_b (S_b + R_b) / N_b \\
\end{split}
\end{align}
This equation system differs from common 
SIR-models with biological origin by an unusual 'recovery' term 
$\tilde \gamma_b I_b (S_b + R_b) / N_b$. For the cross-validation, we have 
chosen the initial conditions $S_b(0) \in [0.1,1.0]$ and $I_b(0) :=
0.1 \cdot S_b$. Since the parameter settings for the cross-validation
switch off malware attacks, we have to start with a non-zero fraction of
infections.
%

\paragraph{\rm\em Validation of the Component Interactions:}
Our validation approach concerning the interactions between the 
Lanchester and the SIR component relies on confidence building.
For this purpose, plausibility arguments are checked. This includes
the reproduction of essential behavioral characteristics.
The checked properties encompass, that a larger force 
cannot give a worse outcome under --- apart from that --- equal conditions,
that a larger kinetic effectiveness of one side increases the losses of the
other side and that a combined kinetic-cyber attack cannot give a worse
outcome than a pure kinetic attack. Beyond that, the reproducibility of 
model properties formally derived from (\ref{lanchmalware}) 
like the conservation of the numbers 
$N_b(t)+D_b(t)$, $N_r(t)+D_r(t)$ of force elements and the monotonic 
decrease of existing force elements $N_b(t)$, $N_r(t)$ over time
were examined. Other such properties are provided in
\cite{schramm2013lanchester,yildiz2014modeling}.

\section{Simulation-based Risk Assessment}\label{sectwo}

\subsection{Simulation Trajectories}\label{secsimhor}

In order to assess the risk inherent to a scenario $x\in X$,
the outcome of the situation represented by $x$ is determined 
based on the equation system~(\ref{intermediatestep3})
\cite{draeger2015roadmap,draeger2017formalized}. 
For this purpose, a standard differential equation
solver of computational mathematics is applied, which approximates the
time-dependent solution of the differential equation by taking finite 
discrete steps $h$ on the time-axis. For being well-defined, the error
inherently involved in such an approximation must vanish for
$h\rightarrow 0$. This property is assured by Lipschitz continuity,
which is shown in the following proposition.

\begin{proposition}[Lipschitz continuity]
The equation system~(\ref{intermediatestep3}) is Lipschitz continuous.
\end{proposition}

\begin{proof}
Proposition \ref{sympropos}.\ref{propos_ndconst} assures the
conservation of the overall numbers $N_j(t)+D_j(t)$ of both blue and red
elements over time. Due to $N_j(t),D_j(t)\ge 0$, the numbers
$N_j(t), D_j(t)$ are thus bounded from above. This also holds
for the stock levels $b_{jl}(t),r_{jl}(t)$ comprising $N_j(t)$.
Since we have assured
that all flow terms $\phi_{glj}$ remains bounded (see
page~\pageref{assumptbound}), the expressions for the derivatives
in equation system~(\ref{intermediatestep3}) remain bounded as well.
Lipschitz continuity of the  equation system~(\ref{intermediatestep3})
is an immediate consequence.
\end{proof}

The space $X$ of scenarios is the Cartesian product of the 
domains of the input parameters and of the initial values.
It is also called {\em design space}.
Executing the simulation $\simul \colon X \rightarrow Y$ 
assigns an outcome $y\in Y$ to the input $x\in X$. The space 
$Y = \real_0^+ \rightarrow (\real_0^+)^{4n}$ 
of simulation outcomes records the dynamics of the model
(\ref{intermediatestep3})
as the time-dependent variations of the force element numbers.
Thus, an outcome $y\in Y$ consists of $N_{j}(t)$, $D_{j}(t)$ for
both Blue and Red with $t\in T=[0,\infty[$ as the simulation time.
Since for function spaces usually no {\em canonical} ordering 
'$<$' exists, it is reasonable to ask how outcomes can be compared
with each other.
Proposition~\ref{dynamicsatinf} improves the situation, because
it shows that the annihilation dynamics in
equation system (\ref{intermediatestep3}) is fading out for 
$t\rightarrow\infty$. This justifies to identify an
outcome with the values
$\lim_{t\rightarrow\infty} N_{j}(t)$, $\lim_{t\rightarrow\infty} D_{j}(t)$ 
of its constituents in the far future for both Blue and Red.
Since these values are real numbers, a canonical ordering '$<$' 
is available for comparison purposes then. 

The fading dynamics may also be used to trigger a stop 
of the simulation, if 
\begin{equation}\label{stopcriterion}
	|N_j(t) - N_j(t + \Delta \tau)| < \varepsilon \,\,\,\text{for all $j$}
\end{equation}
holds for a 'long' time period $\Delta \tau$. 
The end time of the simulation given by the 
stopping criterion is designated as $t_{\mathsf{end}}$. Since
special events like a malware attack may modify parameter values
and thus change the considered situation 
in a fundamental way, we have to assure that no such event has still 
to be processed when a stop of the simulation run is declared. 
In the example of section~\ref{secexamplemodel}, we thus start to check
criterion~(\ref{stopcriterion}) only for 
$t > t_{\mathsf{kin}}$, $t> t_{\mathsf{att}} +\Delta t_{\mathsf{mal}}$,
and $t> t_{\mathsf{pat}}$ simultaneously.
In order to work properly, $\Delta \tau$
has to be chosen sufficiently large and $\varepsilon$ sufficiently small.
Unfortunately, for each choice of $\Delta \tau$ and
of $\varepsilon$ there exist scenarios with an arbitrary slow dynamics 
leading to large approximation errors. For them, the simulation
stops early providing intermediate instead of 'final'
results. These exceptions are tolerable, as long as they are so
rare that the outcome statistics remains unaffected.

In the following proposition,
we state that an almost vanished force will not change its own
size significantly anymore and will also be unable to change the size 
of the opposing force significantly because of its almost vanished
fighting power. The proposition is a generalization of
proposition~\ref{veryfirstpropos}.\ref{labnoinfluence}.

\begin{proposition}[Effects of a Destroyed Force] \label{dynamicsatinf2}
If $N_r \rightarrow 0$ or if $N_b \rightarrow 0$, 
then $N_b', N_r', D_b', D_r' \rightarrow 0$.
\end{proposition}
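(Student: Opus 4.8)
The plan is to reduce to a single case by symmetry and then split the four limits into two groups: the two rates $N_b'$, $D_b'$ of the surviving force, and the two rates $N_r'$, $D_r'$ of the vanishing force. By Proposition~\ref{sympropos}.\ref{proposfirst} the system is symmetric under interchange of Blue and Red, so it suffices to prove the claim under the hypothesis $N_r \rightarrow 0$; the case $N_b \rightarrow 0$ then follows verbatim after swapping the roles of the two forces. The first observation I would record is that $N_r = S_r + I_r + R_r$ together with the non-negativity of the compartment levels forces $S_r, I_r, R_r \rightarrow 0$ whenever $N_r \rightarrow 0$.

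For the surviving force, the rate $N_b' \rightarrow 0$ is supplied directly by Proposition~\ref{veryfirstpropos}.\ref{labnoinfluence}, which is precisely the statement that an annihilated Red force can no longer inflict losses on Blue. To obtain $D_b' \rightarrow 0$ I would then invoke the conservation law of Proposition~\ref{sympropos}.\ref{propos_ndconst}: differentiating the constant $N_b(t) + D_b(t)$ gives $D_b' = -N_b'$, so $N_b' \rightarrow 0$ immediately yields $D_b' \rightarrow 0$.

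The two remaining rates $N_r'$ and $D_r'$ require the barrier-function estimate. Again by conservation $N_r' = -D_r'$, so it is enough to show $D_r' \rightarrow 0$. The rate $D_r'$ is the kinetic attrition of Blue on Red, which by the symmetric form of the $D_b$-equation in~(\ref{lanchmalware}) equals $\delta_b(\overline{S_b^p}+\overline{R_b^p}+\eta_b\overline{I_b^p}) \cdot (\overline{S_r^q}+\overline{I_r^q}+\overline{R_r^q})$. Since $S_r, I_r, R_r \rightarrow 0$, the barrier property recorded in part a) of Proposition~\ref{veryfirstpropos} (namely $\overline{x^a} \rightarrow 0$ as $x \rightarrow 0$) drives the Red-side factor to $0$. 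Meanwhile the Blue-side factor stays bounded: the conserved quantity $N_b + D_b$ bounds every blue compartment by its initial total, and $f \le 1$ gives $\overline{S_b^p} \le S_b^p$, which is therefore bounded, and likewise for the other two summands. A factor tending to zero times a bounded factor tends to zero, so $D_r' \rightarrow 0$ and hence $N_r' \rightarrow 0$.

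The main obstacle, and the step deserving the most care, is this last one: Proposition~\ref{veryfirstpropos}.\ref{labnoinfluence} controls only the \emph{opponent's} loss rate and says nothing directly about the vanishing force's own destruction rate $D_r'$. Establishing $D_r' \rightarrow 0$ hinges on recognizing that $D_r'$ carries a Red-side kinetic factor that the barrier function sends to zero, combined with the uniform boundedness of all compartment levels inherited from the conservation law. Once that boundedness is made explicit, the remaining estimates are routine, and the symmetric case $N_b \rightarrow 0$ needs no further work.
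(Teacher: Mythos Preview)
Your proof is correct and follows essentially the same route as the paper's: reduce by symmetry to one side vanishing, invoke Proposition~\ref{veryfirstpropos}.\ref{labnoinfluence} for the surviving force's rate, and use the conservation law $N+D=\mathrm{const}$ of Proposition~\ref{sympropos}.\ref{propos_ndconst} to carry each $N'\to 0$ over to the corresponding $D'\to 0$. The one place you go beyond the paper is in handling the vanishing force's own rate $D_r'$: the paper simply asserts that since $N_b\to 0$ it ``will not significantly change anymore'' and hence neither will $D_b$, whereas you spell out the barrier-function mechanism explicitly---writing $D_r'$ as a bounded Blue-side kinetic factor times a Red-side factor $\overline{S_r^q}+\overline{I_r^q}+\overline{R_r^q}$ that is forced to zero by Proposition~\ref{veryfirstpropos}a). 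That extra care is warranted, since a function tending to zero need not have derivative tending to zero in general; the barrier is precisely what makes the argument work here.
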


\begin{proof}
W.l.o.g one can assume $N_b(t) \rightarrow 0$; otherwise exchange blue
and red side.
The claim $N_b(t) \rightarrow 0 \Rightarrow N_b'(t),N_r'(t) \rightarrow 0$
holds according to proposition~\ref{veryfirstpropos}.\ref{labnoinfluence}.
Proposition~\ref{sympropos}.\ref{propos_ndconst} states that
$N_r(t)+D_r(t)$ is constant over time; thus, $N_{b/r}'(t) \rightarrow 0$
gives $D_{b/r}'(t) \rightarrow 0$ as well.
\end{proof}

The application of the criterion for triggering the stop of the simulation 
leads to the simulation algorithm \ref{algoverlab}.


\begin{algorithm}[htb!]
\caption{Simulation Algorithm}\label{algoverlab}
\begin{algorithmic}[1]
\Procedure{Simulation}{}
\Loop
\State Calculate system state for new time step $t$
\State \textbf{exit if}
$\forall j\colon |N_j(t) - N_j(t + \Delta \tau)| < \varepsilon \wedge 
	|D_j(t) - D_j(t + \Delta \tau)| < \varepsilon $
\State $t\gets t+\delta t$
			\Comment{Transition to next Euler step}
\EndLoop
	\State $t_{\mathsf{end}} \gets t$ \Comment{End time of simulation}
\EndProcedure
\end{algorithmic}
\end{algorithm} 

\subsection{Observables}\label{evalmeas}

Survivors $N_b(t),N_r(t)$ and losses $D_b(t), D_r(t)$ are 
recording the effects of a malware infection.
Since $N_b(t),N_r(t)$ are non-negative, the 
relative number $\Delta N:= N_b(t_{\mathsf{end}}) - N_r(t_{\mathsf{end}})$ 
of surviving force elements can be used as an assessment criterion
of the final outcome. 
Analogously, the relative number $\Delta D:= D_b(t_{\mathsf{end}}) - 
D_r(t_{\mathsf{end}})$ of destroyed elements can be applied. 
If interested in absolute numbers, the cumulative losses
$L_b:= D_b(t_{\mathsf{end}})$ of the blue force may be preferred.
A list of the observables used in this paper are given in 
table~\ref{definition_outcome}. 
Hereafter, the set of 
values of an observable, say, $\Delta N$, for a set $\tilde X\subseteq
X$ of scenarios is designated as $\Delta N(\tilde X)$.
Analogously, the set of values of an input parameter, say, $\alpha_b$
occurring in $\tilde X$ is designated as $\alpha_b(\tilde X)$.

The interpretation of the relative assessment
criteria $\Delta N$ and $\Delta D$ is straightforward.  
The case $\Delta N>0$ indicates a win of 
Blue, whereas the case $\Delta N<0$ indicates a win of Red. A situation 
with $\Delta N=0$ could be judged as a draw. Analogously, $\Delta D > 0$ 
indicates an advantage for Red concerning the involved risks,
whereas $\Delta D < 0$ indicates an 
advantage for Blue. Again, $\Delta D=0$ could be judged as a draw because 
the losses of both sides have the same amount. The inclusion of both
$\Delta N$ and $\Delta D$ is justified, because results with e.g.
$\Delta N > 0$ and $\Delta D > 0$ are possible due to different force
sizes and force effectivenesses.

\begin{definition}[Wins and Losses]
Based on the relative assessment criteria $\Delta N$ and $\Delta D$, we provide the following definitions of wins and losses, given from the perspective of Blue.
	\begin{center}
\setlength{\tabcolsep}{3pt}		
\begin{tabular}{|l||l|l|}\hline
	&& \\[-2mm]
	Defined Notion (for Blue) & Condition on $\Delta N$ & Condition on $\Delta D$ \\[0.5mm] \hline 
	&& \\[-3mm]
	Strong win & $\Delta N > 0$ & $\Delta D < 0$ \\[0.5mm] 
	Weak win & $\Delta N > 0$ & $\Delta D > 0$ \\[0.5mm] 
	Weak loss & $\Delta N < 0$ & $\Delta D < 0$ \\[0.5mm] 
	Strong loss & $\Delta N < 0$ & $\Delta D > 0$ \\[0.5mm] \hline 
\end{tabular}
	\end{center}
\end{definition}

We supplement the above notions with corresponding definitions
at the pure kinetic level.
This will allow us to quantify the influence of a malware
epidemic by e.g. measuring the fraction of situations, in which
cyber warfare gives an advantage despite of losing a pure
kinetic battle.

\begin{definition}[Kinetic Superiority and Kinetic Inferiority]
\label{defkinsup}
Let $g_{\mathsf{kin}}\colon X \rightarrow X; x\mapsto x'$ designate a
mapping between scenarios, which set the kinetic effectiveness $\delta$ 
of the force elements to values assigned to force elements, which are
not infected by malware. Thus, $\delta$ becomes independent on the malware 
infection states of attacker and defender.
All other model parameter settings are left unchanged. 
The mapping $g_{\mathsf{kin}}$ provides a pure kinetic scenario $x'$ 
resulting from $x$ by hiding all malware effects. 
With respect to $x'$, we give the following definitions.
	\begin{center}
\setlength{\tabcolsep}{3pt}		
\begin{tabular}{|l||l|l|}\hline
	&& \\[-2mm]
	Defined Notion (for Blue) & Condition on $\Delta N$ & Condition on $\Delta D$ \\[0.5mm] \hline 
	&& \\[-3mm]
	Strong kinetic superiority & $(\Delta N) \circ g_{\mathsf{kin}}  > 0$ & $(\Delta D) \circ g_{\mathsf{kin}}   < 0$ \\[0.5mm] 
	Weak kinetic superiority & $(\Delta N) \circ g_{\mathsf{kin}}   > 0$ & $(\Delta D) \circ g_{\mathsf{kin}}   > 0$ \\[0.5mm] 
	Weak kinetic inferiority & $(\Delta N) \circ g_{\mathsf{kin}}   < 0$ & $(\Delta D) \circ g_{\mathsf{kin}}   < 0$ \\[0.5mm] 
	Strong kinetic inferiority & $(\Delta N) \circ g_{\mathsf{kin}}   < 0$ & $(\Delta D) \circ g_{\mathsf{kin}}   > 0$ \\[0.5mm] \hline 
\end{tabular}
	\end{center}
\end{definition}



\begin{table}[b!]
\setlength{\tabcolsep}{3pt}		
\tabulinesep=1.5mm
\centering
\begin{tabu} to \textwidth{| L{4.1cm}  X[1]  L{1.9cm} |}
	\hline
	\textbf{Observables} & \textbf{Description} & \textbf{Codomain} \\
        \hline \hline
	   $\Delta N:= N_b(t_{\mathsf{end}}) - N_r(t_{\mathsf{end}})$ & Relative number of
	   existing force elements at $t_{\mathsf{end}}$  & $\Delta N\in
	   \mathbb{R}$ \\
           \hline
	   $\Delta D := D_b(t_{\mathsf{end}}) - D_r(t_{\mathsf{end}})$ &
		Relative number of destroyed elements at $t_{\mathsf{end}}$
		& $\Delta D\in
	   \mathbb{R}$ \\ \hline 
	   $L_b:= D_b(t_{\mathsf{end}})$ &
		Destroyed elements of the blue force at $t_{\mathsf{end}}$
		& $L_b\in \mathbb{R}_0^+$ \\ \hline 
\end{tabu}
\caption{List of the observables used for assessment purposes. 
%
\label{definition_outcome}
}
\end{table}


\begin{proposition}[Extrema of Observables]\label{eoo}
Let us assume $D_b(0)$, $D_r(0)=0$.
\begin{enumerate}[label=\emph{\alph*}),nosep,leftmargin=*]
\item \label{rangeofobs}
It holds $\max (\Delta N)= \max (\Delta D)=N_b(0)$ and
$\min (\Delta N)= \min (\Delta D)=-N_r(0)$. 
\item \label{rangeofloss} 
It holds $\max (L_b)= N_b(0)$ and $\min (L_b)= 0$. 
\end{enumerate}
\end{proposition}

\begin{proof}[Sketch of Proof]
~
\begin{enumerate}[label=\emph{\alph*}),nosep,leftmargin=*]
\item Due to the definition of $\Delta N$ and the 
monotonic decrease of $N_b$, the observable $\Delta N$ can not have a value
larger than $N_b(0)$. At the end of the simulation, it reaches this value 
if all force elements of Blue survive e.g. due to $ \delta_{r,ikjl} =0$ and
if no elements of Red survive due to $ \delta_{b,ikjl} > 0$.
Proposition~\ref{sympropos}.\ref{proposfirst} leads to 
$\min (\Delta N)=-N_r(0)$. The corresponding statements for 
$\Delta D$ are a consequence of the preservation of 
$N(t)+D(t)$ over time according to
proposition~\ref{sympropos}.\ref{propos_ndconst}.
\item Proof analogous to a)
\end{enumerate}
\end{proof}


Quantitative assessments enable comparisons of scenario outcomes. For 
additionlly being able to judge a {\em single} outcome as especially 
'good' or 'bad',  the ranges of possible assessment values have to be known. 
They are provided by Proposition~\ref{eoo}.

\subsection{Sampling of Simulation Outcomes}\label{secthree}

The details of a future (or current) malware attack may not be
known exhaustively. In such a case, the scenario to be 
considered in the risk assessment process will not be uniquely
determined. The existing uncertainties have to be included in the approach.
The usually infinite size of the scenario space $X$ compatible
with the existing uncertainties rules out a brute-force processing.
Instead, a small randomly selected subset
$\tilde X \subset^{\mathsf{fin}}X$ of scenarios is analyzed. 
This leads to a Monte-Carlo approach for calculating the 
risk with inclusion of uncertainties approximatively.
In the following, the subset $\tilde X$ is called the {\em simulation 
design}. The set of outcomes of the simulation
scenarios $\tilde X$ is designated as $\tilde Y \subseteq Y$.

In order to enable the intended random selection, the scenario space 
$X$ is enriched by a notion of probability.
This gives the so-called {\em Monte-Carlo design space} $(X,\Pr)$.
For assuring the appropriateness of the random sample $\tilde X$,
the probability distribution $\Pr$ has to encode the 
frequency of occurrence of the different scenarios $x\in X$.
In this way, the probability distribution $\Pr$ represents 
the knowledge about the considered situation.

The encoding of available and missing knowledge to a corresponding 
probability distribution $\Pr$ follows information-theoretic principles.
We have to avoid that $\Pr$
contains more information about the situation than actually given;
consequently, we choose the probability distribution $\Pr$ with the highest 
entropy among all  distributions compatible with the given information.
According to \cite{jaynes1957information},
the entropy $H(\Pr)$ of a continuous probability distribution $\Pr$ is given 
by $H(\Pr) = -\int \Pr(x) \log \Pr(x) \dd x$. 

If the initially available knowledge is minimal in the sense of
information-theory, only a lower and an upper limit of the domain of 
support is given. 
Renouncing any knowledge would be a problem, because in this case 
the finiteness of the entropy $H(\Pr)$ can not be assured anymore.
The resulting uniform distribution \cite{conrad2004probability} 
(see table~\ref{inftheotab}) may raise some scepticism, because 
uniformly distributed parameters are not justified by observations.
We have to take care when interpreting the role of the uniform
distribution, however. The uniform probability distribution of
the input parameters does not represent a single specific situation,
but a kind of 'superposition' of the variety of all {\em possible}
distributions compatible with the available knowledge.
%
Table~\ref{inftheotab} gives some maximum entropy probability distributions for different kinds of basic knowledge. We restrict ourselves to distributions used in the example presented in section~\ref{secfour}. Concerning parameters with a finite support, the uniform distribution is used if no further knowledge is available. If additionally expected values are known, the Beta distribution results. For a support equal to $\real^0_+$, the exponential distribution results in the case of a known expected value. Additional knowledge of the standard deviation leads to a lognormal distribution. In the case of a support equal to $\real$, the normal distribution has to be selected for known expected value and standard deviation.

\begin{table}[tbh!]
\setlength{\tabcolsep}{3pt}		
\tabulinesep=1.5mm
\centering
\begin{tabu} to \textwidth {| L{1.8cm} | L{1.6cm} | L{3.6cm} | X[1] |}
	\hline
	\textbf{Type} & \textbf{Support} & \textbf{Constraints} & \textbf{Probability Density} \\ \hline \hline
	Uniform & $[a,b]$ & & $f(x)=\frac{1}{b-a}$  with $a\neq b$
        \\ \hline
	Exponential & $[0,\infty)$ & $\E(x)=1/\lambda$ & $f(x)=\lambda\exp(-\lambda x)$ 
        \\ \hline
	Normal & $(-\infty,\infty)$ & $\E(x)=\mu$, $\Var(x)=\sigma^2$ & $f(x)= \frac{1}{\sigma \sqrt{2\pi}}\exp\left( - \frac{1}{2} \left(\frac{x-\mu}{\sigma}\right)^2 \right)$  
        \\ \hline
	Beta & $[a,b]$ & $\E(x)=\frac{\alpha\cdot b + \beta\cdot a}{\alpha+\beta}$ & 
	with $a\neq b$, $f(x)=$  \\
	& & $\E(b-x)=\frac{\beta\cdot b + \alpha\cdot a}{\alpha+\beta}$ & $\qquad$ $\frac{1}{(b-a)B(\alpha,\beta)}(\frac{x-a}{b-a})^{\alpha-1} (\frac{b-x}{b-a})^{\beta -1}$ \\
        \hline
	Lognormal & $[0,\infty)$ & $\E(\ln(x))=\mu$ & $f(x)=\frac{1}{\sigma x \sqrt{2\pi}}\exp\left( - \frac{(\ln(x)-\mu)^2}{2\sigma^2}\right)$ \\
	& & $\E((\ln(x)-\mu)^2)=\sigma^2$ & 
        \\ \hline
\end{tabu}
	\caption{Maximum entropy probability distributions resulting from basic constraints according to \cite{park2009maximum}. In the table, $B$ designates the Beta function.
\label{inftheotab}
}
\end{table}

The inclusion of the knowledge resp.\ uncertainty perspective gives the following picture.  Input parameters of the model subject to uncertainties may vary, taking on different values. Their variation is described by the probability distribution $\Pr$ of the space $(X,\Pr)$.  Thus we can state, that in the presented approach fixed parameter values may be replaced by fixed probability distributions characterizing their variations. The accompanying transition from a point estimate to a statistics of estimates gives a richer description of the model behavior. 

Though a random selection of the scenarios
successfully avoids a selection bias, it may introduce
a possible statistical bias. For monitoring the statistical quality,
we compare the measured properties of the sampled subset 
$\tilde X\subset X$ with the exact properties of the full design 
$(X,\Pr)$. The outcomes $\tilde Y$ can be used for
monitoring purposes as well.
Numerical results of such comparisons are shown in
figure~\ref{errorrates}.

\begin{figure}[tbh!]
\begin{center}
\begin{tabu} to \textwidth{X[1]X[1]}
	\multicolumn{1}{c}{Input Bias of blue force size $N_b(0)$} &
\multicolumn{1}{c}{Outcome Bias of Survivors $\Delta N$} \\ 
\includegraphics[width=0.45\textwidth]{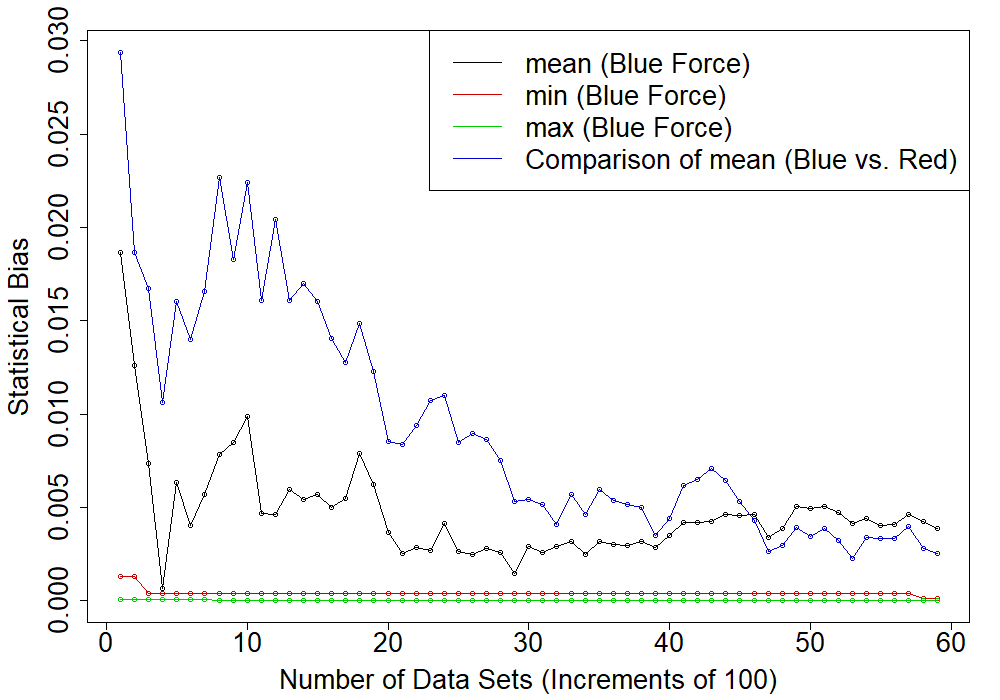}
&
\includegraphics[width=0.45\textwidth]{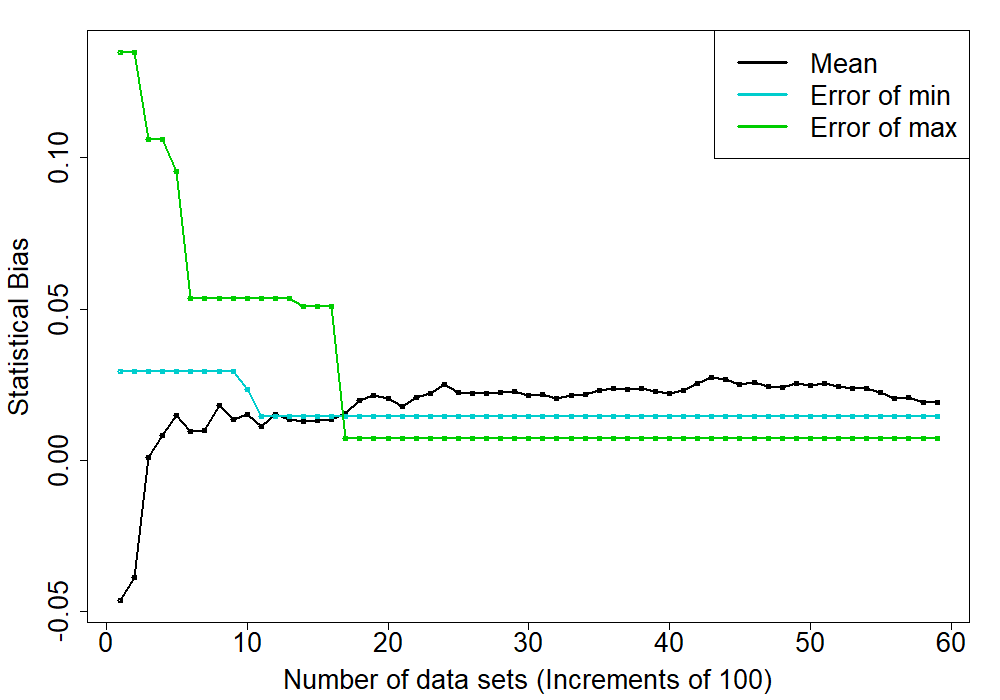}
\end{tabu}
\end{center}
\caption[Statistical Bias]{
Statistical bias of random sampling dependent on the number of simulation runs.
The comparison of statistical quantities of the simulation design $\tilde X$ 
(and of the associated outcomes $\tilde Y$) with 
theoretical predictions for the corresponding full design allows a
monitoring of the statistical bias.
If the bias turns out to be too high, the size of the simulation
design has to be increased accordingly. 

As an example, we consider uniformly distributed force sizes
$N_b(0), N_r(0)\in [0,1]$ (left) and the resulting behavior of 
$\Delta(N)$ (right) in a symmetric situation.
Let us write $\hat N := N_b(0)$ for short. 
Concerning the input, the plot includes the
difference between the measured value $\mean(\hat N(\tilde X))$ and the
theoretically expected value 0.5, analogously the deviation of
$\min(\hat N(\tilde X))$ resp.  $\max(\hat N(\tilde X))$ 
from 0 resp. 1, and finally the difference of $\mean(\hat N(\tilde X))$ 
between Blue and Red.
Concerning the outcome, the plot includes $\mean(\Delta N(\tilde X))$ and the
deviation of $\max(\Delta N(\tilde X))$ resp.
$\min(\Delta N(\tilde X))$ from 
$1.0$ resp. $-1.0$ as given by proposition~\ref{eoo}.\ref{rangeofobs}.
All considered input and outcome quantities will have the value zero for a completely bias-free 
statistics.
\label{errorrates}
}
\end{figure}

\subsection{Risk Assessment by Sampling}\label{statana}



Uncertainties of the input parameters may especially cause
a variation of the losses $L = D(t_{\mathsf{end}})$.
This leads to the notion of risk in a straightforward way.

\begin{definition}[Risk]
The {\em risk} $\mathcal{R}$ assigned to a scenario $x\in X$ is defined as
$\mathcal{R}:= L(x)$. 
If due to epistemic uncertainties a set of scenarios has to be considered
given by the Monte-Carlo design $(X,\Pr)$, the risk
$\mathcal{R}$ is more generally defined as expected value $\mean(L(X))$.
\end{definition}

From a computational perspective, the risk may be determined 
approximatively as the mean loss of the random sample of scenarios 
in accordance with \cite{berger2013statistical}. In the following 
proposition, some properties of risk are derived.

\begin{proposition}[Properties of Risk]\label{proppropor}
Let a simulation design $\tilde X$ be given. We assume $D(0)=0$.
\begin{enumerate}[label=\emph{\alph*}),nosep,leftmargin=*]
\item \label{propria} It holds 
$\mean(N(0)) = \mean(N(t_{\mathsf{end}})) + \mean(D(t_{\mathsf{end}})) =
\mean(N(t_{\mathsf{end}})) + \mathcal{R}$
\item $\mathcal{R} \le \max(N(0))$.
\end{enumerate}
\end{proposition}

\begin{proof}
~
\begin{enumerate}[label=\emph{\alph*}),nosep,leftmargin=*]
\item According to proposition~\ref{sympropos}.\ref{propos_ndconst},
one gets $N(0) = N(t_{\mathsf{end}}) + D(t_{\mathsf{end}})$. 
Applying the operator $\mean$ 
and taking its linearity into account, the 
definition of $\mathcal{R}$ gives the claim.
\item According to the proof of part a), it holds 
$\mean(N(0)) = \mean(N(t_{\mathsf{end}})) + \mathcal{R}$ and thus
$\max(N(0)) \ge \mean(N(0)) \ge \mathcal{R}$,
\end{enumerate}
\end{proof}


\section{Example: Analyzing Homogeneous Lanchester-SIR}\label{secfour}

\subsection{A Sequence of Situations with Decreasing Uncertainty}

\begin{table}[tbh!]
{\tabulinesep=1.0mm
\setlength{\tabcolsep}{3pt}		
\begin{tabu} to \textwidth {| L{2.8cm} | X[1] | L{2.5cm} |}
	\hline
	\textbf{Parameter} & \textbf{Situational Description} & \textbf{Constraints} \\ \hline \hline
	Initial size of forces $N_b$, $N_r$ & The scale of forces involved in combat is unknown. We only assume a plausible lower and upper limit. & Finite support, $N(0)\in [0.1,1]$ \\ \hline
	Kinetic effectiveness $\delta_b$, $\delta_r$ & Expected values of kinetic effectiveness exist, but the available data do not suffice for providing the standard deviation. & $\E({\delta})=1.2$ for Blue and Red\\ \hline
	Lanchester parameters $p$, $q$ & As for the kinetic effectiveness, expected values exist, but not standard deviations. The probability distributions of $p$ and $q$ coincide, because we do not know which force is attacking and which defending. & $\E(p)=\E(q)=2.5$ \\ \hline
\end{tabu}
	}
\caption{Initial uncertainty constraints on the kinetic parameters of the example. 
\label{examplekinparam}
}
\end{table}



\begin{table}[tbh!]
{\tabulinesep=1.0mm
\setlength{\tabcolsep}{3pt}		
\begin{tabu} to \textwidth {| L{2.7cm} | X[1] | L{3.8cm} |}
	\hline
	\textbf{Parameter} & \textbf{Situational Description} & \textbf{Constraints} \\ \hline \hline
	Time difference $\Delta t_{\mathsf{att}}$ between kinetic and cyber attack & 
	The start times of malware attack and kinetic combat will not differ substantially. A late start is uncommon due to the limited influence on kinetic combat. An early start, on the other hand, gives the opponent time to develop a patch. 	
	& $\E({\Delta t_{\mathsf{att}}})=0$, $\Var({\Delta t_{\mathsf{att}}})=25^2$ \\ \hline
	Duration $\Delta t_{\mathsf{mal}}$ of malware attack & We only assume a plausible lower and upper bound. & Finite support, $\Delta t_{\mathsf{mal}}\in [0.1,10]$  \\ \hline
	Malware attack rate $\alpha$ & We assume a plausible upper bound. & Finite support, $\alpha\in [0,1]$ \\ \hline
	Initial number $R(0)$ of non-vulnerable force elements &
	Blue uses a large variety of software systems, which lowers the probability that a malware epidemics covers a large fraction of its force. It holds $I(0)=0$ and thus $S(0)=N(0)-R(0)$. & We assume $\E({R_b(0)})=5N_b(0)/7$, $\E(S_b(0))=2N_b(0)/7$, $\E({R_r(0)})=N_r(0)/2$, $\E(S_r(0))=N_r(0)/2$ \\ \hline
	Infection rate $\beta$ & Experience may provide the expected infection rate and its standard deviation. & $\E(\ln({\beta}))=\mu=1$ and $\E((\ln({\beta})-\mu)^2)=\sigma^2=1$ \\ \hline
	Effectiveness reduction $\eta$ of infected elements & The support of $\eta \in [0,1]$ is given by definition. Experience gives expected values as a supplement. & Finite support $\eta \in [0,1]$ with $\E({\eta})=0.5$  and $\E(1-{\eta})=0.5$ \\ \hline
\end{tabu}
	}
\caption{Initial uncertainty constraints on the cyber attack parameters of the example. 
\label{exampleattparam}
}
\end{table}

\begin{table}[tbh!]
{\tabulinesep=1.0mm
\setlength{\tabcolsep}{3pt}		
\begin{tabu} to \textwidth {| L{2.5cm} | X[1] | L{4.0cm} |}
	\hline
	\textbf{Parameter} & \textbf{Situational Description} & \textbf{Constraints} \\ \hline \hline
	Time difference $\Delta t_{\mathsf{pat}}$ between patching and kinetic attack &
	Again, we assume a start time $\Delta t_{\mathsf{pat}}$  
	of patching close to the start of kinetic combat.
	Patching will have no influence when applied after kinetic combat.
	Patching early then again usually just leads to the exploit of another vulnerability than the already patched one.  
	& $\E({\Delta t_{\mathsf{pat}}})=0$, $\Var({\Delta t_{\mathsf{pat}}})=25^2$ \\ \hline
	Patch rate $\gamma$ for vulnerable elements & As for the infection rate, we can provide expected values based on previous experience. The value for Blue and Red differ. & We assume $\E({\gamma_b})=1/4$, $\E(1-{\gamma_b})=3/4$, $\E({\gamma_r})=1/6$, $\E(1-{\gamma_r})=5/6$ \\ \hline
	Patch rate $\tilde \gamma$ for infected elements & Again, we can provide expected values based on previous experience. We adopt a smaller patch rate for infected elements than for vulnerable elements, i.e. $0\le \tilde \gamma \le \gamma$. & We assume $\E({\tilde\gamma_b})=\gamma_b/2$, $\E(\gamma_b-{\tilde\gamma_b})=\gamma_b/2$, $\E({\tilde\gamma_r})=\gamma_r/2$, $\E(\gamma_r-{\tilde\gamma_r})=\gamma_r/2$ \\ \hline
\end{tabu}
}
\caption{Initial uncertainty constraints on the cyber patching parameters of the example. 
\label{examplepatparam}
}
\end{table}

The evaluation approach is applied to the 
Lanchester-SIR model introduced in section~\ref{secexamplemodel}. 
Initially, we consider a situation without specific conflict
ahead. Technological breakthroughs, economic collapse, 
and other unforeseen events may lead to an large variety of
future scenarios. The resulting uncertainties
are given in the tables~\ref{examplekinparam}, \ref{exampleattparam}, 
and \ref{examplepatparam}.
Over time, these uncertainties about the expected combat situation 
may be eliminated stepwise due to an increasing amount of knowledge 
and the decisions that have been made. This leads to a 
sequence of decreasing uncertainty. The corresponding sequence of 
risk assessments is given in section~\ref{ssra}. In this
exemplary course of action, the analysis results sometimes 
influence the hypothetical decisions.

The considerations contain some simplifications.
First, the start times $\Delta t_{\mathsf{att}}$,
$\Delta t_{\mathsf{pat}}$ and other parameters of cyber actions
result in practice 
from control decisions. These are not part of the model and 
replaced by random parameter settings (see section~\ref{sscy2}). 
This is justified by the fact,
that watching out for vulnerabilities and patching them 
is a continuously executed process. Accordingly,
patching is not necessarily a consequence of a malware
attack. Exploiting information provided by intelligence can foil 
the principal causal order between malware attack and patching action
as well.
Furthermore, just intending to develop malware with a given
infection rate or to close a vulnerability with a required patch 
rate does not mean that such a project can indeed be realized 
due to possible technical limitations and organizational frictions. 
Second, the uncertainties of the model parameters are assumed to be 
essentially independent from each other. The authors have made several
exceptions, however, which are indicated in the tables~\ref{examplekinparam}, 
\ref{exampleattparam}, and \ref{examplepatparam}.
Now, we present the uncertainty sequence going to be examined.


\smallskip

\textbf{Step 1 (Start Situation):} The parameters are uncertain at the beginning, because the circumstances of the next conflict are unknown (see 
tables~\ref{examplekinparam}, \ref{exampleattparam}, 
\ref{examplepatparam}).

\smallskip

\textbf{Step 2 (Force Selection):} A conflict emerges with a foreign state. 
An expeditionary force is sent. Accordingly, the opposing kinetic forces are known and the values of the uncertain parameters given in table~\ref{examplekinparam} can be fixed for Blue and Red. The new knowledge consists of $N_b(0)=0.15$, $N_r(0)=0.3$, $\delta_b=2.0$, $\delta_r=0.8$, $p=2.5$, $q=1.7$.

\smallskip

\textbf{Step 3 (Cyber Attack):} Since the enemy turns out to be superior from the kinetic perspective, a cyber attack concurrent to kinetic combat is planned. Here, own superiority was assumed from the beginning (see table~\ref{exampleattparam}), but with large uncertainties due to intelligence gaps.
Accordingly, the cyber part of Red is a focus of blue intelligence.
As a result, the cyber attack parameters of table~\ref{exampleattparam} can be determined for Blue. 
Force elements, which cannot be infected by the malware in principle, are
represented by an initial number $R(0)\ge 0$.
The blue cyber attack is characterized by the parameters settings
$\Delta t_{b,\mathsf{att}}=-15.0$,
$\Delta t_{b,\mathsf{mal}}=1.0$,
$\alpha_b=0.5$.
The effect of the malware attack on Red is described by
$R_r(0)=0.15$,
$S_r(0)=0.15$,
$\beta_r=1.75$, 
$\eta_r=0.15$.

\smallskip

\textbf{Step 4 (Enemy Attack):} When the blue intelligence effort, which has been started in the last step, is continued, a planned red cyber attack is uncovered. This determines the parameters of table~\ref{exampleattparam} also for Red.
The new information turns out to be invaluable for Blue. 
Since the vulnerability going to be exploited by Red has become clear, an approach for patching the own force elements can be developed well in advance. This specifies the values of the parameters of table~\ref{examplepatparam} for Blue as well. 
The parameter values describing the red malware attack are as follows: 
$\Delta t_{r,\mathsf{att}}=2.5$, 
$\Delta t_{r,\mathsf{mal}}=5.0$,
$\alpha_r=0.5$.
The red malware propagates across the blue force with
$R_b(0)=0.75N_b(0)$,
$S_b(0)=0.25N_b(0)$,
$\beta_b=0.5$, 
$\eta_b=0.5$.
For the countermeasures mounted by Blue it holds
$\Delta t_{b,\mathsf{pat}}=-0.25$,
$\gamma_b=0.5$,
$\tilde \gamma_b=0.5\gamma$.

\smallskip

\textbf{Step 5 (Enemy Recovery):} Despite of all blue efforts,
Red can still win the outcome as bottom line (see figure~\ref{sequofhisto},
second last row). His chances of success depend on a fast and efficient
recovery from the blue cyber attack. Accordingly, Blue tries to hamper
the development of an efficient patch by Red. This is realized successfully.
The parameters of table~\ref{examplepatparam} are now also given for Red, 
which removes the last uncertainties. We get
$\Delta t_{r,\mathsf{pat}}$=45.0,
$\gamma_r=0.2$,
$\tilde \gamma_r=0.005\gamma_r$
as characterization of the red countermeasures.
The fast infection of red force elements together with the slow patch rate 
of infected systems maximizes the effect of the blue malware attack.


The analysis of the given situation (and uncertainty) sequence is organized in the following way.  In section~\ref{ssra}, the risk in the various stages of the sequence is determined. After demonstrating the principal applicability and usefulness of the framework, the value of various contributions to the framework is shown. This concerns the underlying model in section~\ref{ssim}, the effectiveness of random resp.\ optimized cyber actions in section~\ref{sscy} resp.\ \ref{sscy2}, 
and, finally, the extension from point estimates to the inclusion of uncertainties in section~\ref{ssun}.

\subsection{Risk Assessment Results}\label{ssra}

About 100000 Monte-Carlo simulation runs were executed for each step 
of the uncertainty sequence.
The simulation runs were stopped as soon as the change of all
compartment levels was smaller than $1.0 \cdot 10^{-5}$
for a simulation time interval equal to 1.
%
Figure~\ref{sequofhisto} gives the histograms of the observables
$\Delta N$, $\Delta D$, and $L_b$ assessing the simulation outcomes
for each step of the complete uncertainty sequence.

In the histograms, the variation of outcomes is significantly reduced
as soon as the kinetic parameters are fixed after step 1. 
Much more prominent cyber effects can be expected for cyber actions
being optimized concerning their effect on kinetic combat
(see section~\ref{sscy2}).
The small number ($2/7 \cdot N_b$) of blue force elements being
susceptible to the red malware attack also contributes
to the small scale of cyber effects. 
The losses $L_b$ have a distinct peak close to zero only in the
start situation. In the step 'force selection', a now even more
pronounced peak is located at $L_b\approx 0.1$. Later, in the step
'enemy attack', the losses could be reduced further.
Simultaneously, the peak is located from now on at $L_b\approx 0.06$.
This change notices its coming by a corresponding distortion of
the outcome distribution in the intermediate step 'cyber attack'.
Similar variations can be detected in $\Delta N$ and $\Delta D$.

\begin{figure}[tbhp!]
\tabulinesep=0.7mm
\begin{center}
\begin{tabu} to \textwidth{| C{0.4cm} | C{0.3cm} | X[1] | X[1] | X[1] |}
\hline
	&& \multicolumn{1}{|c|}{\Large $\Delta N$} & \multicolumn{1}{c|}{\Large $\Delta D$} & \multicolumn{1}{c|}{\Large $L_b$} \\ \hline \hline
	\multirow{5}{*}{\adjustbox{angle=90}{\hspace*{+0.0cm} \Large Frequency}} &
	\multirow{1}{*}{\adjustbox{angle=90}{\hspace*{+0.0cm} \large Start Situation}} &
\includegraphics[trim = 15mm 15mm 0mm 0mm,scale=0.12]{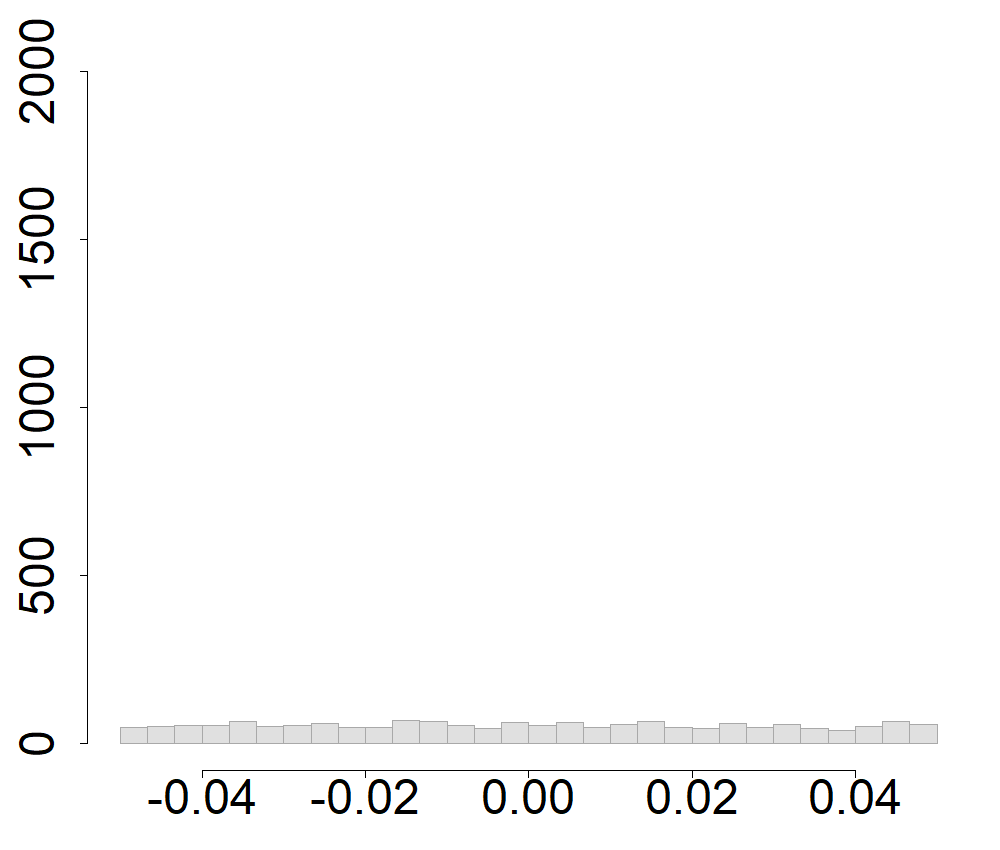}
	&
\includegraphics[trim = 15mm 15mm 0mm 0mm,scale=0.12]{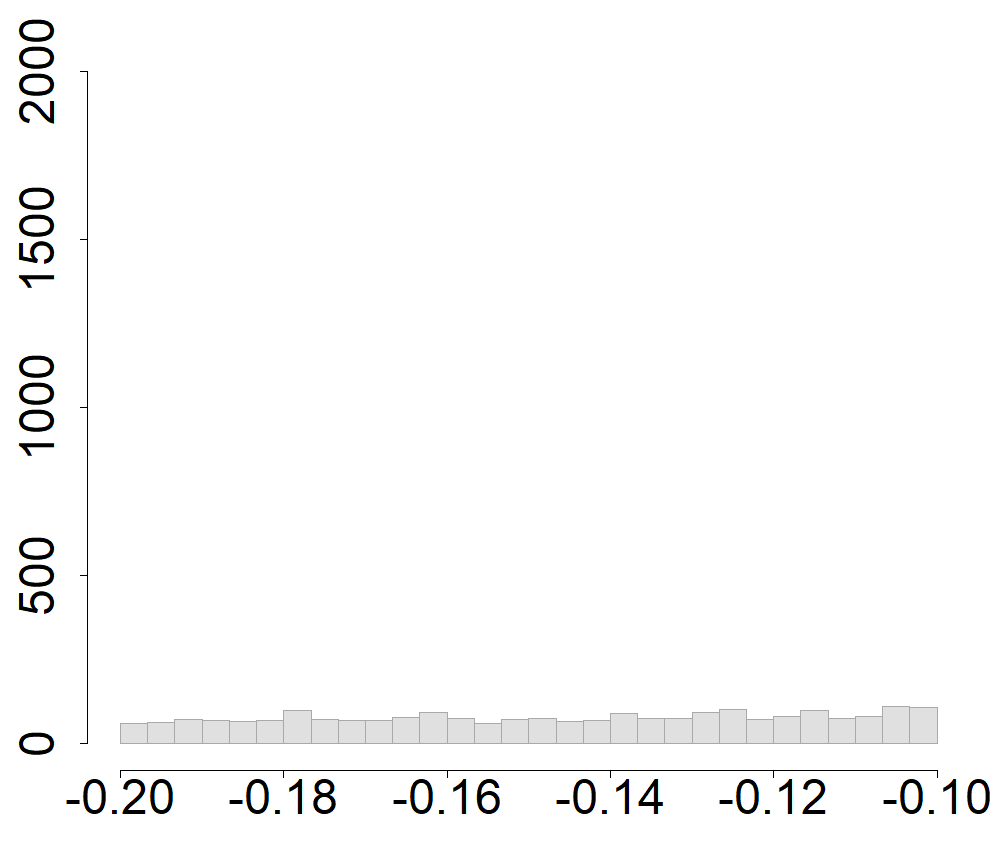}
	&
\includegraphics[trim = 15mm 15mm 0mm 0mm,scale=0.12]{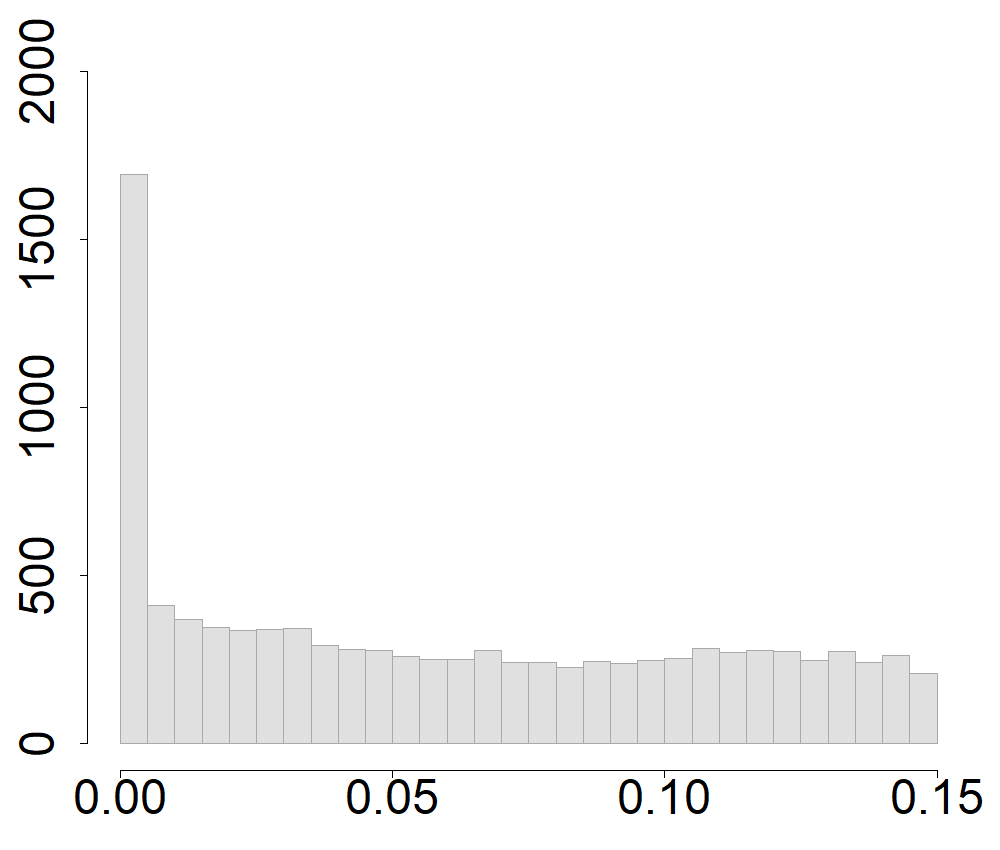}
\\
	\cline{2-5}
	&
	\multirow{1}{*}{\adjustbox{angle=90}{\hspace*{+0.0cm} \large Force Selection}} &
\includegraphics[trim = 15mm 15mm 0mm 0mm,scale=0.12]{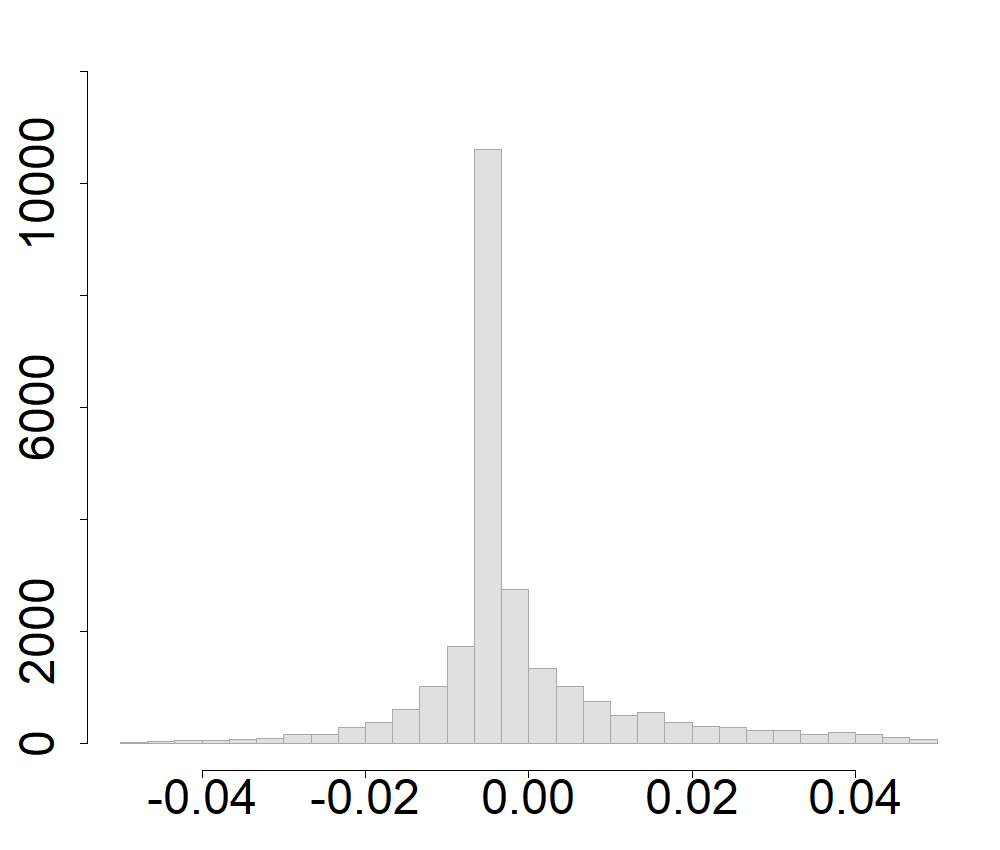}
	&
\includegraphics[trim = 15mm 15mm 0mm 0mm,scale=0.12]{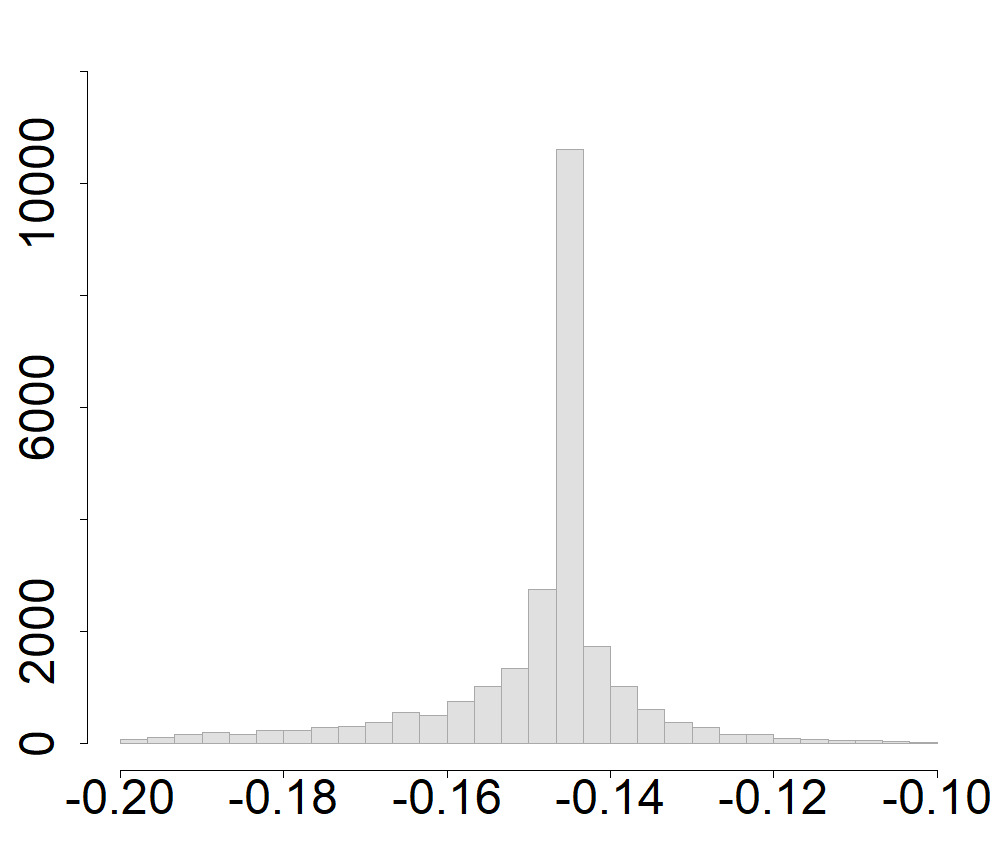}
	&
\includegraphics[trim = 15mm 15mm 0mm 0mm,scale=0.12]{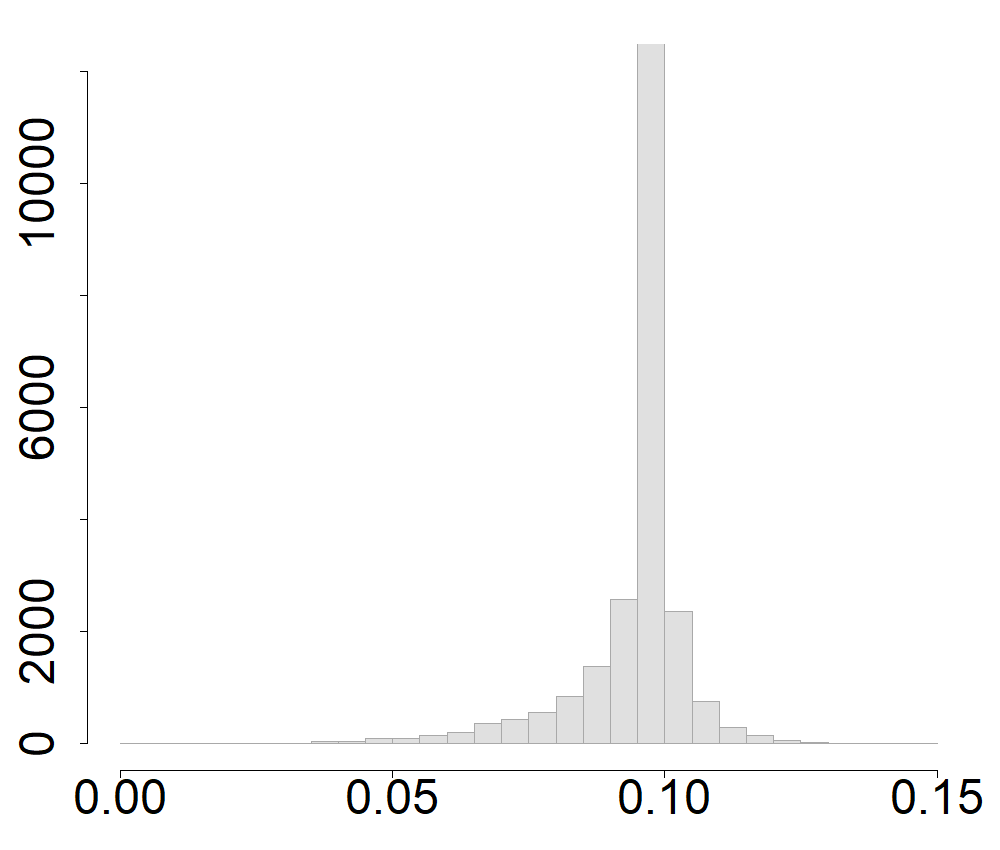}
\\
	\cline{2-5}
	&
	\multirow{1}{*}{\adjustbox{angle=90}{\hspace*{+0.0cm} \large Cyber Attack}} &
\includegraphics[trim = 15mm 15mm 0mm 0mm,scale=0.12]{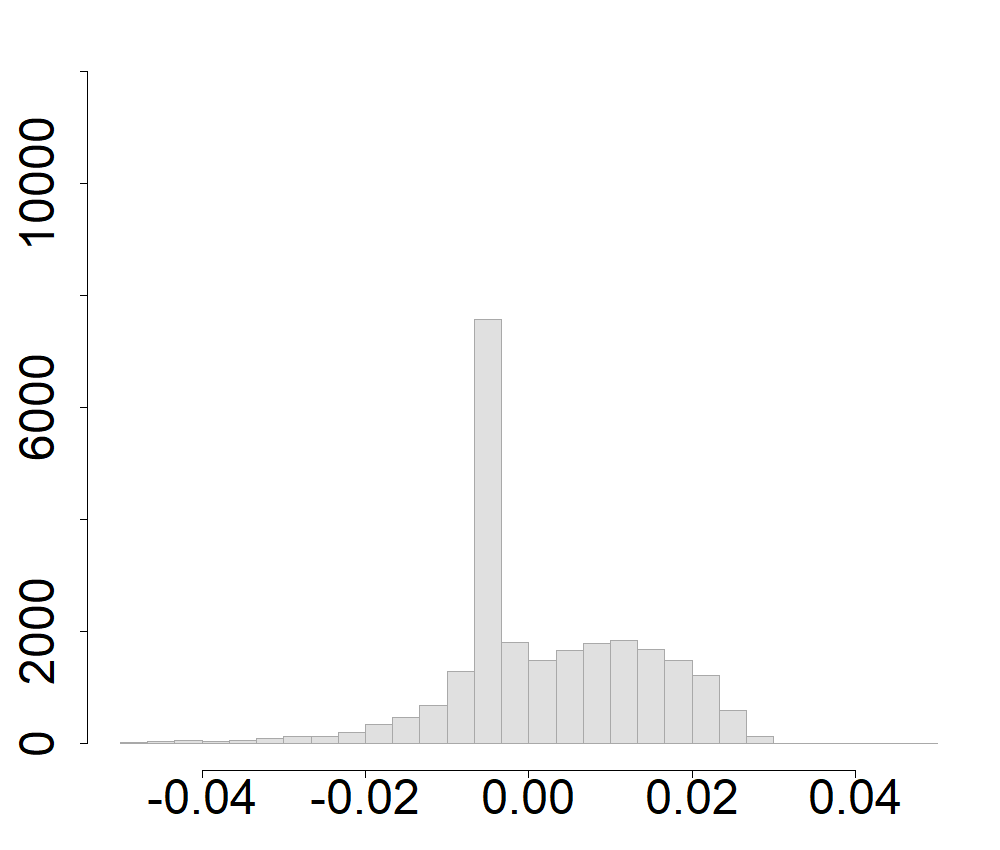}
	&
\includegraphics[trim = 15mm 15mm 0mm 0mm,scale=0.12]{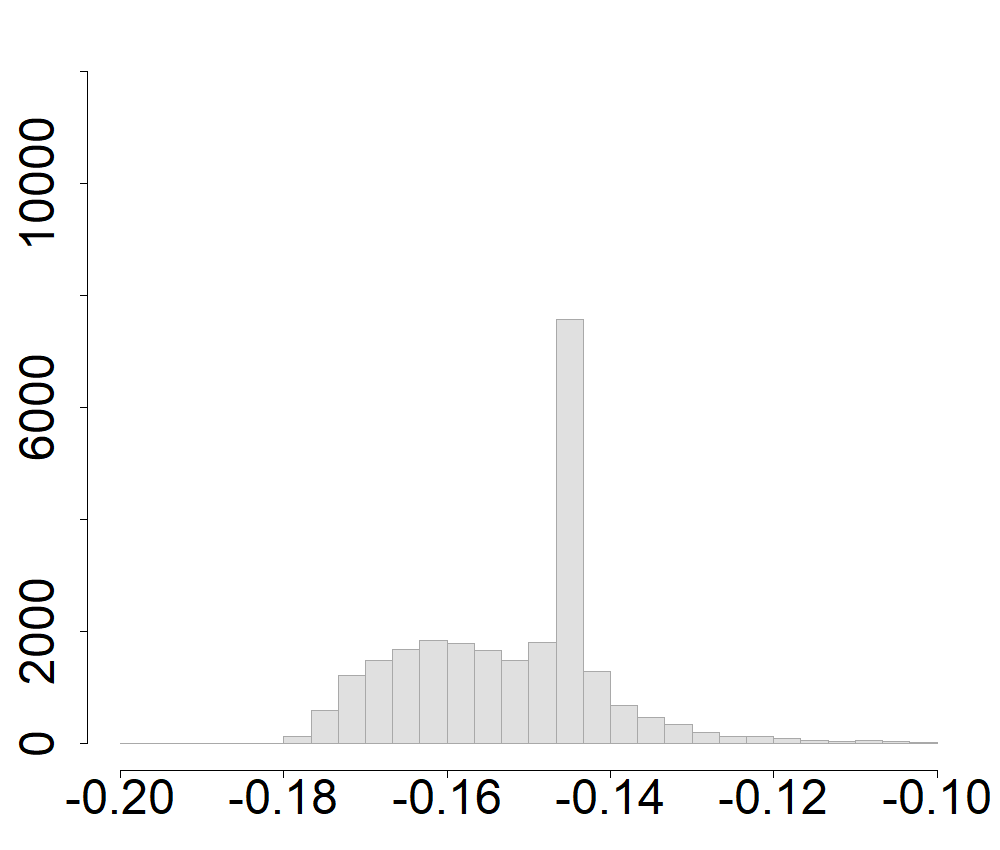}
	&
\includegraphics[trim = 15mm 15mm 0mm 0mm,scale=0.12]{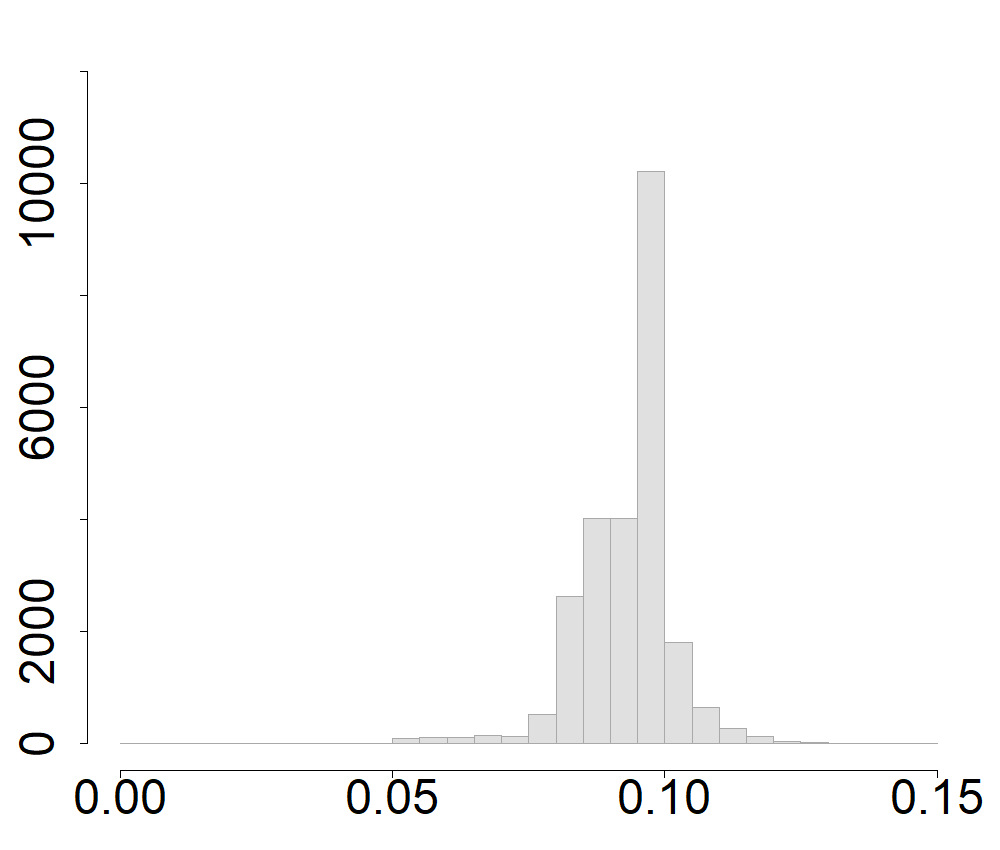}
\\
	\cline{2-5}
	&
	\multirow{1}{*}{\adjustbox{angle=90}{\hspace*{+0.0cm} \large Enemy Attack}} &
\includegraphics[trim = 15mm 15mm 0mm 0mm,scale=0.12]{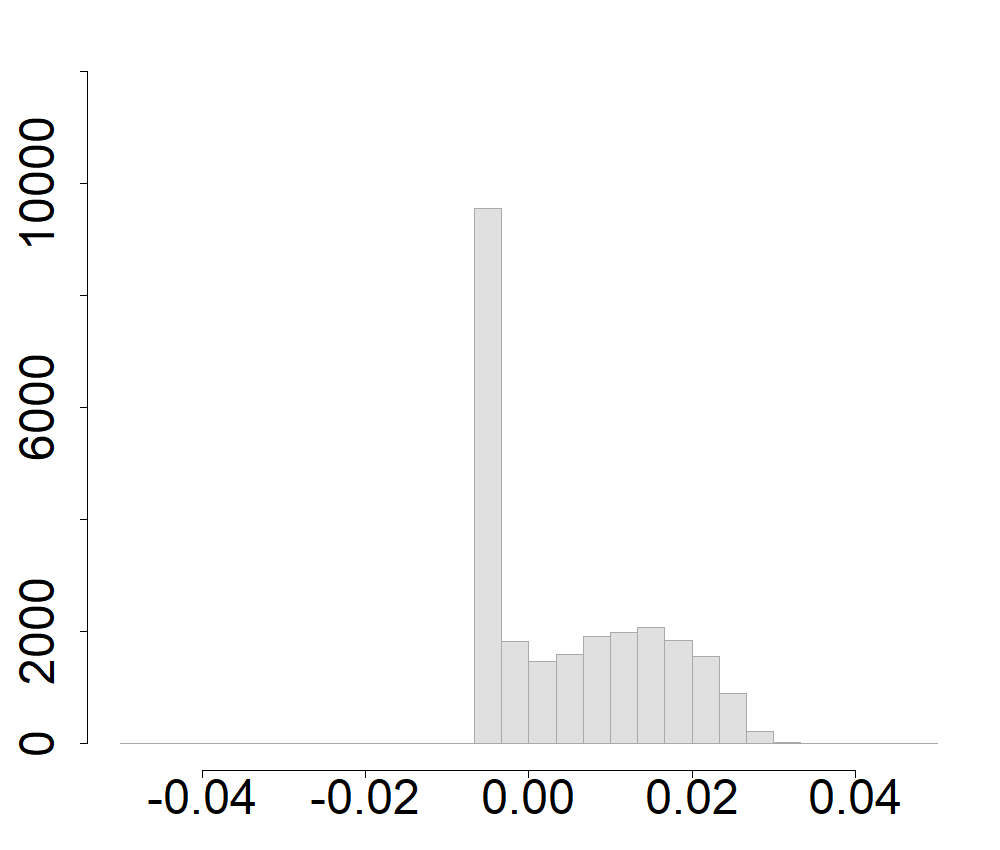}
	&
\includegraphics[trim = 15mm 15mm 0mm 0mm,scale=0.12]{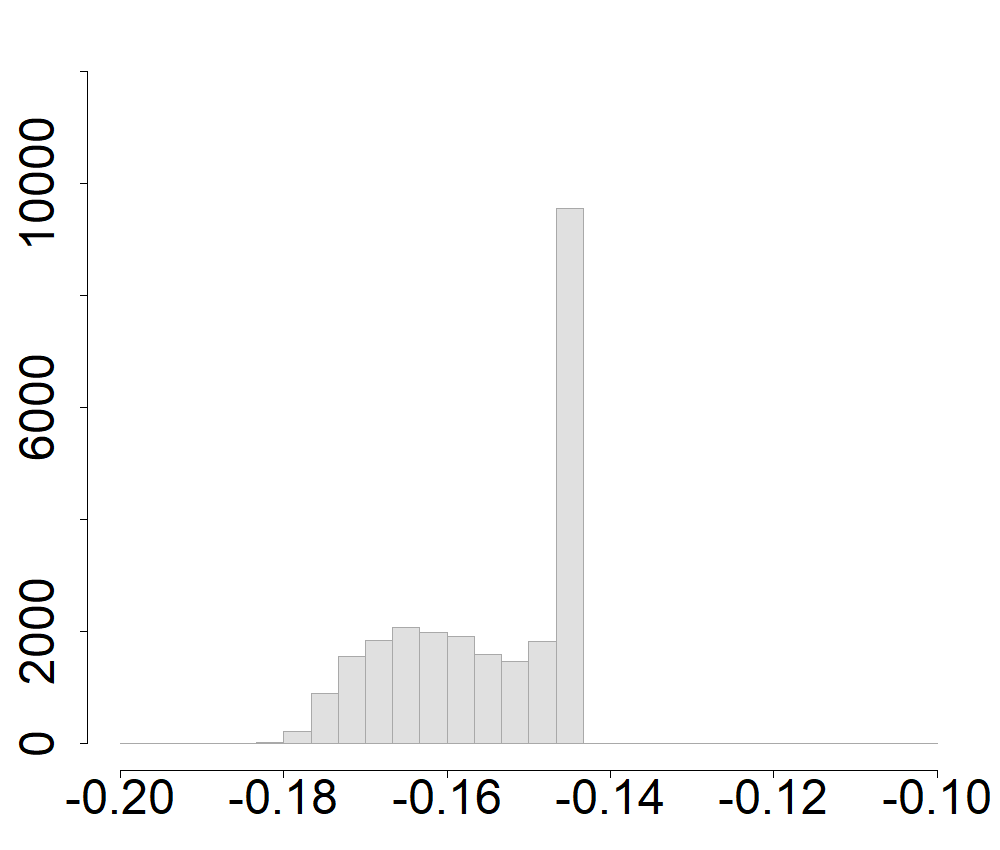}
	&
\includegraphics[trim = 15mm 15mm 0mm 0mm,scale=0.12]{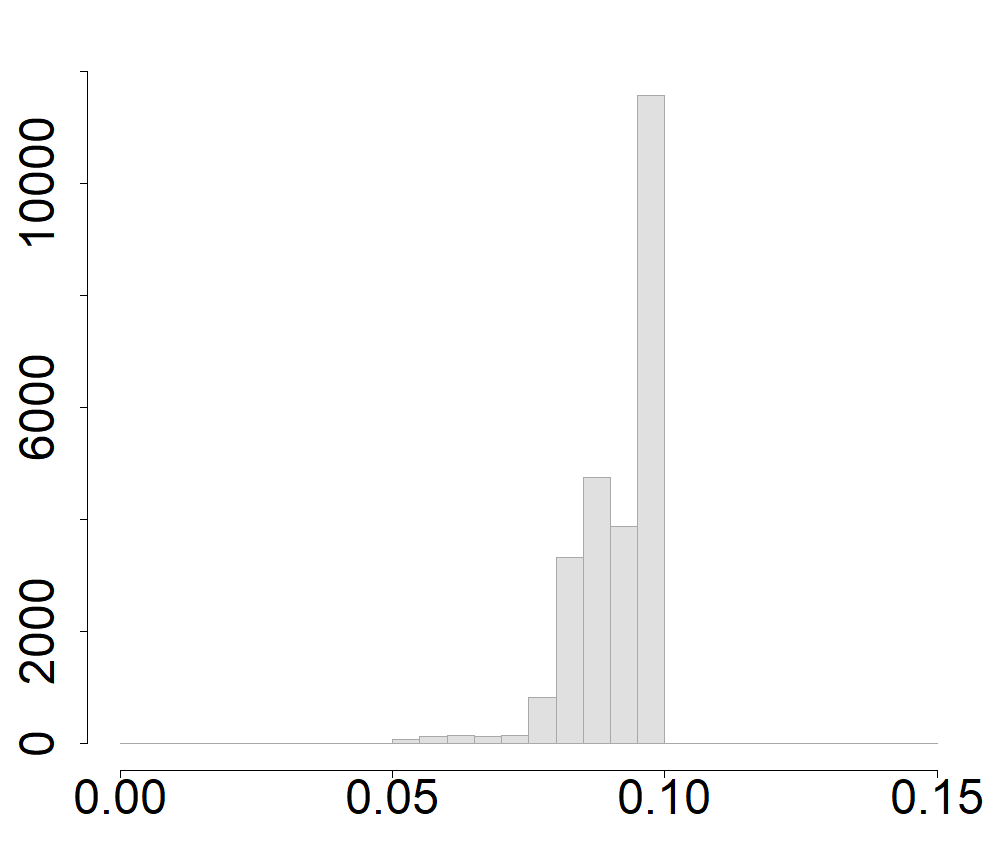}
\\
	\cline{2-5}
	&
	\multirow{1}{*}{\adjustbox{angle=90}{\hspace*{+0.0cm} \large Enemy Recovery}} &
\includegraphics[trim = 15mm 15mm 0mm 0mm,scale=0.12]{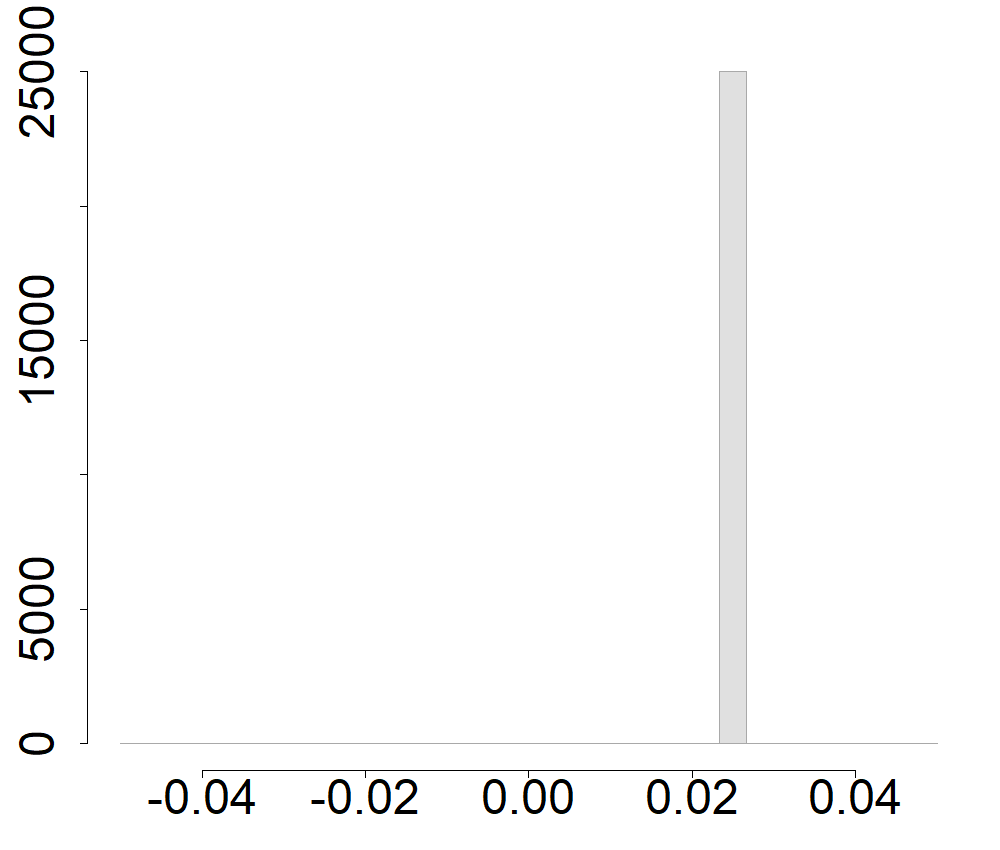}
	&
\includegraphics[trim = 15mm 15mm 0mm 0mm,scale=0.12]{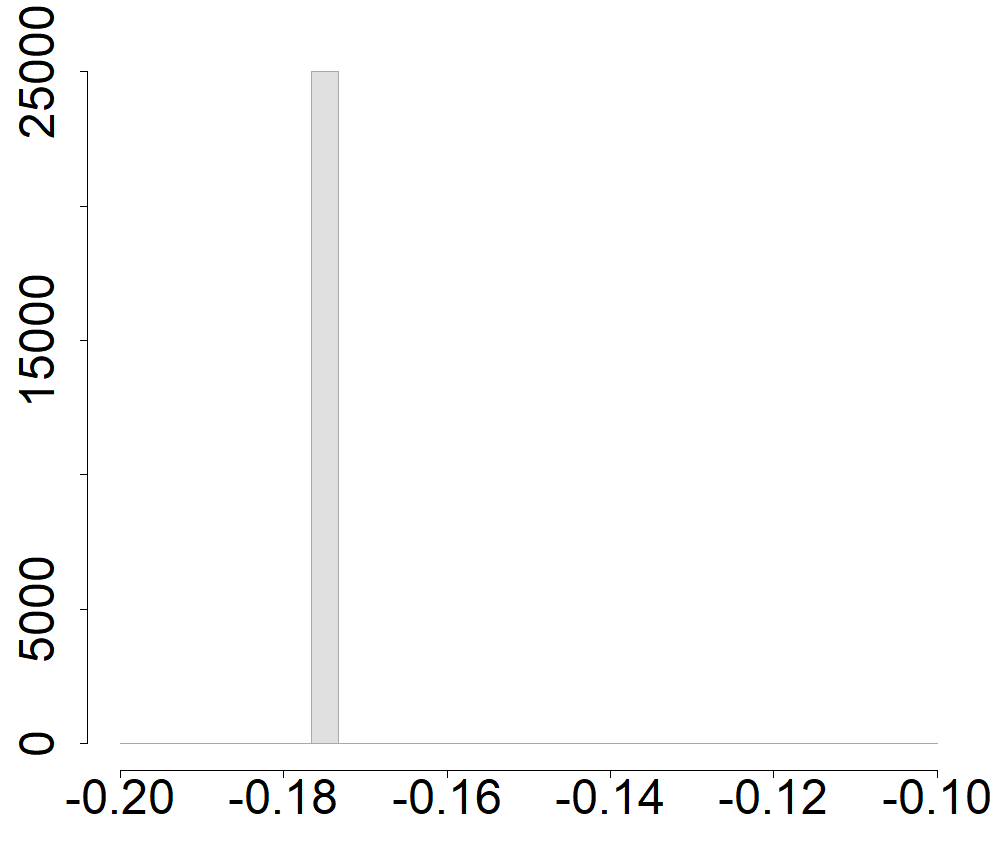}
	&
\includegraphics[trim = 15mm 15mm 0mm 0mm,scale=0.12]{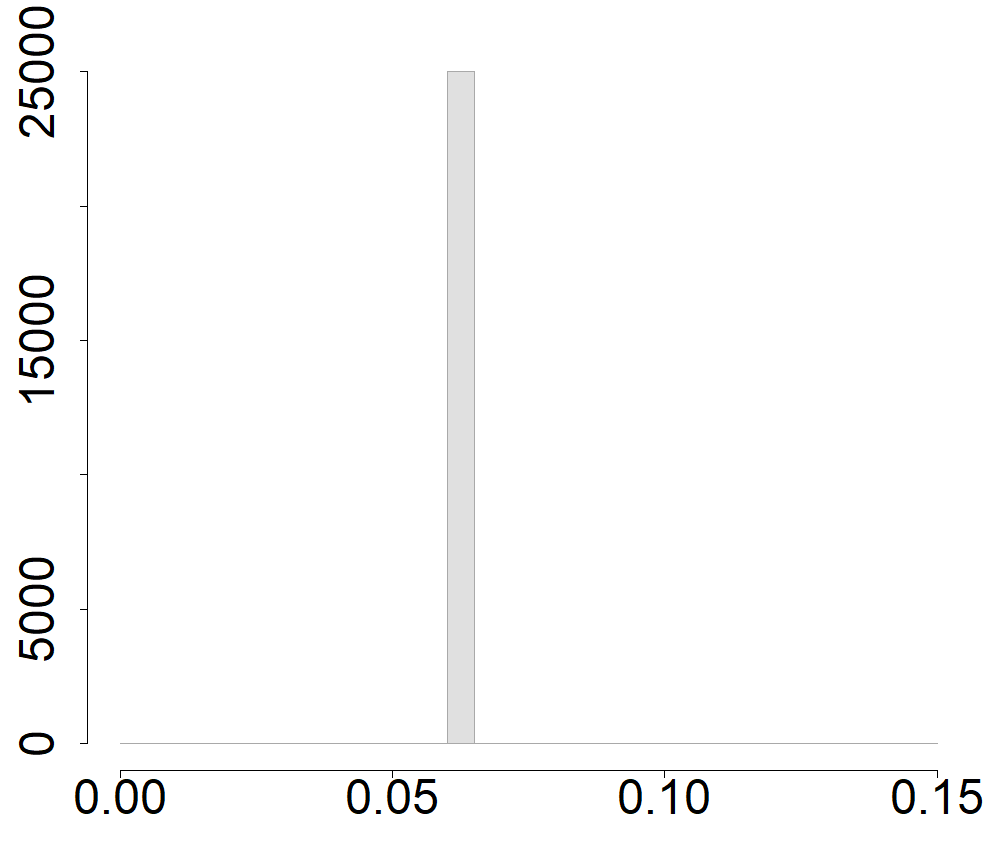}
\\
\hline
\end{tabu}
\end{center}
\caption{
Statistics of the outcomes for the given uncertainty sequence. The outcomes are characterized by the observables $\Delta N$, $\Delta D$, and $L_b$.
\label{sequofhisto}
}
\end{figure}

As a supplement, table~\ref{sequastable} gives means and variances.
The small size of its expeditionary force is a clear disadvantage for Blue.
After fixing force sizes and kinetic parameters in step 2 of the
uncertainty sequence, Blue will in fact even typically lose the battle
according to $\mean(\Delta N) \approx -0.001$. Simultaneously,
the higher kinetic effectiveness of Blue leads to significantly
higher losses of Red in the mean --- from $\mean(\Delta D) \approx -0.016$
in step 1 of the uncertainty sequence to $\mean(\Delta D) \approx -0.149$
in step 2. The continuous decrease of  $\mean(\Delta D)$ resp.\ 
$\mathcal{R} =\mean(L_b)$ in later steps indicates that Blue makes
better progress
in exploiting the cyber aspects of the situation than Red. The decrease
of $\mean(\Delta D) $ is accompanied with an increase of $\mean(\Delta N)$.
Starting with step 3, the value of $\mean(\Delta N)\approx 0.002$
is back in the positive range and increases further to
$\mean(\Delta N)\approx 0.024$ in step 5. 
Concerning the variances, the numbers drop to a value in the range
$10^{-4}-10^{-5}$ in step 2 'force selection', continuously
decreasing further in course of the following steps. Fixing
the force sizes to comparatively small values limits the ranges
of the observables $\Delta N$, $\Delta D$, and $L_b$, which
contributes to the initial drop of the variances.

\begin{table}[tb!]
\tabulinesep=1.0mm
\begin{tabu} to \textwidth {| L{2.7cm} || X[1] | X[1] | X[1] | X[1] | X[1] | X[1] | }
\hline
	Sequence Step & $\mean(\Delta N)$ &  $\mean(\Delta D)$ &  $\mean(L_b)$ &  $\var(\Delta N)$ &  $\var(\Delta D)$ &  $\var(L_b)$   \\ \hline \hline
Start Situation &  0.013 & -0.016 & 0.292 & $2.14\cdot 10^{-1}$ & $1.34\cdot 10^{-1}$ & $6.54\cdot 10^{-1}$ \\ \hline
Force Selection & -0.001 & -0.149 & 0.095 & $2.48\cdot 10^{-4}$ & $2.48\cdot 10^{-4}$ & $1.06\cdot 10^{-4}$ \\ \hline
Cyber Attack    &  0.002 & -0.152 & 0.093 & $1.71\cdot 10^{-4}$ & $1.71\cdot 10^{-4}$ & $7.50\cdot 10^{-5}$ \\ \hline
Enemy Attack    &  0.005 & -0.155 & 0.091 & $9.66\cdot 10^{-5}$ & $9.66\cdot 10^{-5}$ & $5.55\cdot 10^{-5}$ \\ \hline
Enemy Recovery  &  0.024 & -0.175 & 0.063 & $0.000$ & $0.000$ & $0.000$ \\ \hline
\end{tabu}
\caption{
Mean and variance of the observables $\Delta N$, $\Delta D$, and $L_b$ for the given uncertainty sequence.
The risk $\mathcal{R}$ associated with a sequence step is given by $\mathcal{R}=\mean(L_b)$.
\label{sequastable}
}
\end{table}

\begin{table}[tb!]
\tabulinesep=1.0mm
\begin{tabu} to \textwidth {| L{2.7cm} || X[1] | X[1] | X[1] | X[1] | }
\hline
Sequence Step & Strong Wins  & Weak Wins  & Weak Losses  & Strong Losses    \\ \hline \hline
Start Situation & 0.413 & 0.098 & 0.103 & 0.385 \\ \hline
Force Selection & 0.265 & 0.000 & 0.734 & $2.0 \cdot 10^{-4}$ \\ \hline
Cyber Attack    & 0.482 & 0.000 & 0.518 & $1.0 \cdot 10^{-4}$ \\ \hline
Enemy Attack    & 0.545 & 0.000 & 0.455 & 0.000 \\ \hline
Enemy Recovery  & 1.000 & 0.000 & 0.000 & 0.000 \\ \hline
\end{tabu}
\caption{
The fractions of wins and losses --- considered from the perspective of Blue --- along the given sequence of decreasing uncertainty about the expected combat situation.
\label{sequastable2}
}
\end{table}

The fractions of wins and losses are shown in table~\ref{sequastable2}. Blue has cyber superiority from the beginning, and significantly higher kinetic effectiveness starting with step 2 of the uncertainty sequence. Accordingly, weak wins and strong losses become very rare --- the overall technical superiority of Blue and the only moderately larger force of Red make it unlikely that the losses of Blue are higher than the losses of Red. The potential of red options causes a clear drop of the number of strong wins in step 2. Blue demonstrates a clear recovery, however, with a significantly increasing fraction of strong wins along the remaining steps of the uncertainty sequence.

Figure~\ref{sequofhisto} shows only a small part of the 
histograms of the observables $\Delta N$, $\Delta D$, and $L_b$.
The complete picture for the start situation of the uncertainty 
sequence is provided in figure~\ref{sequofhistoforcyb}.
The distribution of $\Delta D$ has a pronounced peak 
close to $\Delta D\approx 0$ contrary to $\Delta N$.
This preference of small values may be caused by a cancellation effect.
Typically, both winner and loser will suffer losses, whereas
the loser has no surviving force elements at all. Accordingly,
the difference $\Delta D$ of the losses of the opponents will
typically cancel out large numbers. For $\Delta N$
representing the difference of survivors, such a cancellation effect 
does not occur.

\begin{figure}[tb!]
\tabulinesep=0.7mm
\begin{center}
\begin{tabu} to \textwidth{| p{0.32\textwidth} |p{0.32\textwidth} |  p{0.32\textwidth} | }
\hline
	\multicolumn{1}{|c|}{\Large $\Delta N$} & \multicolumn{1}{c|}{\Large $\Delta D$} & \multicolumn{1}{c|}{\Large $L_b$} \\ \hline 
\includegraphics[scale=0.12]{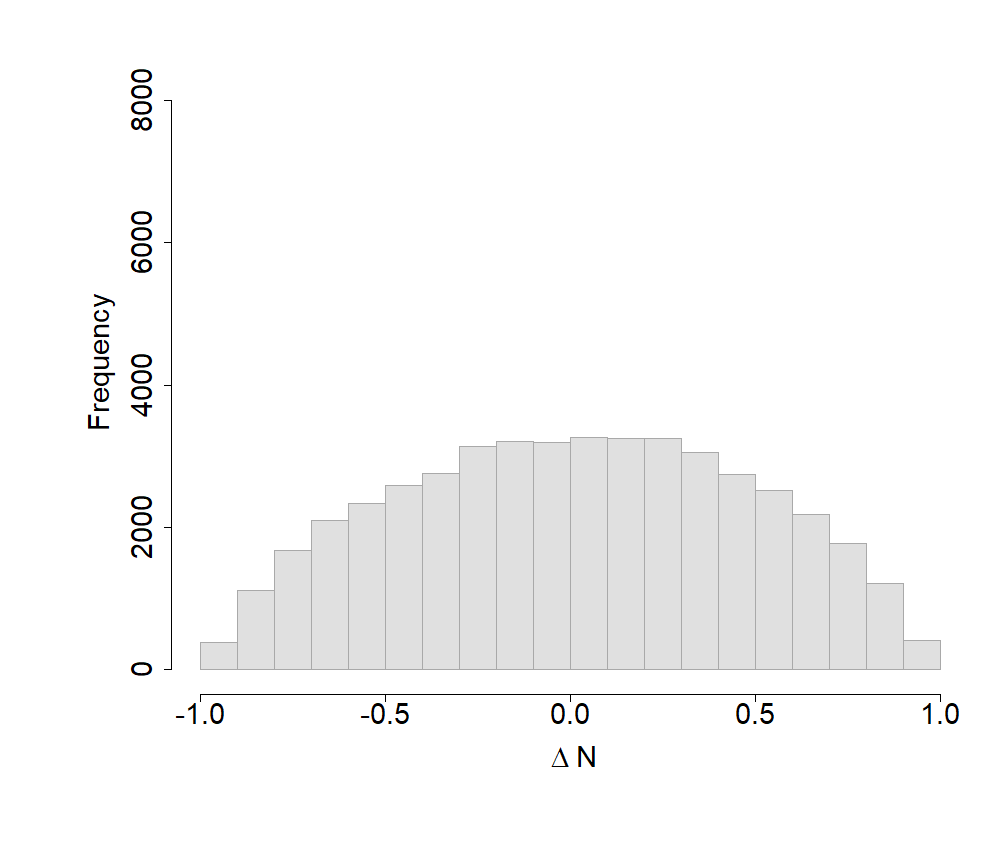}
	&
\includegraphics[scale=0.12]{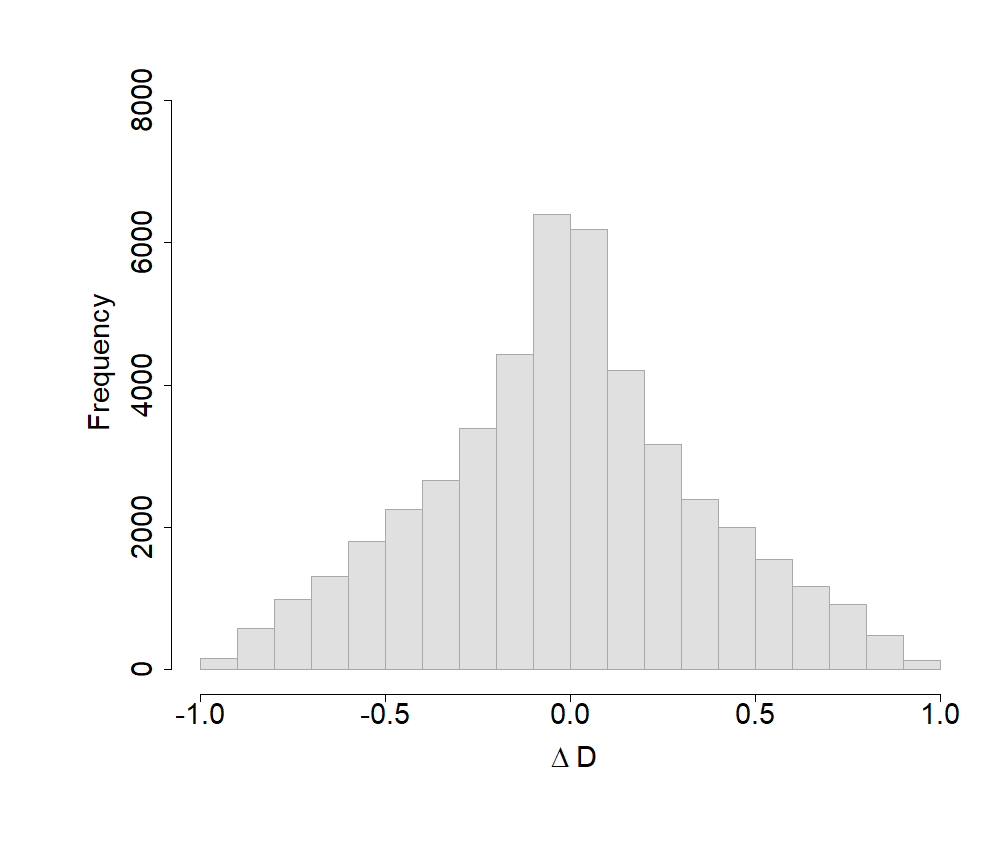}
	&
\includegraphics[scale=0.12]{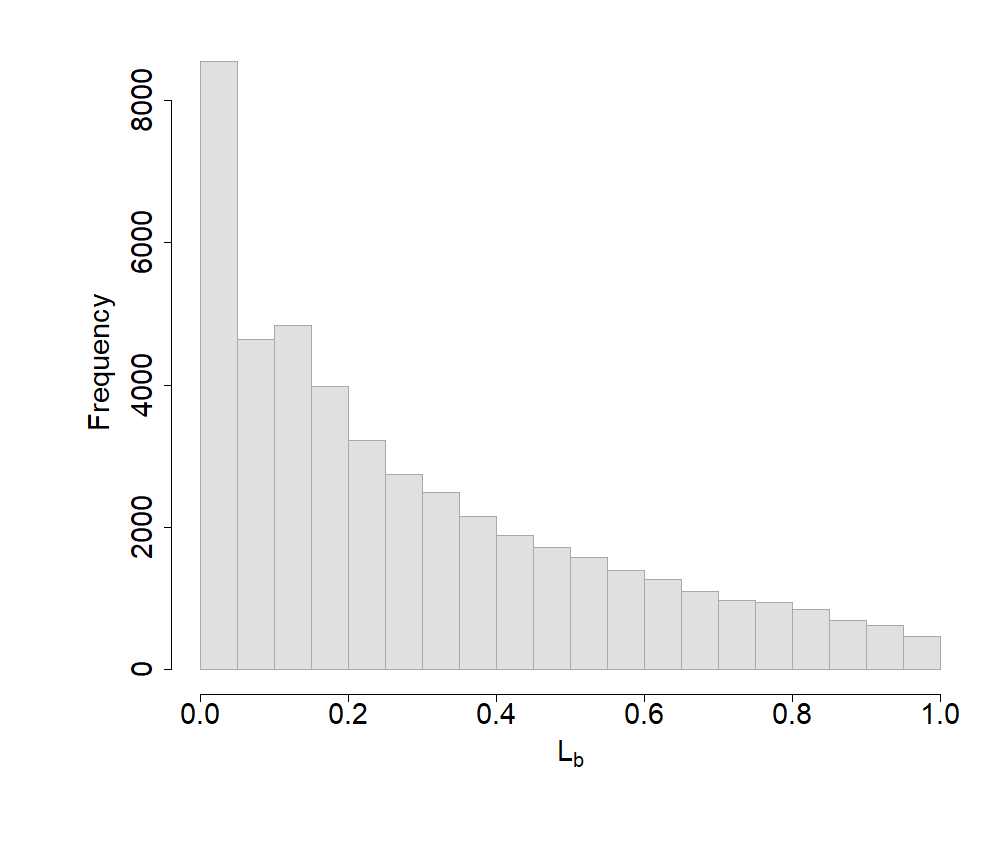}
\\
\hline
\end{tabu}
\end{center}
\caption{
	Histograms of the values of the observables $\Delta N$, $\Delta D$, and $L$ for the start situation of the uncertainty sequence. 
\label{sequofhistoforcyb}
}
\end{figure}

\subsection{Influence of the Model}\label{ssim}

Modifications of the model $M$ representing the given situation may 
lead to distinct changes of the results.
For a demonstration, table~\ref{compastable} compares
the model of section~\ref{secfour} (designated as 'gen') and 
the slightly simpler model of Yildiz \cite{yildiz2014modeling}
(designated as 'ref') based on the means of the observables 
$\Delta N$, $\Delta D$, and $L_b$.
Due to the explicit representation of the malware attack in (\ref{intermediatestep3}), the model of Yildiz can only essentially, but not fully, embedded in (\ref{intermediatestep3}). Taking this restriction into account, a scenario $x\in X$ can be approximatively mapped to the model of Yildiz by applying the parameter settings $\alpha=1$, $\Delta t_{\mathsf{mal}}= 1$, $p,q=1$, $\tilde\gamma=\gamma$, and $\Delta t_{\mathsf{att}}, \Delta t_{\mathsf{pat}} =0$.

The differences between the results of 'gen' and 'ref' are notable in some stages of the uncertainty sequence. The largest difference can be found with the predicted blue losses at the end of the sequence --- $\mathcal{R} = \mean(L_b) \approx 0.064$ for 'gen' vs.\ $\mathcal{R}=\mean(L_b) \approx 0.112$ for 'ref'. Beyond that, $\mean(\Delta N) <0$ for 'gen' and $\mean(\Delta N) >0$ for 'ref' in step 2 leads to a win/loss classification mismatch (at least in the average). 

We can state that using a refined and more detailed model $M$ may have notable consequences for the analysis results. The proposed framework offers support by allowing to choose the most suitable model $M\in \mathcal{M}$ instead of using a fixed single model. 
The results for point estimates also contained in table~\ref{compastable} are discussed in section~\ref{secinfuncertain}.

\begin{table}[tbhp!]
\tabulinesep=1.0mm
\begin{tabu} to \textwidth {| L{2.7cm} | L{2.2cm} || X[1] | X[1] | X[1] | X[1] | X[1] | X[1] | }
\hline
	Sequence Step &Uncertainties? & \multicolumn{2}{c|}{$\mean(\Delta N)$} &  \multicolumn{2}{c|}{$\mean(\Delta D)$} &  \multicolumn{2}{c|}{$\mean(L_b)$} \\ \cline{3-8}
	&& gen & ref & gen & ref & gen & ref \\
	\hline \hline
	\multirow{2}{*}{Start Situation} & sampling      &  0.013 &  0.011 & -0.016 & -0.014 &  0.292 &  0.338 \\ \cline{2-8}
	                                 & point estimate&  0.006 &  0.020 & -0.010 & -0.024 &  0.447 &  0.525 \\ \hline
	\multirow{2}{*}{Force Selection} & sampling      & -0.001 &  0.032 & -0.149 & -0.183 &  0.095 &  0.116 \\ \cline{2-8}
	                                 & point estimate& -0.004 &  0.032 & -0.146 & -0.182 &  0.097 &  0.117 \\ \hline
	\multirow{2}{*}{Cyber Attack}    & sampling      &  0.001 &  0.034 & -0.152 & -0.185 &  0.093 &  0.113 \\ \cline{2-8}
	                                 & point estimate& -0.003 &  0.032 & -0.147 & -0.182 &  0.097 &  0.116 \\ \hline
	\multirow{2}{*}{Enemy Attack}    & sampling      &  0.006 &  0.039 & -0.156 & -0.189 &  0.091 &  0.109 \\ \cline{2-8}
	                                 & point estimate& -0.002 &  0.035 & -0.148 & -0.185 &  0.096 &  0.113 \\ \hline
	\multirow{2}{*}{Enemy Recovery}  & sampling      &  0.025 &  0.037 & -0.175 & -0.187 &  0.064 &  0.112 \\ \cline{2-8}
	                                 & point estimate&  0.025 &  0.037 & -0.175 & -0.187 &  0.064 &  0.112 \\ \hline
\end{tabu}
\caption{
Importance of the modeling details and of the inclusion of uncertainties.
\label{compastable}
}
\end{table}


\subsection{Influence of Random Cyber Actions}\label{anares1}\label{sscy}

\begin{table}[tbh!]
\tabulinesep=1.0mm
\setlength{\tabcolsep}{3pt}
\centering
\begin{tabu} to \textwidth {| L{0.5cm} | L{1.9cm} | L{3.7cm} | X[1] |}
	\hline
	\textbf{Id} & \textbf{Parameter Settings} & \textbf{Designation} & 
		\textbf{Description} \\
        \hline 
        gen & - & General case, i.e. two-sided cyber support & No additional constraints \\
	\hline
	cyb & $\eta_{b}=1$ & One-sided cyber support &
	The blue side is not affected by malware, i.e. 
	Red is fighting only at the kinetic level\\
        \hline
	kin & $\eta_{b},\eta_{r}=1$ & Pure Lanchester case &
	Pure kinetic combat without cyber component at all \\
	\hline
\end{tabu}
	\caption{List of scenario classes.
\label{definition_cases}
}
\end{table}

For assessing the influence of cyber combat, results reached with
one- or two-sided cyber support are compared with results of
pure kinetic scenarios.
Corresponding scenario classes 'gen' (two-sided cyber warfare),
'cyb' (one-sided cyber warfare), and 'kin' (without cyber warfare),
which define subsets of the 
overall universe of possible situations with help of parameter 
constraints, are given in table~\ref{definition_cases}.
As a first application, the simulation outcomes 
for the start situation of the uncertain sequence.
are broken down to the four kinetic win/loss classes defined
in definition~\ref{defkinsup} in table~\ref{startastable3}.
As one may expect, $\Delta N$ decreases along the sequence
'strong kinetic superiority', 'weak kinetic superiority',
'weak kinetic inferiority', 'strong kinetic inferiority'.
This holds for the scenario classes 'gen', 'cyb', and 'kin' likewise. 
For $\Delta D$, the values are correspondingly increasing
with the two middle positions interchanged. 
According to the definition of weak kinetic inferiority
with $\Delta D <0$ resp.\ superiority with $\Delta D >0$,
the risk for Blue is significantly higher in situations with weak
kinetic superiority than in situations with weak kinetic inferiority. 
As expected, Blue profits from one-sided cyber capabilities
in the scenario class 'cyb'. The losses are smaller and
the fraction of surviving force elements higher when compared to
scenario classes 'gen' and 'kin'.  

\begin{table}[tb!]
\setlength{\tabcolsep}{5pt}
\tabulinesep=1.0mm
\begin{tabu} to \textwidth {| L{2.3cm} || X[1] | X[1] | X[1] | X[1] | X[1] | X[1] | X[1] | X[1] | X[1] | }
\hline
Win/Loss & \multicolumn{3}{c|}{$\mean(\Delta N)$}  & \multicolumn{3}{c|}{$\mean(\Delta D)$}  & \multicolumn{3}{c|}{$\mean(L_b)$}    \\ \cline{2-10}
Class 	& gen & cyb & kin & gen & cyb & kin & gen & cyb & kin \\ \hline \hline
Strong kinetic superiority & 0.423 & 0.431 & 0.415 & -0.322 & -0.331 & -0.315 & 0.145 & 0.137 & 0.153 \\ \hline
Weak kinetic superiority   & 0.282 & 0.295 & 0.265 & 0.124 & 0.112 & 0.142 & 0.325 & 0.314 & 0.342 \\ \hline
Weak kinetic inferiority   & -0.238 & -0.226 & -0.260 & -0.165 & -0.176 & -0.143 & 0.198 & 0.196 & 0.205 \\ \hline
Strong kinetic inferiority & -0.406 & -0.398 & -0.423 & 0.295 & 0.286 & 0.312 & 0.455 & 0.454 & 0.460 \\ \hline 
\end{tabu}
\caption{
The means of the observables $\Delta N$, $\Delta D$, and $L_b$ for the start situation of the given uncertainty sequence for the scenario classes 'gen', 'cyb', and 'kin'.
\label{startastable3}
}
\end{table}

Situations, in which the winning side changes due to the usage of
malware, are of special interest. Table~\ref{startastable2} takes up
this issue.
As expected, malware can turn around a situation.
Strong wins are possible for all kinetic win/loss classes
for both 'gen' and 'cyb'. Cyber support is obviously able
to compensate kinetic inferiority. Since one-sided cyber support is a 'true' support --- the situation cannot worsen by the usage of malware impeding hostile force elements --- some transitions of the win/loss classification are excluded. Kinetic superiority, for example, cannot lead to lost outcomes anymore. Furthermore, strong kinetic superiority in 'cyb' assures a strong win.  Thus, a larger number of winning changes can be observed for the scenario class 'cyb' than for 'gen'.

\begin{table}[tb!]
\tabulinesep=1.0mm
\begin{tabu} to \textwidth {| L{2.7cm} || X[1] | X[1] | X[1] | X[1] | X[1] | X[1] | X[1] | X[1] | }
\hline
	Win/Loss Class & \multicolumn{2}{c|}{Strong Wins}  & \multicolumn{2}{c|}{Weak Wins}  & \multicolumn{2}{c|}{Weak Losses}  & \multicolumn{2}{c|}{Strong Losses}    \\ \cline{2-9}
	& gen & cyb & gen & cyb & gen & cyb & gen & cyb \\ \hline \hline
Strong kinetic superiority & 0.980 & 1.000 & 0.009 & 0.000 & 0.008 & 0.000 & 0.003 & 0.000 \\ \hline
Weak kinetic superiority   & 0.085 & 0.093 & 0.891 & 0.907 & 0.000 & 0.000 & 0.023 & 0.000 \\ \hline
Weak kinetic inferiority   & 0.071 & 0.075 & 0.000 & 0.000 & 0.891 & 0.925 & 0.037 & 0.000 \\ \hline
Strong kinetic inferiority & 0.015 & 0.015 & 0.012 & 0.016 & 0.023 & 0.025 & 0.950 & 0.944 \\ \hline
\end{tabu}
\caption{
	The fractions of wins and losses (from the perspective of Blue) in  the start situation of the given uncertainty sequence. 
\label{startastable2}
}
\end{table}

\subsection{Influence of Optimized Cyber Actions}\label{sscy2}

In the considered example, cyber support leads to fundamental
changes of the outcome (see table~\ref{startastable2})
only in a comparatively small number of cases. This is partially 
caused by a randomized parameterization of cyber actions.
Unexpected frictions or technical breakthroughs, the hardly 
predictable time required for developing malware and patches, and
unforeseen activities of the opponent motivate this simplification.
Notwithstanding, military forces will try to exploit any situation
to their own advantage in the best possible way. Accordingly, 
one may expect some kind of optimization in practice.
We will thus evaluate its potential in the following. At first, we take a
closer look at the changes of the outcomes due to cyber-related actions
in dependence on their parameterization. For this purpose, we define 
$\Delta\Delta N := 
\Delta N (x) - \Delta N (g_{\mathsf{kin}} (x))$ and analogously
$\Delta\Delta D$, $\Delta L_b$. 

\begin{figure}[tbhp!]
\tabulinesep=0.7mm
\begin{center}
\begin{tabu} to \textwidth{| C{0.5cm} | X[1] | X[1] | X[1] |}
\hline
	& \multicolumn{1}{c|}{\Large $\Delta \Delta N$} & \multicolumn{1}{c|}{\Large $\Delta \Delta D$} & \multicolumn{1}{c|}{\Large $\Delta L_b$} \\ \hline \hline
	\multirow{1}{*}{\adjustbox{angle=90}{\hspace*{+0.0cm} \Large Scenario Class 'gen'}} &
\includegraphics[scale=0.12]{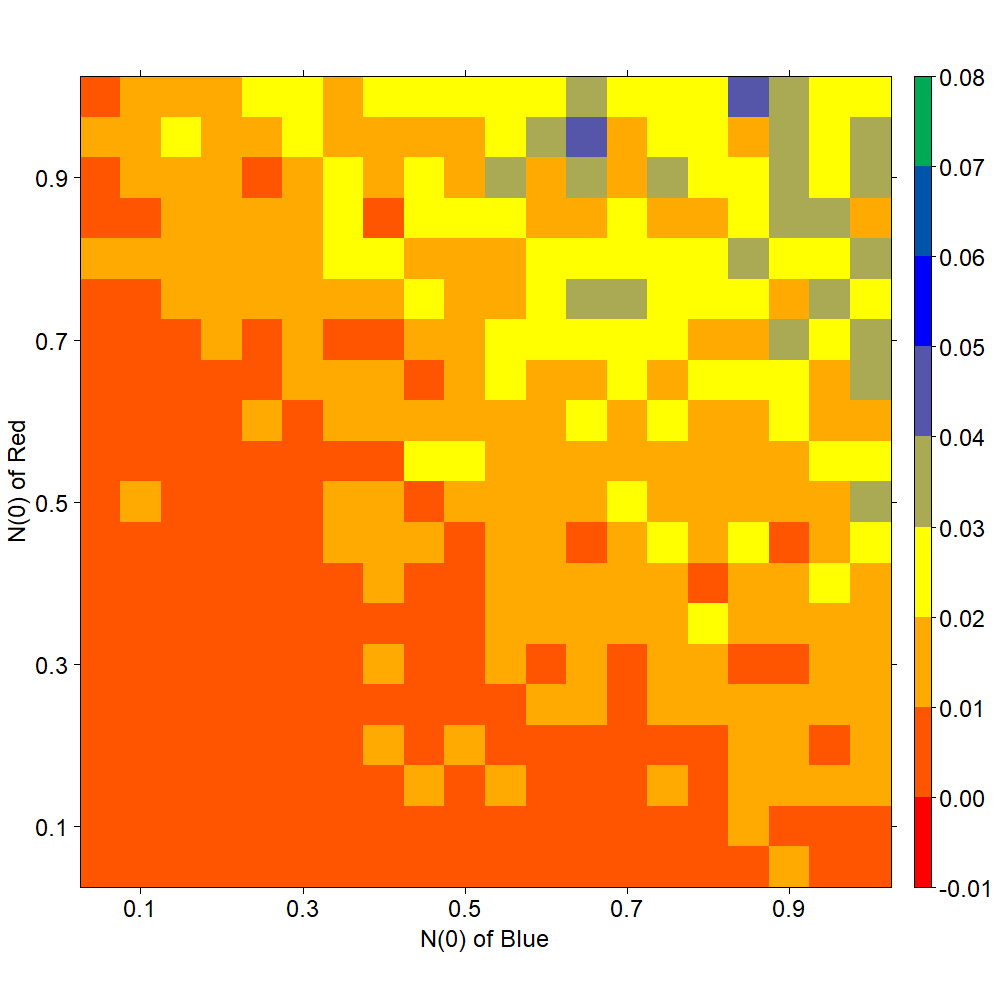}
	&
\includegraphics[scale=0.12]{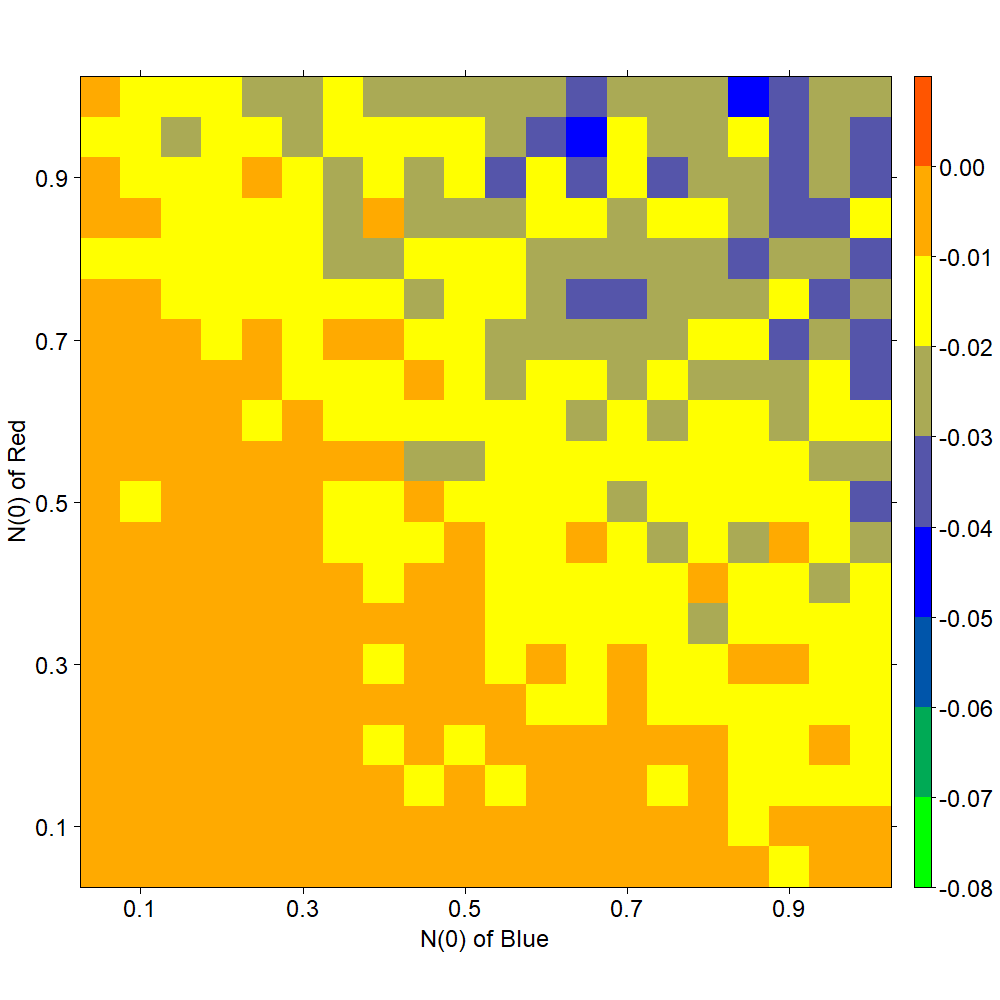}
	&
\includegraphics[scale=0.12]{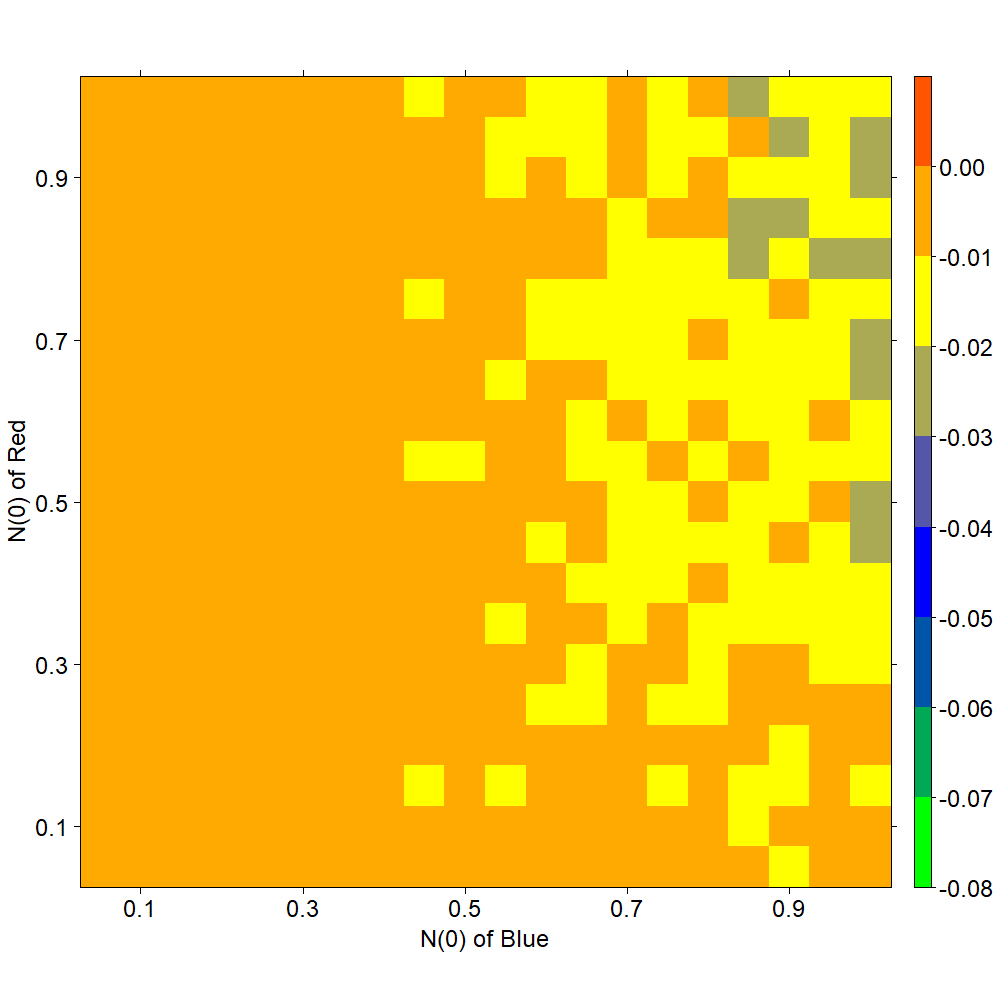}
\\
\hline
	\multirow{1}{*}{\adjustbox{angle=90}{\hspace*{+0.0cm} \Large Scenario Class 'cyb'}} &
\includegraphics[scale=0.12]{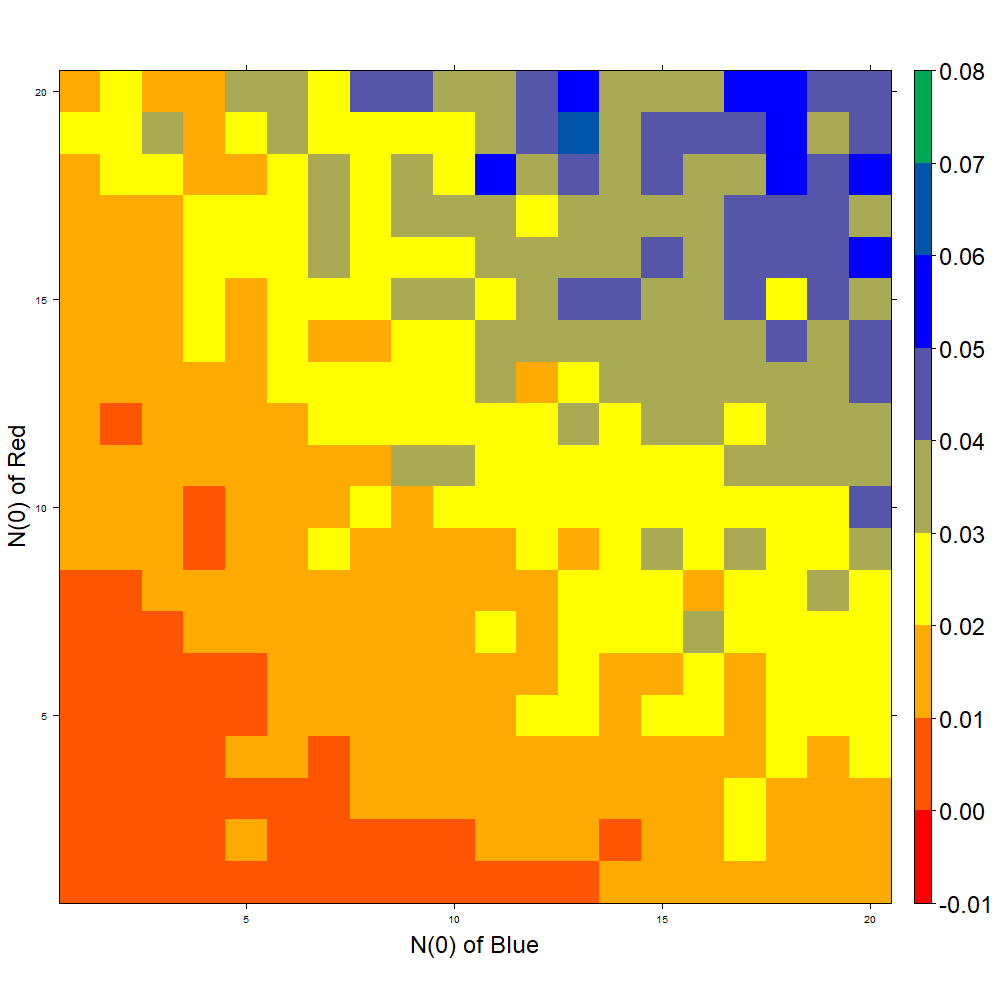}
	&
\includegraphics[scale=0.12]{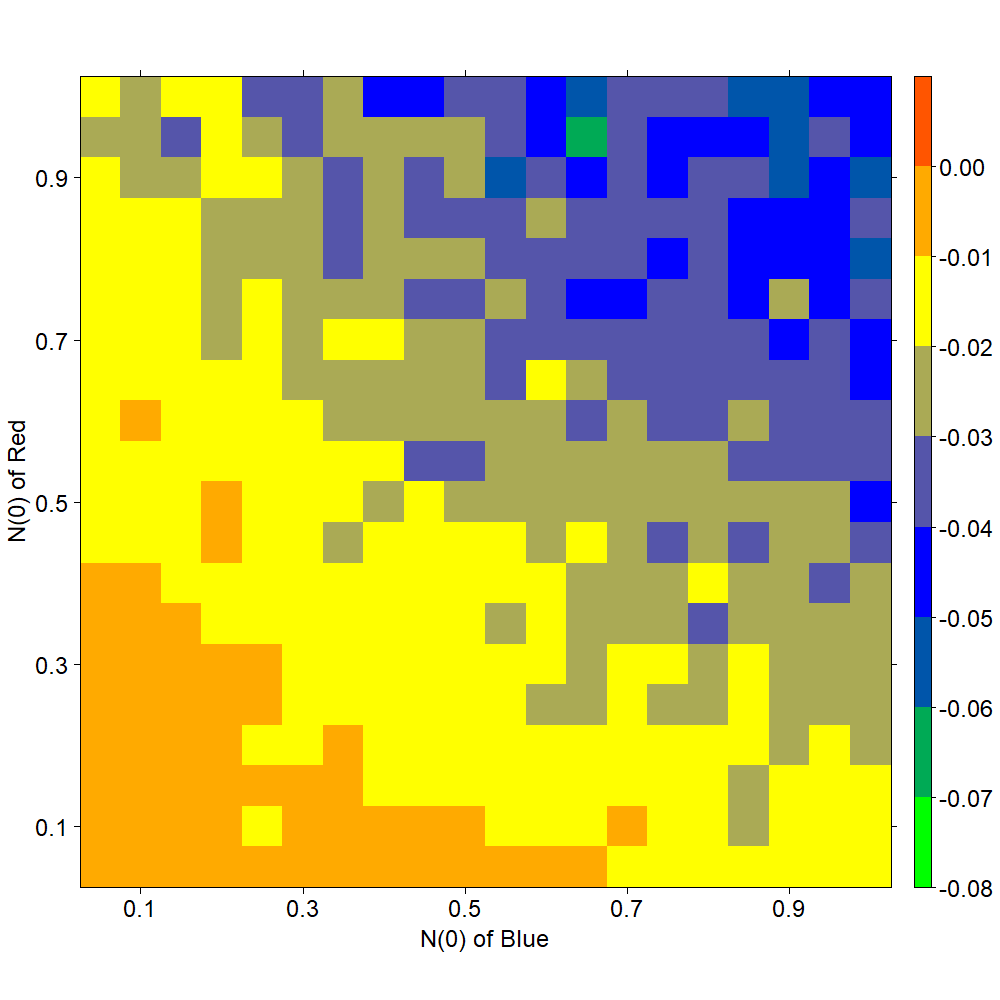}
	&
\includegraphics[scale=0.12]{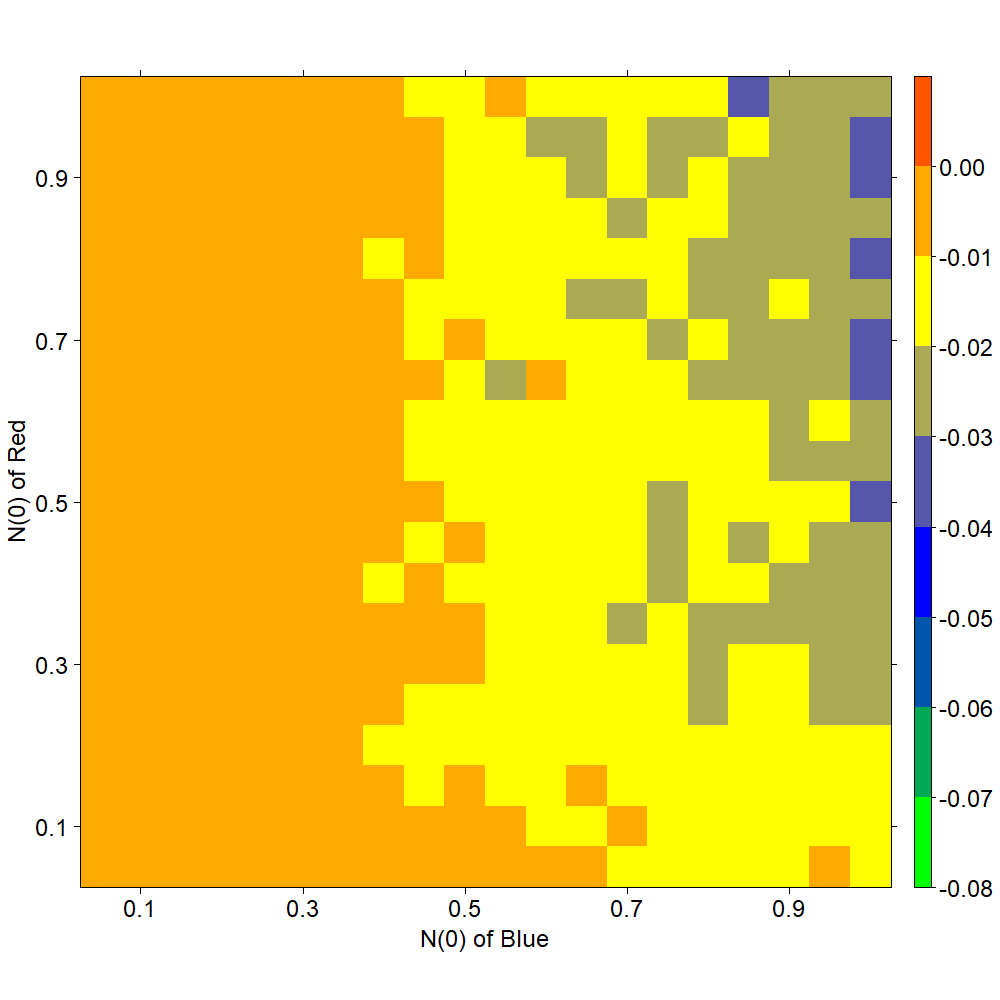}
\\
\hline
\end{tabu}
\end{center}
\caption{
	Cyber support as an option for compensating kinetic inferiority. The heatmaps show the means of $\Delta \Delta N$, $\Delta \Delta D$, and $\Delta L_b$ dependent on the force sizes $N_b(0)$ vs.\ $N_r(0)$ for one- and two-sided cyber support ('cyb' and 'gen').
\label{heatofoneside}
}
\end{figure}

\begin{figure}[tbhp!]
\tabulinesep=0.7mm
\begin{center}
\begin{tabu} to \textwidth{| C{0.5cm} | X[1] | X[1] |}
\hline
	& \multicolumn{1}{c|}{\Large $\Delta \Delta N$} & \multicolumn{1}{c|}{\Large $\Delta L_b$} \\ \hline \hline
	\multirow{1}{*}{\adjustbox{angle=90}{\hspace*{+0.0cm} \Large $\beta_b$}} &
\includegraphics[scale=0.12]{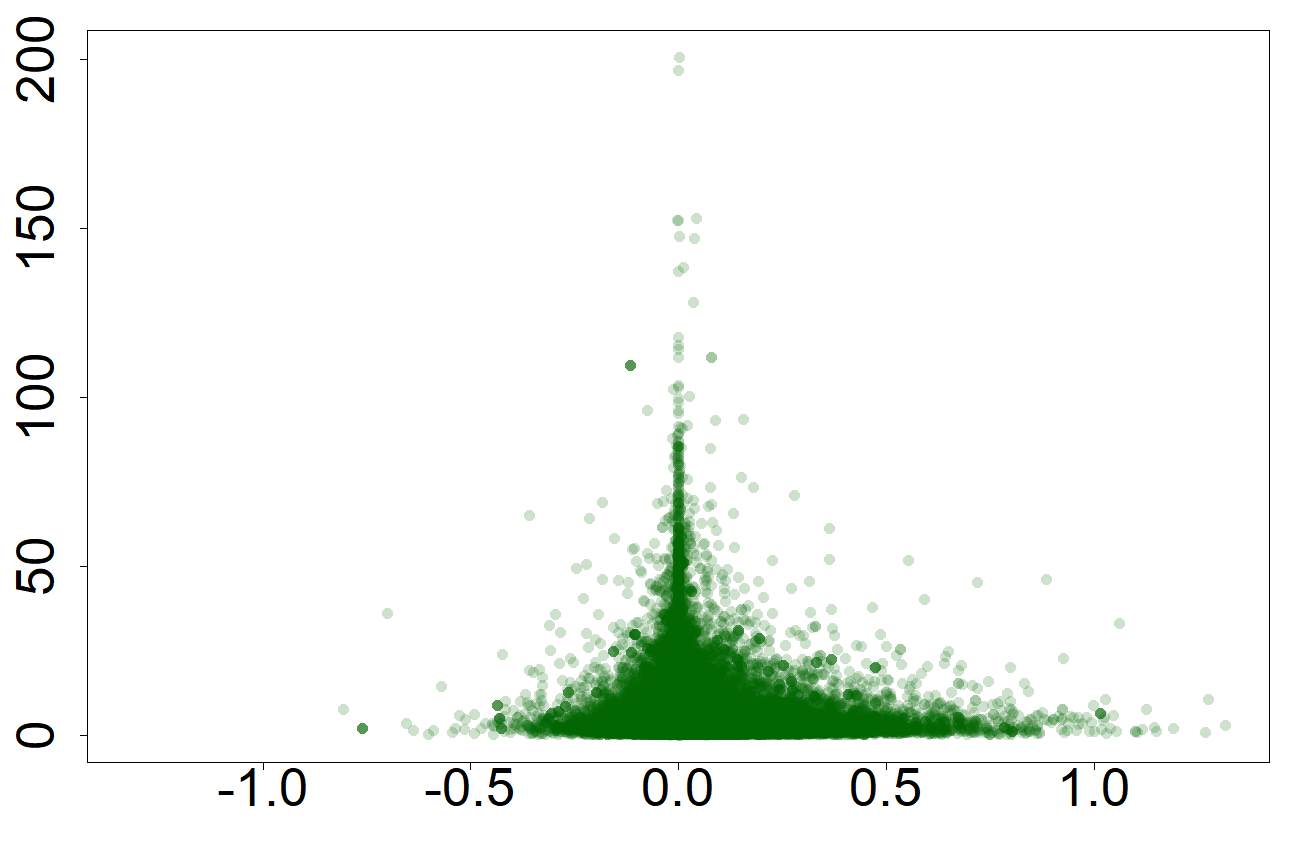}
	&
\includegraphics[scale=0.12]{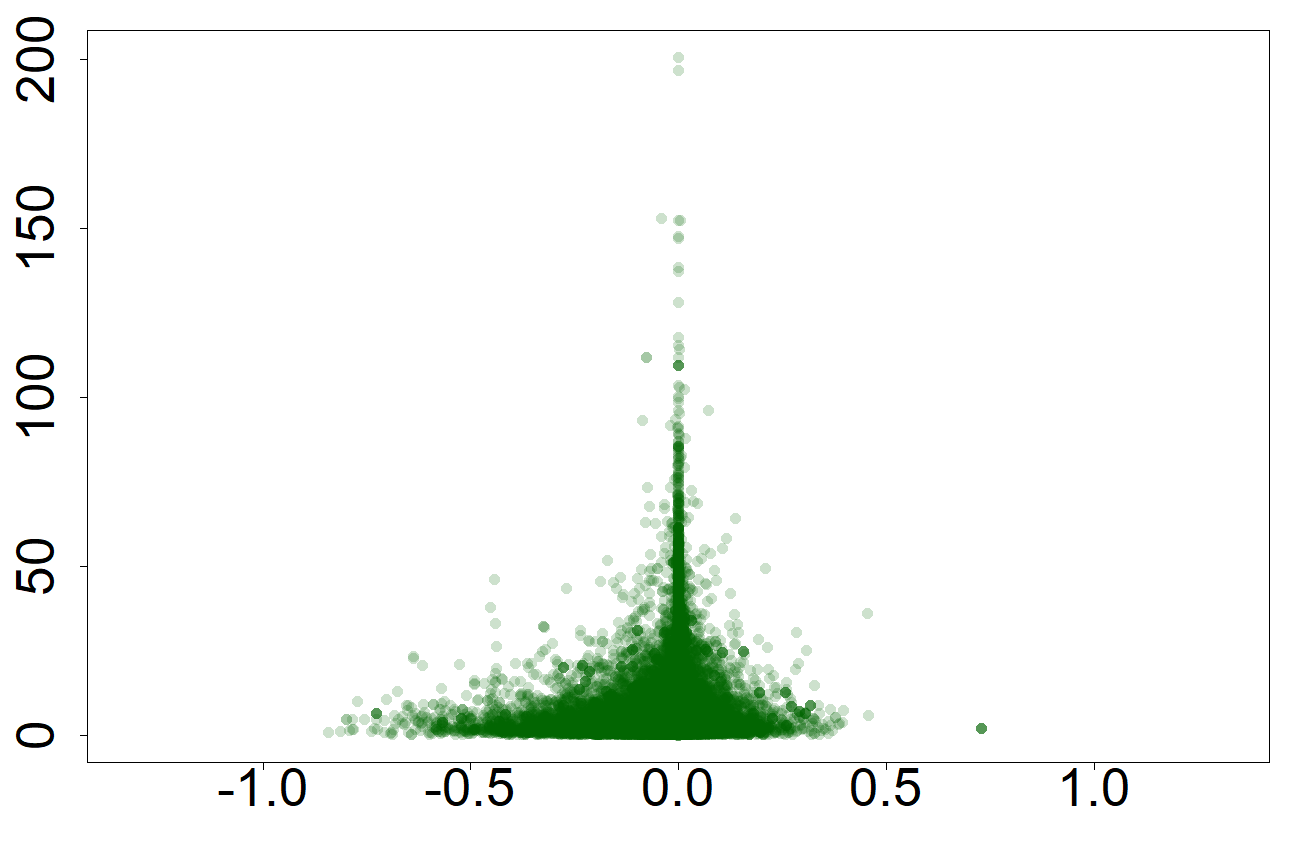}
\\ \hline
	\multirow{1}{*}{\adjustbox{angle=90}{\hspace*{+0.0cm} \Large $\Delta t_{b,\mathsf{att}}$}} &
\includegraphics[scale=0.12]{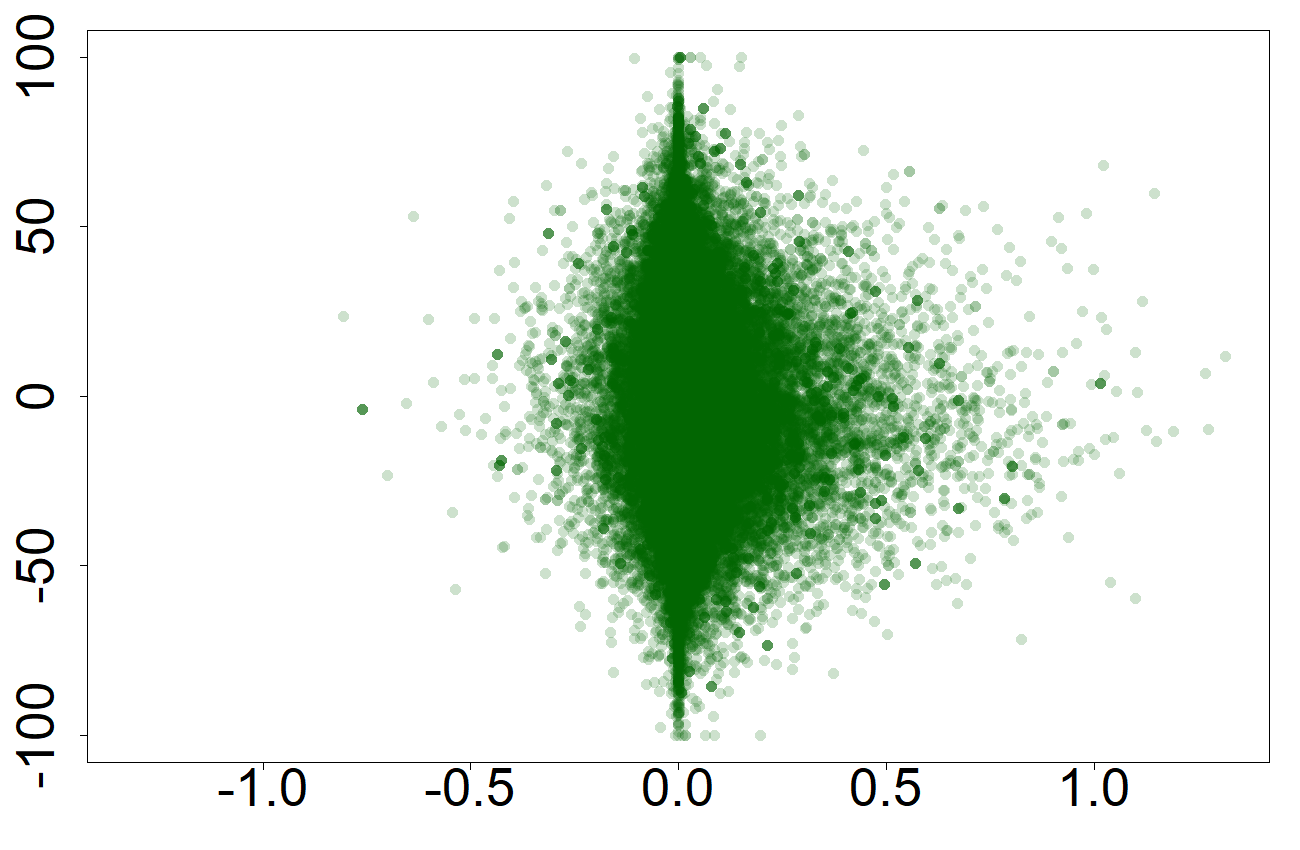}
	&
\includegraphics[scale=0.12]{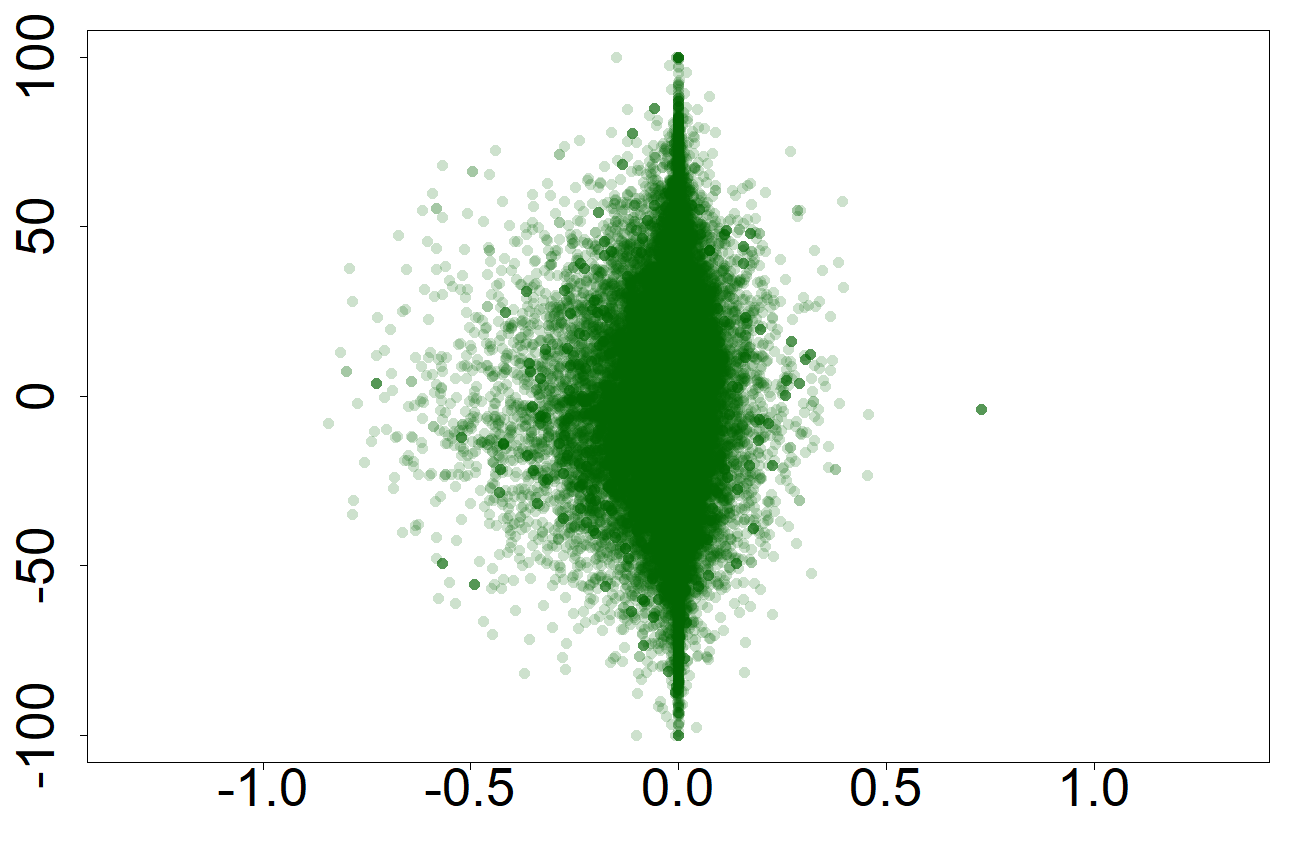}
\\
\hline
\end{tabu}
\end{center}
\caption{
Cyber support may change the outcome of kinetic combat. The scatterplots demonstrate the influence of the input parameter $\beta_b$ resp. $\Delta t_{b,\mathsf{att}}$ on the magnitude of the changes $\Delta \Delta N$ and $\Delta L_b$ in the scenario class 'gen'.
\label{scatterforgenandcyb}
}
\end{figure}

Figure~\ref{heatofoneside} shows the means of 
$\Delta\Delta N$, $\Delta\Delta D$, and $\Delta L_b$ dependent on 
the initial force sizes $N_b(0)$, $N_r(0)$.
The changes are distinctively larger for one-sided cyber 
support (scenario class 'cyb') than for two-sided support 
(scenatio class 'gen').
More important, some settings of the 
parameters $N_b(0)$, $N_r(0)$ give higher chances of a
large change than others. This observation is supported by
figure~\ref{scatterforgenandcyb}, which shows the influence of
infection rate $\beta_b$ and timing $\Delta t_{b,\mathsf{att}}$ 
on $\Delta \Delta N$ resp.\
$\Delta L_b$. Though in large regions of the parameter space the
influence is typically small or even neglectable, $\Delta \Delta N$
and $\Delta L_b$ can reach large values in others. Summing up,
we can confirm comparatively small effects of randomly configured
cyber actions, but we can expect potentially large effects on the outcome with
appropriately refined parameter values. 

\begin{table}[tb!]
\tabulinesep=0.3mm
\setlength{\tabcolsep}{1.15mm}
\begin{tabular}{| p{0.8cm} || l l l | l l l | l l l | l l l | }
\hline
\multicolumn{13}{|c|}{Definition of Subregions of the Scenario Space 'gen'} \\
\hline
	gen0 & $p,q$ & $\in$ & $[0, \infty)$ & $\delta_b,\delta_r$ & $\in$ & $[0,\infty)$ & & & & & & \\ 
	gen1 &&& $[0,7]$ &&& $[0,3.5]$ & & & & & & \\ 
	gen2 &&& $[0,5]$ &&& $[0,2.5]$ & & & & & & \\ \hline
	gen0 & $\beta_b,\beta_r$ & $\in$ & $[0,\infty)$ & $\eta_b$ & $\in$ & $[0,1]$ & $\eta_r$ & $\in$ & $[0,1]$ & & & \\ 
	gen1 &&& $[0,17.5]$ &&& $[0.3,1]$ &&& $[0,0.7]$ & & & \\ 
	gen2 &&& $[0,12.5]$ &&& $[0.5,1]$ &&& $[0,0.5]$ & & & \\ \hline
	gen0 & $\Delta t_{b,\mathsf{att}}$ & $\in$ & $(-\infty,\infty)$ & $\Delta t_{r,\mathsf{att}}$ & $\in$ & $(-\infty,\infty)$ & $\Delta t_{b,\mathsf{pat}}$ & $\in$ & $(-\infty,\infty)$ & $\Delta t_{r,\mathsf{pat}}$ & $\in$ & $(-\infty,\infty)$  \\ 
	gen1 &&& $[-80,120]$ &&& $[-120,80]$ &&& $[-90,60]$ &&& $[-60,90]$  \\ 
	gen2 &&& $[-40,80]$ &&& $[-80,40]$ &&& $[-60,30]$ &&& $[-30,60]$ \\ \hline
	gen0 & $R_b(0)$ & $\in$ & $[0,N_b(0)]$ & $R_r(0)$ & $\in$ & $[0,N_r(0)]$ & $\gamma_b,\gamma_r$ & $\in$ & $[0,1]$ & $\gamma_b',\gamma_r'$ & $\in$ & $[0,1]$ \\ 
	gen1 &&& $[0,1]$ &&& $[0,0.35]$ &&& $[0,0.7]$ &&& $[0,0.35]$ \\ 
	gen2 &&& $[0.25,1]$ &&& $[0,0.25]$ &&& $[0,0.5]$ &&& $[0,0.25]$  \\ \hline
\end{tabular}
\caption{
\label{defsubregions}
Restrictions of the scenario space of the start situation of
the given uncertainty sequence to specific subregions.
Whereas the scenario set 'gen0' covers the complete scenario space,
the scenario sets 'gen1', 'gen2' are restrictions of 'gen0' subject
to the indicated constraints.
For the parameters $N(0)$, $\Delta t_{\mathsf{mal}}$, and $\alpha$, no additional constraints are raised.
}
\end{table}

\begin{table}[tb!]
\tabulinesep=0.3mm
\setlength{\tabcolsep}{1.7mm}
\tabulinesep=1.0mm
	\begin{tabu} to \textwidth {| L{1.8cm}  | L{1.3cm} || X[1] | X[1] | X[1] | X[1] | X[1] | X[1] | X[1] | X[1] | }
\hline
Win/Loss & Scenario&  \multicolumn{2}{c|}{Strong Wins}  & \multicolumn{2}{c|}{Weak Wins}  & \multicolumn{2}{c|}{Weak Losses}  & \multicolumn{2}{c|}{Strong Losses}    \\ \cline{3-10}
Class &Set&  gen & cyb & gen & cyb & gen & cyb & gen & cyb \\ \hline \hline
Strong  			& gen0  & 0.980 & 1.000 & 0.009 & 0.000 & 0.008 & 0.000 & 0.003 & 0.000 \\ 
kinetic 		        & gen1 & 0.987 & 1.000 & 0.008 & 0.000 & 0.005 & 0.000 & 0.000 & 0.000 \\ 
superiority 						        & gen2 & 1.000 & 1.000 & 0.000 & 0.000 & 0.000 & 0.000 & 0.000 & 0.000 \\ \hline
Weak 			     	& gen0 &  0.085 & 0.096 & 0.891 & 0.904 & 0.000 & 0.000 & 0.023 & 0.000 \\ 
kinetic 		     	& gen1 & 0.139 & 0.141 & 0.846 & 0.859 & 0.000 & 0.000 & 0.015 & 0.000 \\ 
superiority   		     				        & gen2 & 0.172 & 0.172 & 0.828 & 0.828 & 0.000 & 0.000 & 0.000 & 0.000 \\ \hline
Weak 				& gen0 &  0.071 & 0.077 & 0.000 & 0.000 & 0.891 & 0.923 & 0.037 & 0.000 \\ 
kinetic 			& gen1 & 0.203 & 0.206 & 0.000 & 0.000 & 0.748 & 0.794 & 0.048 & 0.000 \\ 
inferiority   						        & gen2 & 0.750 & 0.750 & 0.000 & 0.000 & 0.250 & 0.250 & 0.000 & 0.000 \\ \hline
Strong				& gen0 &  0.015 & 0.016 & 0.012 & 0.016 & 0.023 & 0.026 & 0.950 & 0.942 \\ 
kinetic 			& gen1 & 0.041 & 0.042 & 0.025 & 0.029 & 0.052 & 0.052 & 0.882 & 0.877 \\ 
inferiority 						        & gen2 & 0.141 & 0.141 & 0.047 & 0.055 & 0.055 & 0.055 & 0.758 & 0.750 \\ \hline
\end{tabu}
\caption{
	\label{subregionsdata}
	The influence of cyber actions summarized as
	fractions of wins and losses for the scenario sets 'gen0', 'gen1', 'gen2' defined in table~\ref{defsubregions}.
}
\end{table}

The influence of appropriately chosen cyber action parameters
should be examined more closely. For this purpose, subregions 'gen0',
'gen1' of the scenario space of the start situation of the 
uncertainty sequence are defined in table~\ref{defsubregions}. 
The influence of cyber actions is clearly visible in
table~\ref{subregionsdata} with the number of outcomes, which belong
to a different win/loss class (table columns) 
than the corresponding pure kinetic situation (table rows).
The fraction of strong wins increases from 'gen0' to 'gen1' 
in all four kinetic win/loss classes significantly. 
The increase in case of weak kinetic
inferiority could even be considered as dramatic: From 0.071 
(for 'gen') resp. 0.077 (for 'cyb') to 0.750 
(for 'gen' and 'cyb'). Remarkable as well is the increase of 
strong wins for strong kinetic inferiority: From 0.015 resp. 0.016
to 0.141. 
The small differences between 'gen' and 'cyb' may be 
caused by the assumption of cyber superiority for Blue from 
the beginning.
Based on these observations, we conclude that the small number 
of changing win/loss classifications due to cyber actions in
table~\ref{startastable2} seems indeed be caused by the random
parameterization of the cyber actions.

\subsection{Influence of Uncertainties}\label{secinfuncertain}\label{ssun}

The inclusion of uncertainties resulting from missing
knowledge is an important contribution towards practicability. It
requires the consideration of a whole set of scenarios instead of
only a single
one. The statistical properties of the corresponding set of outcomes
provides information not accessible otherwise. Its value results from 
the incompatibility between statistics and dynamics. This means that
the typical outcome does not necessarily coincide with the so-called
point estimate, i.e.\ the outcome of the typical scenario, as
demonstrated in table~\ref{compastable}. 
%

The point estimate of a Monte-Carlo design space $(X,\Pr)$ is 
defined as follows.
According to section~\ref{secsimhor}, a scenario $x\in X$ is the Cartesian
product $x= \bigtimes_k x_k$ of the model parameters (see
table~\ref{definition_param}) and of the initial values of equation
system (\ref{lanchmalware}). If we neglect the few 
dependencies between the parameters $x_k$ in the example, 
the probability distribution $\Pr$ is decomposable as well according
to $\Pr= \bigtimes_k \Pr_k$. This facilitates the definition of
the 'typical' scenario as $x_{\Pr} := \bigtimes_k \mean(x_{k})$.
Then, the {\em point estimate} of $(X,\Pr)$ is given as $\simul(x_{\Pr})$.

The differences between risk assessments with inclusion of uncertainties
(indicated by term 'sampling', being the mean of all sampled outcomes)
and corresponding 'point estimates'
in table~\ref{compastable} can typically be neglected for small 
uncertainties. In the case of large uncertainties, however, the
deviation may be significant. This is demonstrated by the losses
$L_b$ in the start situation of the example. 
Beyond that, modified win/loss classifications due to the inclusion
of uncertainties cannot be excluded. In the table, this occurs 
in the steps 'cyber attack' and 'enemy attack' due to different
signs of $\mean(\Delta N)$. 

\section{Application of the framework} \label{secapplic}

After the presentation of the proposed framework, some remarks about its
application should be made.
This especially concerns, how information about the situation,
which should be analyzed, is extracted and handed over to the framework. 
With regard to ensuring an adequate representation of the situation by the model used in the framework, we refer to the related discussions in the sections~\ref{modelingstructure} and \ref{secexamplemodel}.
In the following, we will additionally address the quantitative
determination of the values of the model parameters.
We will also discuss, how the results computed with help of the framework 
can be utilized for achieving better decisions.
These considerations demonstrate that the framework is not only scientifically correct but also practically useful. 

\subsection{Providing the Inputs of the Framework}

In order to apply the framework, the parameter values characterizing 
the scenario(s) to be processed with the framework model have to be
determined \cite{schweidtmann2020obey}.
This translates the maybe quite abstract 
properties of the real-world force elements into numeric input of the 
model. For the purpose of our discussion, we will again focus on
the example in section~\ref{secfour}.

The values can be determined by measurement executed either
in reality or in corresponding computer-based simulation experiments
\cite{hou2013modeling}. The latter is especially of interest, if the 
values have to be known {\em before} the situation unfolds in reality.
Applying these usually quite elaborated simulation models also to risk
assessment purposes cannot be recommended, though; their large runtime and lack
of important theoretical properties (see section~\ref{secmodelextensions} 
below) exclude such usage. 
Measurement errors can be 
taken into account as uncertainties in accordance
with the design of the framework.

The values of the parameters $\delta_{}$, $p$, $q$ of the Lanchester component --- resp.\ their statistics \cite{lucas2004effect,dinges2001exploring} due to their dependence on the specific circumstances of an engagement \cite{ancker1987validity} --- can be determined by fitting a Lanchester model to the dynamic behavior of a discrete event simulation model \cite{taylor1980lanchester}. The parameter $\eta_{}$ is accessible analogously based on the effects of a malware infection on force element capabilities. The framework integrates value statistics seamlessly as value uncertainties.
Concerning the parameters $\beta$ and $\gamma$ (resp.\ $\tilde \gamma$) of the SIR component, we follow \cite{hall2006thwarting}. According to the differential equation system (\ref{lanchmalware}), the infection rate $\beta$ can be determined by measuring the number $I_{\Delta T}$ of new malware infections in a (small) time period $\Delta T$ based on $I_{\Delta T}= \beta S I N^{-1} \Delta T $ \cite{hosseini2016agent}. An observation of the time period $\tau$, in which a force element is infected with malware, immediately gives the recovery rate $\gamma = \tau^{-1}$ \cite{yao2018epidemic,tiensin2007transmission,hosseini2016agent}. Alternatively, parameters like $\beta$ can be determined based on malware properties. According to \cite{zou2006performance}, it holds $\beta = \xi/O $, where $\xi$ is the scanning rate of the malware and $O$ the size of the scanned subset of potential infection targets.
Finally, the values of the parameters $\Delta t_{\mathsf{att}}$, $\Delta t_{\mathsf{mal}}$, $\Delta t_{\mathsf{pat}}$, and $\alpha$ result from decisions of the military leaders.
As such, they belong to the what-if part of the application of the framework. They cannot be predicted in a deterministic way. 




\subsection{Applying the Results of the Framework}

The framework computes the value of the risk $R$ associated with a
conflict situation $S$ based on a model $M\in\mathcal{M}$ of $S$.
Since we cannot completely rule out unjustified assumptions about
the situation at hand 
in principle, the determination of the value of $R$ can be  sometimes
worthless in spite of the correctness of the computation itself.
In the end, it depends on the judgement of the user whether the
information about $R$ is considered as valuable. 
Should this be the case, he also determines how it is then exploited.
He may modify the planned course of action and/or decide about actions
in order to change the situation $S$ towards a lower risk value, but 
he may also choose to still disregard the risk estimate completely.
The framework {\em supports} human decision making, but does not replace it.
It is the human decider who selects the scenarios he is especially interested in. He does so by setting the knowledge encoded into the underlying model and into the parameter values in an overall situational context. 
Analogously, he disposes about the usage of the computed values of risk assigned 
to these scenarios. 

\subsection{Utility of the Framework}

The utility of the framework relates to 
the quality of the decisions of the military leaders 
applying it. In this respect, 'quality' means the tendency 
to avoid situations with a high number of own losses.
(Un)Fortunately, such an observation-based utility adjudication is
hampered by an insufficient amount of real-world data.

To get around this difficulty, we discuss the utility of the 
framework resting upon theoretical arguments instead of observations.
Without doubt, long-term predictions with inclusion of uncertainties 
are a task hardly manageable by humans in general.
The provision of measures like the risk $R$ with help of the 
framework can thus be considered as highly valuable knowledge, 
as long as an extensive validation effort supports the correctness 
of the value of $R$. For rationally acting users, the knowledge
about $R$ can thus be viewed as a principal advantage. It simplifies
pending decisions by a reduction of existing unknowns. 
The framework may therefore be judged as practically useful.

\section{Discussion and Outlook}\label{secdiscussion}

In the following, we will point out limitations and restrictions 
of the proposed framework as well as opportunities of future research. 

\subsection{Extension of the Class $\mathcal{M}$ of Models}\label{outoptsec}

The faithfulness of situation representations may be increased
by extending the underlying class $\mathcal{M}$ of models due
to the additional modeling options. 
System dynamics components of kinetic combat and of malware propagation 
more general than utilized in the model class $\mathcal{M}$ are 
proposed in e.g.\ \cite{bull2010system,colizza2007epidemic}.
Especially the Lanchester component may profit from such a generalization.
The basic Lanchester model assumes, for example, that incapacitated 
opponents must be known at once, so that fire is distributed only 
against active opponents \cite{hughes1995two}. For being more realistic, 
the compartment $D$ representing the already destroyed elements may attract
a certain fraction of the fire of the opposing force as well.

The investigation of further examples will clarify, whether a risk 
analysis of more detailed --- especially heterogeneous --- models 
can proceed according to the template given by section~\ref{secfour}. 
Tractability may be a possible concern, since more details will 
usually be accompanied with a higher number of parameters.
It seems hardly feasible to determine the accurate values of a 
large number of model parameters, though. 
Significant uncertainties will be the result, which in turn require
the execution of a large number of Monte Carlo simulation runs for 
assuring a sufficient statistical significance of the computed risk value.
It may thus be difficult to make straightforward use of the 
details of a more elaborated model without corresponding knowledge
about the values of the model parameters.

The restriction of $\mathcal{M}$ to deterministic models should be 
abandoned, as soon as small force numbers play a role. Such a situation 
occurs in the start phase of an infection with self-replicating malware,
in which the number of infected elements is small, and for overall small 
forces \cite{del2015mathematical}.  
Then, stochastic fluctuations eventually influencing the system dynamics
will not cancel out anymore. 
The extension of $\mathcal{M}$ towards stochastic models may thus
contribute to the validity of results.
A stochastic model dynamics can be perfectly integrated in the
Monte Carlo approach of the framework. 

\subsection{Improving the Utility of the Framework}


\begin{figure}[bthp]
\tabulinesep=0.7mm
\begin{center}
\includegraphics[scale=0.12]{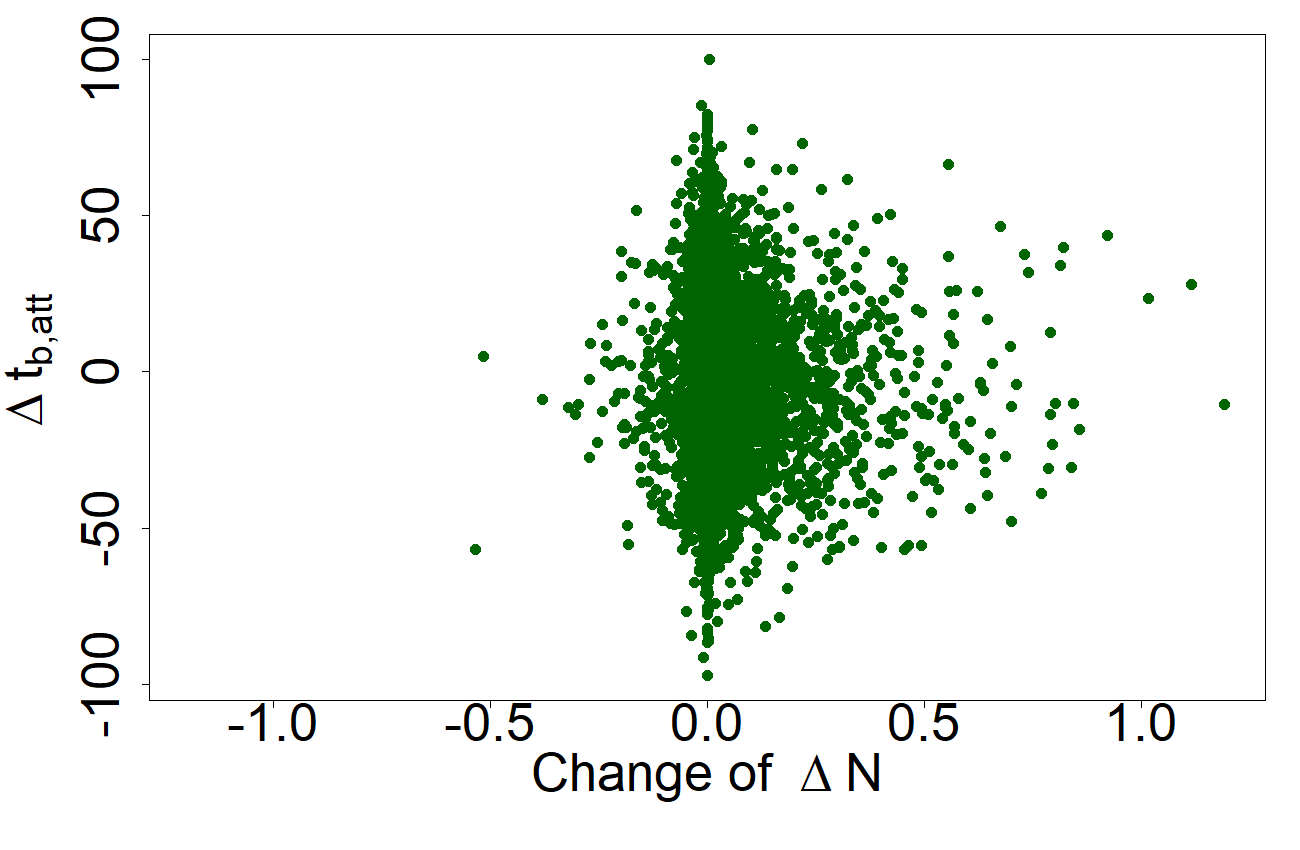}
\end{center}
\caption{
The distribution of $\Delta \Delta N$ dependent on $\Delta t_{b,\mathsf{att}}$ indicates, which values of $\Delta t_{b,\mathsf{att}}$ may cause large cyber effects.
\label{attacktiming}
}
\end{figure}

The proposed framework can be applied for other purposes 
than decision support as well.
Based on the calculated risk, various attack methods, infection 
mechanisms, propagation pathways, and countermeasures can be 
analyzed and compared with each other. This permits an 
identification of preferable system design principles.
Experimentation with the consequences of different knowledge
(or assumptions) about a given situation is enabled. 

Beyond that, the framework can help for configuring the cyber actions.
Let us take a look at the timing $\Delta t_{b,\mathsf{att}}$ of the 
blue cyber attack in the analyzed example. In the step 'Cyber Attack' of 
the uncertainty sequence, the kinetic aspects are already fixed. 
The parameter settings $R_r(0) \approx 0.15$, 
$\beta_r\approx 1.75$ may also have already become known.
For scenarios satisfying the given
technical malware parameters, the changes of $\Delta N$ dependent
on the timing $\Delta t_{b,\mathsf{att}}$ of the blue malware attack 
are distributed as shown in figure~\ref{attacktiming}. 
According to the plot, we may eventually expect a large advantageous 
influence on the outcome of the combat for
$\Delta t_{b,\mathsf{att}}\approx -15.0$. There, the changes of $\Delta N$
reach their maximum both absolutely and in the mean though several 
singular data points representing large positive changes of $\Delta N$
can also be observed for $\Delta t_{b,\mathsf{att}} > 0$.
Hence, the parameter setting $\Delta t_{b,\mathsf{att}}\approx -15.0$
seems to be a good working hypothesis.
As can be seen in section~\ref{ssra}, 
the situation indeed has been improved with this choice.
We are not going to discuss these additional benefits further, though.
They may be topic of future research.

\section{Acknowledgements}

The authors are grateful for many helpful remarks 
provided by Filipa Campos-Viola, Elisa Canzani, Stefan Hahndel, 
Thomas Rieth, Harald Schaub, Christine Schwarz-Hemmert and
Iris Berthold.
Additionally, we want to thank Simon Hurst for his support 
for validating the implementation of the computational model. 

\bibliographystyle{plain}
\bibliography{overall_work_references}

\end{document}